\documentclass[jair,twoside,11pt,theapa]{article}
\usepackage{jair, theapa, rawfonts}
\jairheading{1}{20xx}{??}{??}{??}
\ShortHeadings{A new tractable Description Logic under categorical semantics}
{{Le Duc} \& Brieulle}
\usepackage{times}  
\usepackage{helvet}  
\usepackage{courier}
\usepackage{url}
 
\usepackage{algorithmic}
\usepackage{newfloat}
\usepackage{listings}
\usepackage{float}

\usepackage{soul}
\usepackage{url}
\usepackage[utf8]{inputenc}
\usepackage[small]{caption}
\usepackage{graphicx}
\usepackage{amsmath}
\usepackage{amsthm}
\usepackage{booktabs}
\usepackage{algorithm}
\usepackage{algorithmic}
\urlstyle{same}

\usepackage{stmaryrd}
\usepackage{stackrel}
\usepackage{mathrsfs}
\usepackage[inline]{enumitem}
\usepackage{commath}
\usepackage{graphicx}
\usepackage{todonotes}
\usepackage{mathtools}
\usepackage{framed}
\usepackage{amsmath,commath}
\usepackage{amssymb}
\usepackage{overlay}

\usepackage{graphicx}

\usepackage{xcolor}

\usepackage{pgf}
\usepackage{pgfcore}
\usepackage{pgfbaseimage}
\usepackage{pgfbaselayers}
\usepackage{pgfbasepatterns}
\usepackage{pgfbaseplot}
\usepackage{pgfbaseshapes}
\usepackage{pgfbasesnakes}
\usepackage{tikz}
\usetikzlibrary{math}
\usetikzlibrary{arrows}
\usetikzlibrary{shapes}
\usetikzlibrary{fit}
\usetikzlibrary{trees}
\usepackage{tikzpeople}
\usetikzlibrary{intersections}
\usepackage{caption}
\usepackage{todonotes}
\usepackage{comment}
\usepackage{makecell}




\newcommand{\tuple}[1]{\langle #1 \rangle}
\newcommand{\psh}{\mathsf{P}(\mathcal{SH})}

\newcommand{\rsh}{\mathsf{R}(\mathcal{SH})}

\newcommand{\pela}{\mathsf{P}(\mathcal{EL}^\rightarrow)}

\newcommand{\pelc}{\mathsf{P}(\mathcal{EL}^\circ_\bot)}

\newcommand{\rela}{\mathbb{R}(\mathcal{EL}^\rightarrow)}

\newcommand{\Homrcat}{\mathsf{Hom}(\mathscr{C}_r\langle \mathcal{O} \rangle)}

\newcommand{\occat}{\mathscr{C}_c\langle \mathcal{O} \rangle}

\newtheorem{example}{Example}
\newtheorem{theorem}{Theorem}

\newtheorem{definition}{Definition}
\newtheorem{lemma}{Lemma}

\newtheorem{corollary}{Corollary}
\newtheorem{claim}{Claim}

\newcommand{\Tb}[1]{\mathcal{#1}}

\newcommand{\role}[1]{\mathfrak{R}_{#1}}
\newcommand{\tO}{\mathcal{O}}
\newcommand{\subtransrole}{\overlay{\sqsubseteq}{\ast}}

\usepackage{enumitem}
\usepackage{hyperref}

\usepackage{apacite}
\usepackage[round]{natbib} 
\usepackage{apalike}

\makeatletter
\newcommand{\mylabel}[2]{#2\def\@currentlabel{#2}\label{#1}}

\makeatother



\pdfinfo{
/TemplateVersion (KR.2022.0, KR.2023.0, KR.2024.0, KR.2025.0)
}

\begin{document}

\title{A New Tractable Description Logic  under Categorical Semantics}

\author{\name Chan {Le Duc} \email chan.leduc@univ-paris13.fr \\
       \name Ludovic Brieulle \email ludovic.brieulle@univ-paris13.fr \\
       \addr Universit\'e Sorbonne Paris Nord, Sorbonne Université, INSERM, Limics\\
       74 rue Marcel Cachin, Bobigny, 93017 France}
       
\maketitle

\begin{abstract}
  Biomedical ontologies contain numerous concept or role names involving negative knowledge such as  \textsf{lacks\_part}, \textsf{absence\_of}. Such a representation with labels rather than logical constructors would not allow a reasoner to interpret \textsf{lacks\_part}  as a kind of negation of \textsf{has\_part}.  It is known that adding negation to the tractable  Description Logic (DL) $\mathcal{EL}$ allowing for conjunction, existential restriction and concept inclusion  makes it intractable since the obtained logic includes \emph{implicitly} disjunction and universal restriction which interact with other constructors. In this paper, we propose a new extension of   $\mathcal{EL}$ with a weakened negation allowing to represent  negative knowledge
while retaining tractability. To this end, we introduce categorical semantics of all logical constructors of the DL $\mathcal{SH}$ including 
$\mathcal{EL}$ with disjunction, negation,  universal restriction, role inclusion and transitive roles.
The categorical  semantics of a logical constructor is usually described as a set of categorical properties referring to several objects without using set membership. To restore tractability, we have to weaken  semantics of disjunction and universal restriction by  identifying \emph{independent} categorical properties  that are responsible for intractability, and dropping them from the set of categorical properties. We show that the logic resulting from  weakening semantics  is  more expressive than  $\mathcal{EL}$ with the bottom concept, transitive roles and role inclusion.
We aim through this new member of the $\mathcal{EL}$ family to allow negative knowledge to be involved in biomedical ontologies  while maintaining tractability of reasoning.
\end{abstract}

\section{Introduction}\label{sec:intro}

Description Logics (DLs) are used to represent knowledge from  different application domains. For instance, a large number of biomedical ontologies are represented in the DLs such as  $\mathcal{EL}$, 
 $\mathcal{EL}^{++}$  \cite{baader2005}  or its sublogics (as  OWL 2 Profile). The main reason of the use of these DLs  
is  reasoning  tractability that obliges us to drop negation.  However, as pointed out by  \cite{CEUSTERS2007} and \cite{RectorR06}, a substantial fraction of the observations made by clinicians in patient records is represented by means of negation via role or concept names (labels) such as \emph{lacks\_part}, \emph{absence\_of},    \emph{contraindication}  \cite{CEUSTERS2007,RectorR06}. Such a representation with labels would not allow a reasoner to interpret \emph{lacks\_part} or  \emph{absence\_of} as negation of   \emph{has\_part}. \cite{baader2005} showed that 
adding negation to the $\mathcal{EL}$ family immediately makes the resulting logic lose 
its tractability \cite{baader2005}. A logic is called tractable if the complexity of  main  reasoning problems  in that logic 
is polynomial time.   \cite{baader2005} also proved that augmenting  the expressiveness
of $\mathcal{EL}^{++}$ while keeping tractability in the worst case cannot be achieved if we remain to use the usual set semantics of DLs.  This is why we attempt in this paper to look into category theory for an alternative. 
 
A category consists of two collections of \emph{objects} and \emph{arrows} ($\rightarrow$). 
In most cases, arrows contribute to define the meaning  of objects and  relationships between them, e.g., $C\rightarrow D$ could mean that an object $C$ is ``less than" an object  $D$; or there is an arrow $X\rightarrow C$ ($X$ is ``less than" $C$) for all object $X$. The use of category theory in logic goes back to the work by \cite{law64} who
presented an appropriate axiomatization of the category of sets by replacing set 
membership with the composition of functions. However, it was not indicated whether the categorical axioms are ``semantically"  equivalent to the axioms based on set membership. As pointed out in   \cite{gol06}, this may lead to a very different semantics for negation.    A usual way to define the set-theoretical semantics of  propositional logic consists of using  a Boolean algebra   $\mathbf{BA}=\tuple{\mathcal{P}(D), \subseteq}$ 
where $\mathcal{P}(D)$ denotes the set of subsets of a non-empty set $D$ with  $\mathbf{1}=D$ and  $\mathbf{0}=\varnothing$.     A  \emph{valuation} (or truth function)  $V$ associates a truth-value in $\mathcal{P}(D)$ to each atomic sentence. Then $V$ is extended to complex sentences with $\wedge,\vee,\neg$ using set operations such as intersection $\cap$, union $\cup$, complement $\sim$ over $\mathcal{P}(D)$ with set membership $\in$. 
A notable extension of Boolean algebra is \emph{Heyting algebra} $\mathbf{HA}=\langle H, \subseteq \rangle$ where $H $  is a \emph{lattice} under a partial order $\subseteq$.  The truth function $V$ for $x \wedge y$ and $x \vee y$ is  defined respectively as a \emph{greatest lower bound} and a \emph{least upper bound} of $x,y$ under $\subseteq$ while  negation   is defined as   \emph{pseudo-complement}, i.e., $V(\neg x)$ is defined as a maximum element in $H$ such that $V(\neg x)$ is disjoint with $V(x)$. 
\cite{leduc2021}  and
\cite{brieulle2022}  replaced the lattice from a $\mathbf{HA}$ with categories  and  
rewrote the semantics of the DL $\mathcal{ALC}$ in such a way that  concepts/roles and the order $\subseteq$  are replaced 
with category objects and arrows   respectively.
Due to the absence of set membership  from categories,  arrows ``govern" the categorical semantics of DL constructors with  various properties involving different objects.  This makes the categorical semantics more modular than the usual set semantics because we usually need several  \emph{independent} categorical properties to describe the semantics of a single logical constructor. For instance, we have to use an independent categorical property to characterize  distributivity  of conjunction over disjunction since it does not need to hold in general. 
The present paper exploits the independence of some interesting semantic constraints  appearing in the categorical semantics to introduce a novel tractable DL.

In the literature, there have been several works related to tractable DLs such as $\mathcal{EL}$ \cite{Baader99} allowing for  conjunction $\sqcap$, existential restriction $\exists$ and concept inclusion $\sqsubseteq$, denoted $\mathcal{EL}=\tuple{\sqcap,\exists , \sqsubseteq}$. However,  if we replace existential restriction $\exists$ in $\mathcal{EL}$ with universal restriction $\forall$, then the resulting DL  $\mathcal{FL}_0=\tuple{\sqcap,\forall, \sqsubseteq}$ becomes intractable \cite{neb90,baader2005}. On the one hand, adding disjunction $\sqcup$ to $\mathcal{EL}$  also leads to intractability due to intrinsic   non-determinism of this constructor, which  is confirmed by  \cite{baader2005} who showed that subsumption  in $\mathcal{ELU}=\tuple{\sqcap,\exists, \sqsubseteq,\sqcup}$ is {\sc{EXPTIME}}-complete. On the other hand, 
adding atomic negation $.^{(\neg)}$ (i.e., negation occurring only in front of concept names) to $\mathcal{EL}$ implies an implicit presence of $\forall$ and $\sqcup$ in the resulting DL $\mathcal{EL}^{(\neg)}=\tuple{\sqcap,\exists, \sqsubseteq,.^{(\neg)}}$  \cite{baader2005}. Hence, it is necessary to investigate  behaviors of \emph{hidden} constructors $\forall$, $\sqcup$ and their interactions with other constructors  in reasoning algorithms  rather than the constructor $\neg$ itself in order to design a new tractable DL allowing for negation.

By analyzing the behavior of algorithms for reasoning in a DL allowing for  $\forall, \sqcup$ with the presence of $\sqcap, \exists, \sqsubseteq$, we conjecture that the interactions of the constructors $\sqcup$,  $\forall$ with other ones are   responsible for intractability. 
First, the interaction between existential and universal restrictions is represented by the following property that corresponds to the behavior of the $\forall$-rule in standard tableaux\footnote{If $\exists R.C$ and $\forall R.D$ belong to the label of a node $x$ in a standard tableau, then $x$ has an  $R$-neighbor $y$ such that $C$ belongs to the label of $y$. Then, the $\forall$-rule adds $D$ to the label of $y$. Hence, $\exists R.C\sqcap  \forall R.D$ behaves like $\exists R.(C \sqcap D)$.} \cite[p.84]{baader2017}.  
\begin{align}
\exists R.C\sqcap\forall R.D \sqsubseteq\exists R.(C\sqcap D)\label{intro-existsforall}
\end{align}
This property is a \emph{direct consequence} of the usual set semantics of $\exists$ and $\forall$. Applications of (\ref{intro-existsforall})  when reasoning  may  produce arbitrary \emph{$n$-ary existential restrictions} of the form $\exists R.(C_{i_1}\sqcap \cdots \sqcap C_{i_n})$ for $\{C_{i_1},\cdots,C_{i_n}\}\subseteq \{C_1,\cdots,C_m\}$ where $C_i$ occurs in some concept of the form  $\exists R.C_i$ or $\forall R.C_i$. This situation  happens if one uses DL $\mathcal{ALC}$ axioms to  encode binary numbers    \cite[Section 5.3]{MoSH09a}. To be able to avoid this exponential blow-up, we would weaken universal restriction by replacing Property~(\ref{intro-existsforall}) with the following one:
\begin{align}
X\sqsubseteq  \exists R.C\sqcap \forall R.D\text{ and } C\sqcap D \sqsubseteq \bot \text{ imply } X \sqsubseteq \bot\label{intro-existsforall2}
\end{align}
Thanks to Property~(\ref{intro-existsforall2}), it it is sufficient to consider \emph{binary fillers}   $C\sqcap D_1$ and $
C\sqcap D_2$  rather than  an \emph{$n$-filler}   $C\sqcap D_1\sqcap D_2$  ($n=3$)  when checking satisfiability of a concept $Y$ with 
an axiom such as $Y\sqsubseteq\exists R.C\sqcap \forall R.D_1 \sqcap \forall R.D_2$. The point is that applications of Property~(\ref{intro-existsforall2})   lead to checking  \emph{binary disjointness} such as   $C\sqcap D_1\sqsubseteq \bot, C\sqcap D_2\sqsubseteq \bot$  rather than creating a \emph{fresh} concept   $\exists R.(C\sqcap D_1 \sqcap D_2)$ by Property~(\ref{intro-existsforall})   and checking  \emph{n-disjointness} such as   $C\sqcap D_1 \sqcap D_2\sqsubseteq \bot$. As a result, Property~(\ref{intro-existsforall2}) 
allows us  to obtain unsatisfiability of $Y$  from checking \emph{only}  binary disjointness 
such as   $C\sqcap D_1\sqsubseteq \bot, C\sqcap D_2\sqsubseteq \bot$  
occurring frequently in biomedical ontologies rather than $n$-disjointness  occurring rarely, even not at all. In fact, there does not exist any  $n$-disjointness as axioms in most used biomedical ontologies such as SNOMED-CT, OBO. Note that an obvious unsatisfiability such as $A\sqcap \neg A$ is a particular case of  binary disjointness.
  
Second, distributivity of conjunction over disjunction
  is also a direct consequence of the usual set semantics, \emph{i.e.,} 
\begin{align}
A\sqcap (B\sqcup C) \sqsubseteq  (A\sqcap B) \sqcup (A\sqcap C)\label{intro-dist}
\end{align}
It is known that this interaction can generate an exponential number of conjunctions or force an algorithm to deal with  nondeterminism. For instance, the $\sqcup$-rule used in standard tableaux may produce an exponential number of tableaux   \cite[p.84]{baader2017}. In order to avoid this exponential blow-up, we would weaken disjunction   by removing Property~(\ref{intro-dist}) and retaining    the following properties:
\begin{align}
C_1\sqsubseteq X \text{ and } C_2\sqsubseteq X \text{ iff  }C_1\;\sqcup \;C_2\sqsubseteq X \label{intro:weakdist1}
\end{align}
If the semantics of a DL can be described as set of independent properties such as Properties~(\ref{intro-existsforall})-(\ref{intro:weakdist1}) (i.e., each logical constructor described by several properties instead of  the standard way with set membership), then it is  more flexible to introduce new logics with different complexity of reasoning. A category-based approach offers this possibility. In such an approach, it is surprising that all properties responsible for intractabitity are independent. Hence, a knowledge engineer could renounce Properties~(\ref{intro-existsforall}) and (\ref{intro-dist})   when modelling knowledge. The following example presents a situation where negative knowledge can be represented using a weakened negation/universal restriction constructors without losing expected results.

\begin{example}\label{ex:intro}
In SNOMED CT, there are numerous terms that represent negative knowledge such as  $\mathsf{does\_not\_contain}$,  $\mathsf{absence\_of}$,   $\mathsf{unable\_to},\mathsf{contraindication}$. One can find in this ontology,  a statement   ``$\mathsf{Renal\_agenesis}$ characterizes patients who do not have ($\mathsf{absence\_of}$)  $\mathsf{Kidney}$".
To take advantage of the semantic relationship between a relation  $\mathsf{has\_part}$   and a negative term such as $\mathsf{absence\_of}$,  one can express this relationship as an axiom in a DL allowing for $\tuple{\sqcap, \exists, \forall, \neg, \sqsubseteq}$  as follows:

\begin{itemize}
    \item[]  $\mathtt{Renal\_agenesis \sqsubseteq \forall has\_part. \neg Kidney}$
\end{itemize}
 
In an ontology about Clinical Practice Guidelines \cite{Abidi2009}, 
one can find  statements such as (i)  a patient who suffers from a bacterial infection should be medicated with Penicillin, and  (ii) some patient is allergic to Penicillin and Aspirin. If we wish to   populate such an ontology with \emph{Patient Discharge Summaries}\footnote{e.g., from  \url{https://portal.dbmi.hms.harvard.edu/}}, it would happen that the discharge summary of a patient $\mathsf{\#X}$ verifies both statements (i) and (ii). This situation can be expressed in the following   axiom: 

 $\{\mathsf{\#X}\} \sqsubseteq \mathsf{\exists \mathsf{medWith}.\mathsf{Penicillin} ~~\sqcap} \mathsf{\forall medWith.\neg Penicillin \sqcap  \forall medWith. \neg Aspirin}$\footnote{We adopt this syntax for ABox concept assertions  (Section \ref{sec:alchp})}

\noindent Note that the terms such as $\mathsf{Penicillin}, \mathsf{Aspirin}$ can be found in SNOMED CT. Thanks to Property~(\ref{intro-existsforall}), an algorithm like tableau  can entail that the right-hand side of the axiom is subsumed by  $\mathsf{\exists medWith. (Penicillin \sqcap \neg Penicillin \sqcap \neg Aspirin)}$, which   implies   $\{\mathsf{\#X}\} \sqsubseteq \bot$. In a medical context, it is more plausible  to interpret such an unsatisfiability to mean  ``$\{\mathsf{\#X}\}$ should not receive that treatment".  However, an algorithm  can  use just  Property~(\ref{intro-existsforall2}) to discover that the  right-hand side of the axiom is subsumed by  $\mathsf{\exists medWith. (Penicillin \sqcap \neg Penicillin)}$ which also leads to $\{\mathsf{\#X}\} \sqsubseteq \bot$.  \hfill$\square$
\end{example}

One can observe that best-known biomedical ontologies such as SNOMED CT, GO (Gene Ontology), ICD (International Classification of Diseases) do not contain disjunction in logical axioms.  However, a well-known ontology on diseases, called MONDO\footnote{From \url{https://purl.obolibrary.org/obo/mondo/mondo-base.owl}}, contains disjunction that occurs uniquely in the logical axioms of the form $A\equiv E \sqcup F$ 
 and  there is no axiom of the form  $B\sqsubseteq A$ such that  $B\neq E$ and $B\neq F$. 
 For example, one can find in MONDO the following  axiom: 
 $\mathsf{tuberculosis} \equiv\mathsf{pulmonary\_tuberculosis} \sqcup \mathsf{extrapulmonary\_tuberculosis}$. 
 Such a pattern of axioms with disjunction allows to avoid applying Property~(\ref{intro-dist})  in our algorithm (Section~\ref{sec:sem-equiv}). More general, if $E\sqsubseteq B$ and $F\sqsubseteq B$ hold, then    $B\sqcap (E\sqcup F)$ can be replaced with $E\sqcup F$. 
 

The main goal of our work consists of  weakening the usual set semantics of disjunction and universal restriction such that the weakened semantics is still  sufficient to  represent  negative knowledge in biomedical ontologies while preserving tractability.


The present paper is organized as follows.  In Sections~\ref{sec:alchp} and \ref{sec:cat-semantics}, we present the set-theoretical and categorical semantics of $\mathcal{SH}$, and shows that they are equivalent in Section~\ref{sec:sem-equiv}, i.e., an $\mathcal{SH}$ ontology is set-theoretically inconsistent iff it is categorically inconsistent. The proof of this equivalence  is technical and challenging because it has to build  a mapping that faithfully transfers the information related to the semantics from a fragmented structure generated by a non-deterministic algorithm to a more homogeneous structure generated by a deterministic one.
Based on this construction,  we  introduce in Section~\ref{sec:newDL} a new tractable DL,
namely $\mathcal{EL}^{\rightarrow}$ in a decreasing way, i.e., we drop from  $\mathcal{SH}$ the categorical properties  responsible for intractability. 
Hence,  $\mathcal{EL}^{\rightarrow}$ allows for all full constructors from  $\mathcal{SH}$ except for  $\forall$,  $\sqcup$ and $\neg$ that are weakened to $\bar{\forall}$, $\bar{\sqcup}$ and $\bar{\neg}$   in $\mathcal{EL}^{\rightarrow}$. 
 We have also proved  that $\mathcal{EL}^{\rightarrow}$ deprived of the new weakened DL constructors 
 is equivalent to $\mathcal{EL}_\bot^\circ$ ($\mathcal{EL}$ with the bottom concept and transitive roles). This implies that we can obtain $\mathcal{EL}^{\rightarrow}$ in an increasing way from $\mathcal{EL}_\bot^\circ$, i.e., the weakened constructors $\bar{\forall}$, $\bar{\sqcup}$ and $\bar{\neg}$ are independent from the other constructors in $\mathcal{EL}^{\rightarrow}$.
 Finally, we show that satisfiability in $\mathcal{EL}^{\rightarrow}$ is tractable. In Section~\ref{sec:rw}, we discuss related work. Section~\ref{sec:conc}  presents future work and discusses the obtained results.

%
%

\section{Set-theoretical Semantics of DLs} \label{sec:alchp}

In this section, we present syntax and set-theoretical semantics of  the DL   $\mathcal{SH}$ and $\mathcal{EL}_\bot^\circ$.
Let $\mathbf{C}$, $\mathbf{R}$ and  $\mathbf{I}$ be  non-empty sets of \emph{concept names}, \emph{role names} and  \emph{individuals} respectively.
The set of $\mathcal{SH}$-concepts is defined as the smallest set containing all concept names  in $\mathbf{C}$ with  $\top$, $\bot$ and complex concepts that are inductively defined as follows: $C\sqcap D$, $C\sqcup D$, $\neg C$, $\exists R.C$, $\forall R.C$ where  $C$ and $D$ are $\mathcal{SH}$-concepts, and $R$ is a role name in $\mathbf{R}$. We call
$C\sqsubseteq D$ a \emph{general concept inclusion} (GCI)  where $C$ and $D$ are $\mathcal{SH}$-concepts;  $R\sqsubseteq S$ a \emph{role inclusion} (RI); $S\circ S\sqsubseteq S$  an \emph{inclusion for transitive role} (ITR)  where $R,S\in\mathbf{R}$; $C(a)$ a \emph{concept assertion}  where $C$ is  
an $\mathcal{SH}$-concept
and $a\in\mathbf{I}$ an individual; $R(a,b)$ a \emph{role assertion}   where $R\in\mathbf{R}$ and $a,b\in\mathbf{I}$. For the sake of simplicity in the categorical context, we rewrite each concept assertion $C(a)$ as a \emph{concept assertion axiom} (CAA) $\{a\}\sqsubseteq C$, each role assertion $R(a,b)$ as a \emph{role assertion axiom} (RAA) $\{(a,b)\}\sqsubseteq R$.
An \emph{ontology} $\mathcal{O}$ is a finite set of axioms including GCIs, RIs, ITRs, concept assertions, role assertions. Note that nominals are not allowed in $\mathcal{SH}$. This rewriting of ABox assertions aims just to make statements in an $\mathcal{SH}$ ontology more homogeneous.  
A relation $\overlay{\ast}{\sqsubseteq}$ is inductively  defined as the transitive-reflexive closure $\mathcal{R}^+\tuple{\mathcal{O}}$  of $\sqsubseteq$ on the set of all RAAs,  RIs and ITRs of $\mathcal{O}$, i.e., (i)
$R\overlay{\ast}{\sqsubseteq}S\in  \mathcal{R}^+\tuple{\mathcal{O}}$ if either $R\sqsubseteq S$ is a RAA, RI or $R\overlay{\ast}{\sqsubseteq} R', R'\overlay{\ast}{\sqsubseteq} S\in \mathcal{R}^+\tuple{\mathcal{O}}$; 
and (ii) $S\circ S\overlay{\ast}{\sqsubseteq}S'\in  \mathcal{R}^+\tuple{\mathcal{O}}$ if   $S\circ S\overlay{\sqsubseteq}{\ast} R, R\overlay{\ast}{\sqsubseteq} S' \in \mathcal{R}^+\tuple{\mathcal{O}}$.   

The DL $\mathcal{EL}_\bot^\circ$ allows for  GCIs, the bottom concept,
conjunction, existential restriction, RIs, ITRs  while the DL $\mathcal{SH}$ includes $\mathcal{EL}_\bot^\circ$,   concept negation, disjunction and universal restriction.

We follow the DL literature in which a truth function is called an interpretation function in a Tarskian interpretation.  Such an interpretation $\mathcal{I}=\langle\Delta^{\mathcal{I}},\cdot^{\mathcal{I}}\rangle$ 
 consists of a non-empty set $\Delta^{\mathcal{I}}$ ({\em domain}), and a function 
 $\cdot^{\mathcal{I}}$ ({\em interpretation function}) which  associates a subset of 
 $\Delta^{\mathcal{I}}$ to each concept name, an element  in $\Delta^\mathcal{I}$ to each 
 individual, and a subset of $\Delta^{\mathcal{I}}\times \Delta^{\mathcal{I}}$ to each role
 name, such that  $\top^{\mathcal{I}} :=\Delta^{\mathcal{I}}$,  
$\bot^{\mathcal{I}}:=\varnothing$, $(C\sqcap D)^{\mathcal{I}}:=C^{\mathcal{I}}\cap D^{\mathcal{I}}$, 
$(C\sqcup D)^{\mathcal{I}}:=C^{\mathcal{I}}\cup D^{\mathcal{I}}, (\neg C)^{\mathcal 
I}:=\Delta^{\mathcal{I}}\setminus C^{\mathcal{I}}$, $(\exists R.C)^{\mathcal{I}}:= \{x\!\in\!\Delta^{\mathcal{I}}\!\mid\!\exists 
y\!\in\!C^{\mathcal{I}},\! (x,y)\!\in\!{R^{\mathcal{I}}}\}$ 
and
$(\forall R.C)^{\mathcal{I}}:=  \{x\!\in\!\Delta^{\mathcal{I}}\!\mid\! (x,y)\!\in\! {R^{\mathcal{I}}}\!\implies \! y\!\in\! 
C^{\mathcal{I}}\}$.
 
An interpretation $\mathcal{I}$ satisfies  a GCI   $C\sqsubseteq D$  if 
$C^{\mathcal{I}}\subseteq D^{\mathcal{I}}$; a SI $R\sqsubseteq S$ if $R^{\mathcal{I}}\subseteq S^{\mathcal{I}}$; an ITR $S\circ S\sqsubseteq S$    if $(x,y)\in S^\mathcal{I}, (y, z)\in S^\mathcal{I}$ imply   $(x, z)\in S^\mathcal{I}$; a CAA  $\{a\}\sqsubseteq C$ if $a^\mathcal{I}\in C^\mathcal{I}$; and a RAA $\{(a,b)\}\sqsubseteq R$ if $(a^\mathcal{I}, b^\mathcal{I})\in R^\mathcal{I}$.  We say that 
$\mathcal{I}$ is a model of  $\mathcal{O}$, written $\mathcal{I}\models  \mathcal{O}$, if   $\mathcal{I}$ satisfies  all axioms  of $\mathcal{O}$.   In this case, we say that $\mathcal{O}$ is {\em set-theoretically consistent}.
We say that an ontology is NNF (Negation Normal Form) if all concepts occurring in the ontology are written  in NNF,   i.e., negation occurs only in front of a concept name.  Since  De Morgan's laws, double negation $C\equiv \neg \neg C$, and dualities $\neg (\exists R.C)\equiv \forall R.\neg C$,  $\neg (\forall R.C)\equiv \exists  R.\neg C$  hold in $\mathcal{SH}$, all concepts in an $\mathcal{SH}$ ontology   are  polynomially convertible into NNF. 

 
%
%

\section{Categorical semantics}\label{sec:cat-semantics}

In this section, we introduce categorical semantics of different DL  constructors using \emph{objects} and  \emph{arrows} ($\rightarrow$) from  categories. This work extends the categorical properties for $\mathcal{ALC}$   presented by \citeauthor{leduc2021}  and \citeauthor{brieulle2022} to those for $\mathcal{SH}$ with individuals.  
To this end, we need  a concept category $\mathscr{C}_c$ and a role category $\mathscr{C}_r$  whose objects are respectively roles and concepts occurring in an   $\mathcal{SH}$ ontology. As usual, we use a \emph{valuation} or truth function $V$ that associates an object in $\mathscr{C}_c$ to each concept occurring in an ontology.  Then we use  a \emph{product} and  \emph{coproduct} of two objects $C,D$ in $\mathscr{C}_c$ to  ``interpret"  the categorical semantics of  $C \sqcap D$ and $C\sqcup D$, i.e., $V(C \sqcap D)=``\text{product of } C \text{ and } D"$, $V(C \sqcup D)=``\text{coproduct of } C \text{ and } D"$.  Note that \emph{product} and  \emph{coproduct} in a category $\mathscr{C}$ correspond to  a \emph{greatest lower bound} and a \emph{least upper bound} in a Heyting algebra $\mathbf{HA}=\tuple{H,\subseteq}$ if  $H$ and $\subseteq$ are considered as $\mathscr{C}$ and $\rightarrow$. We define   $V(\neg C)$ as simultaneously the greatest object (under $\rightarrow$) \emph{disjoint} with $V(C)$, and the least object  such that  the union of $V(C)$ and $V(\neg C)$ is the greatest object (or \emph{terminal object}) in $\mathscr{C}_c$. To define $V(\exists R.C)$ and $V(\forall R.C)$, we adapt \emph{left} and \emph{right adjoints} in a category that were used to characterize  unqualified existential and universal quantifiers in FOL \cite[p. 57]{saunders92}. Moreover, all axioms from an ontology are interpreted as arrows in  $\mathscr{C}_c$  and  $\mathscr{C}_r$. When all $V(C \sqcap D), V(C \sqcup D), V(\neg C), V(\exists R.C)$ and $V(\forall R.C)$ are defined in  $\mathscr{C}_c$, a concept $C$ is \emph{categorically unsatisfiable}   iff an arrow $V(C)\rightarrow \bot$ belongs to $\mathscr{C}_c$ where $\bot$ is the \emph{initial object} of $\mathscr{C}_c$. When  it is clear from the context, e.g. referring to category objects,   we  write $C$  instead of  $V(C)$. The following definition introduces basic properties of concept and role categories, and functors between them. 

\begin{definition}[concept and role categories]\label{def:syntax-cat}
Let  $\mathbf{R}$,  $\mathbf{C}$ and $\mathbf{I}$  be non-empty and disjoint sets of \emph{role names}, \emph{concept names} and \emph{individuals} respectively. We define a \emph{role category}
$\mathscr{C}_r$, and a \emph{concept category} $\mathscr{C}_c$  such that they are \emph{pre-order} catagories and the following properties are satisfied:

\begin{enumerate}[leftmargin=!, labelwidth=1em, align=left]
\item[\mylabel{org-syntax-role}{$\mathsf{P_r}$}$:$] 
    If $R\in\mathbf{R}$ 
    or   $R=\{(a,b)\}$ with $(a,b)\in \mathbf{I}\times \mathbf{I}$, 
    then $R$ is 
    a role object of $\mathscr{C}_r$. We use $\role{\bot}$ and $\role{\top}$ to denote 
    initial and terminal role objects of $\mathscr{C}_r$, i.e., $R\rightarrow \role{\top}$ and $\role{\bot}\rightarrow R$ are arrows for all role object $R$ in $\mathscr{C}_r$. Furthermore, there is an identity arrow $R\rightarrow R$ for all role object $R$, and composition of arrows holds in $\mathscr{C}_r$. 
    
    \item[\mylabel{org-syntax-concept}{$\mathsf{P_c}$}$:$] 
    If $C\in\mathbf{C}$
    or $C=\{a\}$  with $a\in \mathbf{I}$, 
    then $C$  is a concept object of 
    $\mathscr{C}_c$,  $\bot$ and $\top$ are respectively 
    \emph{initial} and \emph{terminal} concept objects of 
    $\mathscr{C}_c$, i.e., $C\rightarrow \top$ and $\bot\rightarrow C$ are arrows for all concept object  $C$ in $\mathscr{C}_c$. Furthermore, there is an identity arrow $C\rightarrow C$ for all concept  object $C$, and composition of arrows holds  in $\mathscr{C}_c$. For every role object $R$ of $\mathscr{C}_r$, there are concept objects 
    $R_{l}$ (left projection) and $R_{r}$ (right projection) in $\mathscr{C}_c$.
     
    \item[\mylabel{org-syntax-functor}{$\mathsf{P_f}$}$:$] 
    There are two functors $p_l, p_r : 
    \mathscr{C}_r \rightarrow \mathscr{C}_c$ such that if 
    $\mathscr{C}_r$ has an object $R$, then   it holds that 
    (i) $p_l(R)= R_{l}$, $p_r(R)= R_{r}$; (ii) $p_l(R)\rightarrow \bot$ iff $R\rightarrow \mathfrak{R}_\bot$, and (iii) 
      $p_r(R)\rightarrow\bot$ iff $R\rightarrow \mathfrak{R}_\bot$. Furthermore, these functors preserve initial and terminal objects, i.e., $p_l(\mathfrak{R}_\top)$ and $ p_r(\mathfrak{R}_\top)$ are isomorphic to $\top$; $p_l(\mathfrak{R}_\bot)$ and $p_r(\mathfrak{R}_\bot)$ are isomorphic to $\bot$.
    \end{enumerate}

\noindent We use  $\mathsf{Ob}(\mathscr{C})$ and $\mathsf{Hom}(\mathscr{C})$  to 
denote the sets of objects and arrows of $\mathscr{C}$ respectively. 
\end{definition}
 
Note that $\mathsf{Ob}(\mathscr{C})$ and $\mathsf{Hom}(\mathscr{C})$  are not necessarily  sets in a general category. The functors $p_l$ and $p_r$
are closely related to projections of a binary relation in set theory. To illustrate this, let 
$\mathfrak{R} = \{(x_1, y_1), (x_2, y_2)\}$ be a binary relation under set-theoretical semantics. 
Then $p_l$ represents the image of the left projection of $\mathfrak{R}$, 
$p_l(\mathfrak{R})=\{x_1, x_2\}$, and $p_r$  the right projection, $p_r(\mathfrak{R})=\{y_1,y_2\}$. As  all functors,  $p_l, p_r$  preserve arrows, i.e.,  $R\rightarrow S\in \mathsf{Hom}(\mathscr{C}_r)$ implies $p_l(R) \rightarrow p_l(S), p_r(R) \rightarrow p_r(S)\in \mathsf{Hom}(\mathscr{C}_c)$.  Notice that if a role or concept category has two arrows $f: X\rightarrow Y$ and $g: Y\rightarrow X$, written as $Y\leftrightarrows X$,   then $X$ and $Y$ are isomorphic. In fact, this implies that $\mathsf{id_X}=g\circ f$ and $\mathsf{id_Y}=f\circ g$  since these categories are pre-order.

 
Definition~\ref{def:syntax-cat} provides basic properties of concept and role categories from category theory. We need to equip $\mathscr{C}_c$ and $\mathscr{C}_r$ with more semantic constraints  coming from logical constructors, TBox axioms and ABox assertions of an $\mathcal{SH}$ ontology $\mathcal{O}$. 
In the sequel, we present the categorical  semantics of   logical constructors with the  set-theoretical  semantics counterpart. An interpretation $\mathcal{I}$ is employed whenever  referring to the set-theoretical semantics of a concept and role.  When an arrow $X\rightarrow Y$ is present in $\mathsf{Hom}(\mathscr{C})$, it  implies the presence of $X,Y$ in $\mathsf{Ob}(\mathscr{C})$ and  identity arrows $X\rightarrow X, Y\rightarrow Y$ in $\mathsf{Hom}(\mathscr{C})$. We write $X \leftrightarrows Y$ for $X \rightarrow Y$ and  $Y \rightarrow X$. 

Given an $\mathcal{SH}$ ontology $\mathcal{O}$, we define a signature $\mathcal{S}(\mathcal{O})=\tuple{\mathbf{C}, \mathbf{R}, \mathbf{I}}$ such that $\mathbf{C}$, $\mathbf{R}$ and $\mathbf{I}$ contain all concept names, role names and individuals  occurring in $\mathcal{O}$ respectively. Furthermore, $\mathbf{C}$ contains concept objects of the form $\{a\}$ with $a\in \mathbf{I}$, and $\mathbf{R}$ contains role objects of the form $\{(a,b)\}$ where $R(a,b)$ is an RAA in $\mathcal{O}$. In addition, $\mathbf{R}$ will be extended with fresh role names,  as described in the categorical properties below.  In this case,   $\mathbf{C}$, $\mathbf{R}$ and $\mathbf{I}$ in Definition~\ref{def:syntax-cat} refer to $\mathcal{S}(\mathcal{O})$.

\noindent\textbf{Axiom} (GCI, CAA, RAA, RI, ITR) $X\sqsubseteq Y$. 

\noindent\textit{Formal meaning.}  
\begin{enumerate}
 [leftmargin=0cm, itemindent=0.75cm] 
\item[\mylabel{org-axioms}{$\mathsf{P_\sqsubseteq}$}:] 
If $X\sqsubseteq   Y\in \mathcal{O}$ is a GCI or  CAA, then $X\rightarrow Y\in \mathsf{Hom}(\mathscr{C}_c)$. Furthermore, if $X\sqsubseteq   Y$ is a RAA, RI or ITR, then    $X\rightarrow Y\in \mathsf{Hom}(\mathscr{C}_r)$ and $p_l(X)\rightarrow p_l(Y), p_r(X)\rightarrow p_r(Y)\in \mathsf{Hom}(\mathscr{C}_c)$  by preservation of arrows via functors.  
\end{enumerate}

\noindent\textit{Informal meaning.} A knowledge engineer interprets  $X\sqsubseteq Y$ as set inclusion in the usual set semantics with set membership while he/she can interpret $X\rightarrow Y$ to mean that object $Y$ \emph{more general} than object $X$ (or $X$ is \emph{more specific} than $Y$) in the categorical semantics. Since this relation can define a partial order over objects, we can say that an object $Y$ is  \emph{the most general} or \emph{the most specific}. Thereafter, we use this order relation to provide informal meaning of other constructors. 

Regarding the operator of the composition of roles $S \circ S$ that occurs in an ITR such as $S\circ S\sqsubseteq S$, its semantics will be  taken into account in categorical properties related to universal and existential restrictions.

\noindent\textbf{Conjunction} $C\sqcap D$. 

\noindent\textit{Formal meaning.} There is a \emph{product object} $C\sqcap D$ of given objects $C,D$  in $\mathscr{C}_c$, that means \cite[p.46]{gol06},
\begin{enumerate}[leftmargin=!, labelwidth=1em, align=left]
\item[\mylabel{org-conj-d}{$\mathsf{P^d_\sqcap}$}:] 
$C\sqcap D\in \mathsf{Ob}(\mathscr{C}_c)$  implies $C\sqcap D \rightarrow C,  C\sqcap D \rightarrow D \in \mathsf{Hom}(\mathscr{C}_c)$\label{org-conj-1}
\item[\mylabel{org-conj-c}{$\mathsf{P^c_\sqcap}$}:]  $X\rightarrow C,  X\rightarrow D\in \mathsf{Hom}(\mathscr{C}_c), C\sqcap D\in \mathsf{Ob}(\mathscr{C}_c) \text{ imply }  X\rightarrow C\sqcap D \in \mathsf{Hom}(\mathscr{C}_c)$
\end{enumerate}

\noindent\textit{Informal meaning.} A knowledge engineer interprets  $C\sqcap D$  as the set of all elements simultaneously belonging  to $C$ and $D$  in the usual set semantics  while he/she can interpret   
$C\sqcap D$ to mean that it is the most general object but more specific than both $C$ and $D$  in the  categorical semantics.

\medskip

\noindent If we replace ``$\rightarrow$"  in \ref{org-conj-d} ($\mathsf{d}$ for decomposition) and \ref{org-conj-c}  ($\mathsf{c}$ for composition) with ``$\sqsubseteq$", then the modified \ref{org-conj-d}-\ref{org-conj-c} characterize   the usual set semantics of $\sqcap$ under $\mathcal{I}$ in the following sense.

\begin{lemma}\label{lem:sem-conj}
$(U\sqcap V)^\mathcal{I}=U^\mathcal{I}\cap V^\mathcal{I}$ for all concepts $U,V$  iff
\begin{align}
&(C \sqcap D)^\mathcal{I} \subseteq  C^\mathcal{I},    (C \sqcap D)^\mathcal{I} \subseteq  D^\mathcal{I}   \label{conj001} \\
&X\subseteq \Delta^\mathcal{I}, X \subseteq  C^\mathcal{I}, X  \subseteq  D^\mathcal{I} \text{ implies }
X  \subseteq (C \sqcap D)^\mathcal{I}  \label{conj002}
\end{align}
\end{lemma}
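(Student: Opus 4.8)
The plan is to prove the biconditional by establishing each direction separately, exploiting the fact that conditions~(\ref{conj001}) and~(\ref{conj002}) together express exactly that $(C\sqcap D)^\mathcal{I}$ is a \emph{greatest lower bound} of $C^\mathcal{I}$ and $D^\mathcal{I}$ under set inclusion: (\ref{conj001}) says it is a lower bound, and (\ref{conj002}) says it is the greatest such. In the powerset lattice $\langle\mathcal{P}(\Delta^\mathcal{I}),\subseteq\rangle$ the greatest lower bound is precisely the intersection, so the two conditions should be equivalent to the defining equation. This is the set-theoretic shadow of the universal property of the product encoded in \ref{org-conj-d}--\ref{org-conj-c}, exactly as Lemma~\ref{lem:sem-rolecomp} does for role composition.

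For the forward direction I would assume $(U\sqcap V)^\mathcal{I}=U^\mathcal{I}\cap V^\mathcal{I}$ for all $U,V$ and read off both conditions directly. Condition~(\ref{conj001}) holds because $C^\mathcal{I}\cap D^\mathcal{I}$ is contained in each of $C^\mathcal{I}$ and $D^\mathcal{I}$; condition~(\ref{conj002}) holds because any $X$ with $X\subseteq C^\mathcal{I}$ and $X\subseteq D^\mathcal{I}$ satisfies $X\subseteq C^\mathcal{I}\cap D^\mathcal{I}=(C\sqcap D)^\mathcal{I}$. This direction is immediate from elementary properties of intersection and requires no further work.

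For the backward direction I would assume~(\ref{conj001}) and~(\ref{conj002}) and prove the two inclusions that constitute $(U\sqcap V)^\mathcal{I}=U^\mathcal{I}\cap V^\mathcal{I}$. The inclusion $(U\sqcap V)^\mathcal{I}\subseteq U^\mathcal{I}\cap V^\mathcal{I}$ follows by instantiating~(\ref{conj001}) with $C=U$, $D=V$ and intersecting the two resulting containments. For the reverse inclusion, the one step that requires care is the choice of witness: I would set $X:=U^\mathcal{I}\cap V^\mathcal{I}$, which trivially satisfies $X\subseteq\Delta^\mathcal{I}$, $X\subseteq U^\mathcal{I}$, and $X\subseteq V^\mathcal{I}$, so~(\ref{conj002}) yields $U^\mathcal{I}\cap V^\mathcal{I}=X\subseteq(U\sqcap V)^\mathcal{I}$. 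Combining the two inclusions gives the equation. I do not expect any genuine obstacle here; the only subtlety is picking $X$ to be the intersection itself rather than an arbitrary common lower bound, which is precisely what turns the ``greatest lower bound'' property~(\ref{conj002}) into the equality with $\cap$.
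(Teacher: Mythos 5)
Your proof is correct and follows essentially the same route as the paper's: the forward direction is read off directly from elementary properties of intersection, and the backward direction uses exactly the paper's key step of instantiating~(\ref{conj002}) with the witness $X := U^\mathcal{I}\cap V^\mathcal{I}$ to convert the greatest-lower-bound property into the equality with $\cap$. The lattice-theoretic framing you add is a nice gloss but does not change the argument.
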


\begin{proof}
 \noindent ``$\Longleftarrow$". Due to (\ref{conj001}) we have $(U\sqcap V)^\mathcal{I}  \subseteq U^\mathcal{I}\cap V^\mathcal{I}$. Let $X\subseteq \Delta^\mathcal{I}$   such that $X =U^\mathcal{I}\cap V^\mathcal{I}$. This implies that $X \subseteq U^\mathcal{I}$ and $X \subseteq V^\mathcal{I}$.   From (\ref{conj002}), we have  $X\subseteq (U\sqcap V)^\mathcal{I}$.  
 
 \noindent ``$\Longrightarrow$".  From $(C\sqcap D)^\mathcal{I}=C^\mathcal{I}\cap D^\mathcal{I}$   we obtain (\ref{conj001}).    Moreover, if   $X\subseteq C^\mathcal{I}$ and $X\subseteq D^\mathcal{I}$ then $X\subseteq C^\mathcal{I}\cap D^\mathcal{I}=(C\sqcap D)^\mathcal{I}$ by the hypothesis. Thus, (\ref{conj002}) is proved.
\end{proof}

\noindent\textbf{Disjunction} $C\sqcup D$. 

\noindent\textit{Formal meaning.} There is a \emph{co-product object} $C\sqcup D$ of objects $C,D$  in $\mathscr{C}_c$, that means \cite[p.54]{gol06}, 
\begin{enumerate}[leftmargin=!, labelwidth=1em, align=left]
\item[\mylabel{org-disj-d}{$\mathsf{P^d_\sqcup}$}:]     $C\sqcup D \in  \mathsf{Ob}(\mathscr{C}_c)$  implies  $C \rightarrow C \sqcup D,   D \rightarrow  C \sqcup D \in \mathsf{Hom}(\mathscr{C}_c)$.   

\item[\mylabel{org-disj-c}{$\mathsf{P^c_\sqcup}$}:]  $C\rightarrow  X, D \rightarrow  X  \in \mathsf{Hom}(\mathscr{C}_c)$, $C\sqcup D\in \mathsf{Ob}(\mathscr{C}_c)$
imply $C \sqcup D  \rightarrow X \in \mathsf{Hom}(\mathscr{C}_c)$.
\end{enumerate}

\noindent\textit{Informal meaning.}
A knowledge engineer interprets  $C\sqcup D$  as a set of all elements of $C$ and $D$ in the usual set semantics   while he/she can interpret
$C\sqcup D$ to mean that it is  the most specific object but more general than both  $C$ and $D$ in the categorical semantics.

\medskip
\noindent If we replace ``$\rightarrow$" in \ref{org-disj-d} and \ref{org-disj-c}  with ``$\sqsubseteq$", then  the modified ones characterize  the usual set semantics of $\sqcup$ under $\mathcal{I}$ in the following sense.  

\begin{lemma}\label{lem:sem-disj}  
$(C\sqcup D)^\mathcal{I}=C^\mathcal{I}\cup D^\mathcal{I}$  iff
\begin{align}
&C^\mathcal{I} \subseteq (C  \sqcup D)^\mathcal{I  }, D^\mathcal{I}  \subseteq  (C  \sqcup D)^\mathcal{I}  \label{disj1}   \\
&X\subseteq \Delta^\mathcal{I},  C^\mathcal{I}    \subseteq  X,   D^\mathcal{I}  \subseteq  X  \text{ implies } (C \sqcup D)^\mathcal{I}  \subseteq X   \label{disj2} 
\end{align}
\end{lemma}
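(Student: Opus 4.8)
The plan is to prove the biconditional exactly as in the proof of Lemma~\ref{lem:sem-conj}, but dualized: whereas (\ref{conj001})--(\ref{conj002}) express that $(C\sqcap D)^\mathcal{I}$ is the greatest lower bound of $C^\mathcal{I}$ and $D^\mathcal{I}$ in the powerset lattice $\tuple{\mathcal{P}(\Delta^\mathcal{I}),\subseteq}$, the two conditions (\ref{disj1})--(\ref{disj2}) express that $(C\sqcup D)^\mathcal{I}$ is their least upper bound. Since the least upper bound of two subsets of $\Delta^\mathcal{I}$ is precisely their union, the equality $(C\sqcup D)^\mathcal{I}=C^\mathcal{I}\cup D^\mathcal{I}$ ought to be equivalent to the conjunction of these two conditions, and both directions reduce to elementary facts about union.

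For the ``$\Longleftarrow$'' direction I would assume (\ref{disj1}) and (\ref{disj2}) and prove the two inclusions separately. The inclusion $C^\mathcal{I}\cup D^\mathcal{I}\subseteq (C\sqcup D)^\mathcal{I}$ is immediate from (\ref{disj1}), since a set containing both $C^\mathcal{I}$ and $D^\mathcal{I}$ contains their union. For the reverse inclusion the key (and only nonroutine) step is the choice of witness: I would instantiate the universally quantified $X$ in (\ref{disj2}) with the concrete set $X:=C^\mathcal{I}\cup D^\mathcal{I}$. This $X$ satisfies the side condition $X\subseteq\Delta^\mathcal{I}$ because $C^\mathcal{I},D^\mathcal{I}\subseteq\Delta^\mathcal{I}$, and it trivially satisfies $C^\mathcal{I}\subseteq X$ and $D^\mathcal{I}\subseteq X$; hence (\ref{disj2}) yields $(C\sqcup D)^\mathcal{I}\subseteq C^\mathcal{I}\cup D^\mathcal{I}$. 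Combining the two inclusions gives the desired equality.

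For the ``$\Longrightarrow$'' direction I would assume $(C\sqcup D)^\mathcal{I}=C^\mathcal{I}\cup D^\mathcal{I}$ and read off both conditions. Condition (\ref{disj1}) is immediate from the standard inclusions $C^\mathcal{I}\subseteq C^\mathcal{I}\cup D^\mathcal{I}$ and $D^\mathcal{I}\subseteq C^\mathcal{I}\cup D^\mathcal{I}$ together with the assumed equality. Condition (\ref{disj2}) follows from the fact that union is the smallest superset: if $X\subseteq\Delta^\mathcal{I}$ with $C^\mathcal{I}\subseteq X$ and $D^\mathcal{I}\subseteq X$, then $C^\mathcal{I}\cup D^\mathcal{I}\subseteq X$, which by the assumed equality is exactly $(C\sqcup D)^\mathcal{I}\subseteq X$.

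I do not anticipate a genuine obstacle here: the argument is the order-theoretic dual of Lemma~\ref{lem:sem-conj}, and the only place demanding the slightest care is verifying the side condition $X\subseteq\Delta^\mathcal{I}$ when instantiating (\ref{disj2}) with $X=C^\mathcal{I}\cup D^\mathcal{I}$ in the ``$\Longleftarrow$'' direction. Everything else is a direct appeal to the definitions of union and subset inclusion.
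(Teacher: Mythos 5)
Your proof is correct and follows essentially the same route as the paper's: in the ``$\Longleftarrow$'' direction the paper also instantiates $X:=C^\mathcal{I}\cup D^\mathcal{I}$ in (\ref{disj2}), and its ``$\Longrightarrow$'' direction is just the element-wise version of your appeal to the union being the least upper bound. The dualization of Lemma~\ref{lem:sem-conj} you describe is exactly the intended argument, so there is nothing to add.
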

\begin{proof}
 \noindent ``$\Longleftarrow$". Due to (\ref{disj1}) we have $C^\mathcal{I}\cup D^\mathcal{I} \subseteq (C\sqcup D)^\mathcal{I}$. Let   $X=C^\mathcal{I}\cup D^\mathcal{I}$. Due to (\ref{disj2}) we have  $(C\sqcup D)^\mathcal{I} \subseteq X=C^\mathcal{I}\cup D^\mathcal{I}$. 
 
 \noindent ``$\Longrightarrow$".  From $(C\sqcup D)^\mathcal{I}=C^\mathcal{I}\cup D^\mathcal{I}$, we have (\ref{disj1}).  Let $x\in (C\sqcup D)^\mathcal{I}$. Due to $(C\sqcup D)^\mathcal{I}=C^\mathcal{I}\cup D^\mathcal{I}$, we have  $x\in C^\mathcal{I}$ or $x\in D^\mathcal{I}$. Hence, $x\in X$  since $C^\mathcal{I}    \subseteq  X$ and  $D^\mathcal{I}  \subseteq  X$.  
\end{proof}

Note that distributivity of conjunction over disjunction cannot be proved from the categorical properties related to $\sqcap$ and $\sqcup$, as showed in the following example.

\begin{example}\label{ex:distrib} Let  $\mathsf{C}_0=\mathsf{F} \sqcap (\mathsf{D}\sqcup \mathsf{S}) \sqcap (~(\mathsf{F}\sqcap \mathsf{D}) \sqcup  (\mathsf{F}\sqcap \mathsf{S})~)$, and $\mathscr{C}_c$ a concept ontology category including  $\mathsf{C_0}$. 
 If we apply exhaustively \ref{org-conj-d} and \ref{org-disj-d} to $\mathscr{C}_c$, then 
 $\mathsf{Hom}(\mathscr{C}_c)$ contains the following arrows (for the sake of simplicity, we omit initial, terminal and identity arrows as well as compositions of arrows):

{
\centering
\begin{tabular}{lrlr}
$\mathsf{C_0}$ & $\rightarrow$ &$\mathsf{F} \sqcap (\mathsf{D}\sqcup \mathsf{S})$  & by \ref{org-conj-d} \\
$\mathsf{C_0}$ & $\rightarrow$  & $(\mathsf{F}\sqcap \mathsf{D}) \sqcup  (\mathsf{F}\sqcap 
\mathsf{S})$ & (idem)\\
$\mathsf{F} \sqcap (\mathsf{D}\sqcup \mathsf{S})$ &$\rightarrow$ & $\mathsf{F}$  & (idem)\\
$\mathsf{F} \sqcap (\mathsf{D}\sqcup \mathsf{S})$ & $\rightarrow$ & $\mathsf{D}\sqcup 
\mathsf{S}$ &  (idem)\\
$\mathsf{D}$ & $\rightarrow$ & $\mathsf{D}\sqcup \mathsf{S}$ & by  \ref{org-disj-d}\\
$\mathsf{S}$ & $\rightarrow$ & $\mathsf{D}\sqcup \mathsf{S}$ & (idem)\\
$\mathsf{F}\sqcap \mathsf{D}$ & $\rightarrow$ & $\mathsf{F}$   & by  \ref{org-conj-d}\\
$\mathsf{F}\sqcap \mathsf{D}$ & $\rightarrow$ & $\mathsf{D}$   &   (idem)\\
$\mathsf{F}\sqcap \mathsf{D}$ & $\rightarrow$ & $(\mathsf{F}\sqcap 
\mathsf{D})\sqcup (\mathsf{F}\sqcap \mathsf{S})$  & by \ref{org-disj-d}\\
$\mathsf{F}\sqcap \mathsf{S}$ & $\rightarrow$ & $\mathsf{F}$ & by \ref{org-conj-d}\\
$\mathsf{F}\sqcap \mathsf{S}$ & $\rightarrow$ & $\mathsf{S}$ & (idem)\\
$\mathsf{F}\sqcap \mathsf{S}$ & $\rightarrow$ & $(\mathsf{F}\sqcap \mathsf{D})\sqcup 
(\mathsf{F}\sqcap \mathsf{S})$ & by \ref{org-disj-d}
\end{tabular}\par
}
 
\medskip

The application of  \ref{org-conj-d} and \ref{org-disj-c} to $\mathscr{C}_c$ leads to  adding only  identity arrows. This suggests that   $\mathsf{F} \sqcap (\mathsf{D}\sqcup \mathsf{S})\rightarrow (\mathsf{F}\sqcap \mathsf{D}) \sqcup  (\mathsf{F}\sqcap \mathsf{S})$ is not derivable  from \ref{org-conj-d} to \ref{org-disj-c}.
\end{example}

\noindent\textbf{Negation} $\neg C$. We can define negation by using categorical product and co-product via conjunction and disjunction.

\noindent\textit{Formal meaning.}

\begin{enumerate}[leftmargin=!, labelwidth=1em, align=left]
\item[\mylabel{org-neg-sqcap}{$\mathsf{P}_\neg^\sqcap$}$:$]$C\in  \mathsf{Ob}(\mathscr{C}_c)$ \textit{implies}    
$C \sqcap \neg C \rightarrow \bot \in  \mathsf{Hom}(\mathscr{C}_c)$
\item[\mylabel{org-neg-sqcup}{$\mathsf{P}_\neg^\sqcup$}$:$]$C\in  \mathsf{Ob}(\mathscr{C}_c)$ \textit{implies}   
$   \top\rightarrow C \sqcup \neg C\in  \mathsf{Hom}(\mathscr{C}_c)$
\item[\mylabel{org-neg-bot}{$\mathsf{P}_\neg^{\bot}$}$:$]  $C \sqcap X \rightarrow \bot \in  \mathsf{Hom}(\mathscr{C}_c)$ \textit{implies}  $X \rightarrow \neg C \in  \mathsf{Hom}(\mathscr{C}_c)$

\item[\mylabel{org-neg-top}{$\mathsf{P}_\neg^{\top}$}$:$] $\top  \rightarrow C \sqcup X \in  \mathsf{Hom}(\mathscr{C}_c)$ \textit{implies}   $\neg C \rightarrow X \in  \mathsf{Hom}(\mathscr{C}_c)$
\end{enumerate}

\noindent\textit{Informal meaning.}  A knowledge engineer can interpret  $\neg C$  as  a set of all elements in $\top$  that are not in $C$ in the usual set semantics  while he/she can interpret
$\neg C$ to mean that  it is the most specific  object such that $C \sqcup \neg C$ is more general than   $\top$, and the most general object such that $C \sqcap \neg C$  is  more specific  than  $\bot$ in the categorical semantics.

\medskip

If we replace ``$\rightarrow$" in \ref{org-neg-sqcap}, \ref{org-neg-sqcup}, \ref{org-neg-bot} and  \ref{org-neg-top}  with ``$\sqsubseteq$", then the modified ones characterize the usual set semantics of $\neg$ under $\mathcal{I}$  in the following sense.

\begin{lemma}\label{lem:sem-neg}
$\neg C^\mathcal{I}=\Delta^\mathcal{I}\setminus C^\mathcal{I}$ iff
\begin{align}
&C^\mathcal{I} \cap \neg C^\mathcal{I}\subseteq \bot^\mathcal{I}, \top^\mathcal{I} \subseteq  C^\mathcal{I} \cup \neg C^\mathcal{I}\label{neg2}\\
& (X \sqcap C)^\mathcal{I} \subseteq \bot^\mathcal{I} \text{ implies } X^\mathcal{I} \subseteq (\neg C)^\mathcal{I} \label{neg3} \\
& \top^\mathcal{I} \subseteq (C \sqcup X)^\mathcal{I}  \text{ implies } (\neg C)^\mathcal{I} \subseteq X^\mathcal{I} \label{neg4}
\end{align}

\end{lemma}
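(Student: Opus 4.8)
The plan is to mirror the structure of Lemmas~\ref{lem:sem-conj} and \ref{lem:sem-disj}, proving the two implications separately and reducing everything to elementary set algebra once the interpretations of $\sqcap$, $\sqcup$, $\top$ and $\bot$ are unfolded. Throughout I would write $c := C^\mathcal{I}$ and $N := (\neg C)^\mathcal{I}$ for the (a priori arbitrary) subset of $\Delta^\mathcal{I}$ that the interpretation assigns to $\neg C$, and use $\bot^\mathcal{I}=\varnothing$, $\top^\mathcal{I}=\Delta^\mathcal{I}$ together with $(X\sqcap C)^\mathcal{I}=X^\mathcal{I}\cap c$ and $(C\sqcup X)^\mathcal{I}=c\cup X^\mathcal{I}$, which hold by the definition of $\mathcal{I}$.

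For the direction ``$\Longrightarrow$'', I would assume $N=\Delta^\mathcal{I}\setminus c$ and verify the three conditions in turn. Condition~(\ref{neg2}) is immediate, since $c\cap(\Delta^\mathcal{I}\setminus c)=\varnothing$ and $c\cup(\Delta^\mathcal{I}\setminus c)=\Delta^\mathcal{I}$. For (\ref{neg3}), $(X\sqcap C)^\mathcal{I}\subseteq\bot^\mathcal{I}$ unfolds to $X^\mathcal{I}\cap c=\varnothing$, whence $X^\mathcal{I}\subseteq\Delta^\mathcal{I}\setminus c=N$. For (\ref{neg4}), $\top^\mathcal{I}\subseteq(C\sqcup X)^\mathcal{I}$ unfolds to $c\cup X^\mathcal{I}=\Delta^\mathcal{I}$, whence $N=\Delta^\mathcal{I}\setminus c\subseteq X^\mathcal{I}$.

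For the direction ``$\Longleftarrow$'', I would observe that condition~(\ref{neg2}) alone already forces the complement: $c\cap N\subseteq\bot^\mathcal{I}=\varnothing$ gives $N\subseteq\Delta^\mathcal{I}\setminus c$, while $\top^\mathcal{I}=\Delta^\mathcal{I}\subseteq c\cup N$ gives $\Delta^\mathcal{I}\setminus c\subseteq N$; together these yield $N=\Delta^\mathcal{I}\setminus c$. Thus (\ref{neg3}) and (\ref{neg4}) are not needed for this direction, which is worth recording: the pseudo-complement conditions (\ref{neg3})--(\ref{neg4}) are the maximality/minimality counterparts of the categorical properties \ref{org-neg-bot}--\ref{org-neg-top}, and in the Boolean (set) setting they are already subsumed by the disjointness-and-cover condition (\ref{neg2}).

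There is essentially no hard step here; the only point to be careful about is not to smuggle in the standard meaning of $\neg$ while proving the ``$\Longleftarrow$'' direction. Concretely, one must treat $N$ as an \emph{unconstrained} subset of $\Delta^\mathcal{I}$ and derive its value solely from the hypotheses --- which is exactly what makes (\ref{neg2}) do all the work, since (\ref{neg3}) instantiated at $X=\neg C$ would only yield the tautology $N\subseteq N$.
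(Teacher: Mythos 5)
Your proposal is correct and follows essentially the same route as the paper: the forward direction unfolds $(X\sqcap C)^\mathcal{I}$ and $(C\sqcup X)^\mathcal{I}$ into intersection and union (the paper cites Lemmas~\ref{lem:sem-conj} and \ref{lem:sem-disj} for this where you appeal to the definition of $\mathcal{I}$, an immaterial difference), and the backward direction rests solely on condition~(\ref{neg2}), which the paper dismisses as trivial and you merely spell out. Your added remark that (\ref{neg3})--(\ref{neg4}) are redundant for the converse is consistent with, and slightly more explicit than, the paper's own treatment.
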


\begin{proof}
``$\Longrightarrow$". 
Assume  $\neg C^\mathcal{I}=\Delta^\mathcal{I}\setminus C^\mathcal{I}$. It is obvious that (\ref{neg2}) holds.
Let $x\in X^\mathcal{I}$ with $(C\sqcap X)^\mathcal{I}\subseteq \bot^\mathcal{I}$.  Due to Lemma~\ref{lem:sem-conj}, we have $(C\sqcap X)^\mathcal{I}=C^\mathcal{I} \cap X^\mathcal{I}$. Therefore, $x\notin C^\mathcal{I}$, and thus $x\in (\neg C)^\mathcal{I}$. Let $x\in (\neg C)^\mathcal{I}$ with $\top^\mathcal{I} \subseteq (C\sqcup X)^\mathcal{I} $. It follows that $x\notin C^\mathcal{I}$. Due to Lemma~\ref{lem:sem-disj}, we have $(C\sqcup X)^\mathcal{I}=C^\mathcal{I} \cup X^\mathcal{I}$. Therefore,   $x\in X^\mathcal{I}$.

\noindent ``$\Longleftarrow$". This direction trivially holds since (\ref{neg2}) implies $\neg C^\mathcal{I}=\Delta^\mathcal{I}\setminus C^\mathcal{I}$.
\end{proof}

\noindent\textbf{Existential restriction}. 
The set-theoretical semantics of unqualified existential quantifier $\exists R$ in FOL is defined as all elements $x$ such that there is some $(x,y)\in R$. To rewrite it in categorical language over the category of sets, one can use left adjoint to the pullback (or inverse) functor $p^\ast$ of the projection $p$   from $\Delta\times \Delta$ to $\Delta$   \cite[Theorem 1, p.57]{saunders92}. More precisely, if we define $p^\ast(X)=\{(x,y)\in \Delta\times \Delta\mid x\in X\}$ for $X\subseteq \Delta$, then it holds that $\exists R\subseteq X$ iff $R\subseteq p^\ast(X)$ ($\dagger$). Hence,   $\exists R$  can be viewed as   left adjoint to $p^\ast(X)$.   To adapt this result to qualified existential quantifier $\exists R.C$, we need to introduce $p_C^\ast(X)=\{(x,y)\in \Delta\times \Delta\mid x\in X \wedge y\in C\}$ for $X\subseteq \Delta$. Then,   we can show that for $X\subseteq \Delta$, $\exists R.C\subseteq X$ iff $R\subseteq p_C^\ast(X)$   ($\dagger\dagger$), i.e  $\exists R.C$ describes the greatest  set including all $x$ such that $(x,y)\in R$ and $y\in C$. To capture the right-hand side of ($\dagger\dagger$), we use a categorical property with a new role object $\mathfrak{R}^\varnothing_{(\exists R.C)}$ such that $\mathfrak{R}^\varnothing_{(\exists R.C)}  \rightarrow R$ and $p_r(\mathfrak{R}^\varnothing_{(\exists R.C)}) \rightarrow  C$ (the superscript $\varnothing$ means $\mathfrak{R}^\varnothing_{(\exists R.C)}$ depends on no universal restriction). For the left-hand side of ($\dagger\dagger$), we use  a second categorical property that expresses maximality of  $\exists R.C$. 

 
\noindent\textit{Formal meaning.}  
\begin{enumerate}[leftmargin=!, labelwidth=1em, align=left]
\item[\mylabel{org-exists-filler}{$\mathsf{P_\exists^\mathfrak{R}}$}$:$]
  $\exists R.C\in \mathsf{Ob}(\mathscr{C}_c)$ implies  $\mathfrak{R}^\varnothing_{(\exists R.C)}  \rightarrow R\in \mathsf{Hom}(\mathscr{C}_r), p_r(\mathfrak{R}^\varnothing_{(\exists R.C)})  \rightarrow C$, $p_l(\mathfrak{R}^\varnothing_{(\exists R.C)})\leftrightarrows \exists R.C\in \mathsf{Hom}(\mathscr{C}_c)$
  
\item[\mylabel{org-exists}{$\mathsf{P_\exists}$}$:$]  $\exists R.C\in \mathsf{Ob}(\mathscr{C}_c), R'\rightarrow R\in \mathsf{Hom}(\mathscr{C}_r)$,  $p_r(R') \rightarrow C \in \mathsf{Hom}(\mathscr{C}_c) \text{ imply }  p_l(R')\rightarrow \exists R.C \in \mathsf{Hom}(\mathscr{C}_c)$ 
\end{enumerate}

\noindent\textit{Informal meaning.} A knowledge engineer interprets $\exists R.C$ as all  elements having a relation $R$ to an element in $C$ in the usual set semantics while he/she can interpret $\exists R.C$ as  the left projection of the most general role object $R'$ but more specific than $R$ such that the right projection of $R'$ is more specific than $C$ in  the categorical semantics.

\medskip

Intuitively, $\mathfrak{R}^\varnothing_{(\exists R.C)}$  represents all pairs $x,y$ such that $(x,y)\in R^\mathcal{I}$ and $y\in C^\mathcal{I}$ in the set-theoretical semantics. If we replace ``$\rightarrow$" in \ref{org-exists-filler}-\ref{org-exists}  with ``$\sqsubseteq$", then  the modified \ref{org-exists-filler}-\ref{org-exists} characterize   the usual set semantics of $\exists$ under $\mathcal{I}$  in the following sense.

\begin{lemma}\label{lem:sem-exists}
Assume that 
\begin{align}
&(\mathfrak{R}^\varnothing_{(\exists R.C)} )^\mathcal{I} \subseteq R^\mathcal{I} \text{ and }p_r(\mathfrak{R}^\varnothing_{(\exists R.C)} )^\mathcal{I} \subseteq C^\mathcal{I} \label{ex-arrow2}
\end{align}
It holds that
\begin{align}
& p_l(\mathfrak{R}^\varnothing_{(\exists R.C)} )^\mathcal{I}\text{=}\{x\in \Delta^\mathcal{I}\mid \exists y\in C^\mathcal{I} \wedge (x,y)\in R^\mathcal{I}\} \label{ex-arrow3}\\
&\text{ iff } {R'}^\mathcal{I}\subseteq R^\mathcal{I}, p_r(R') \subseteq C^\mathcal{I}\text{ imply } p_l(R')^\mathcal{I}\subseteq p_l(\mathfrak{R}^\varnothing_{(\exists R.C)})^\mathcal{I} \label{ex-arrow4}
\end{align}
\end{lemma}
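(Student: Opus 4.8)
The plan is to fix the abbreviations $E=\{x\in\Delta^\mathcal{I}\mid \exists y\in C^\mathcal{I}\wedge (x,y)\in R^\mathcal{I}\}$ for the right-hand set of~(\ref{ex-arrow3}) and $P=\Pi_\ell(\role{(\exists R.C)})^\mathcal{I}$, so that the target~(\ref{ex-arrow3}) reads simply $P=E$. I would then prove the two directions of the equivalence separately, keeping the standing hypothesis~(\ref{ex-arrow2}) available throughout, in the same style as Lemmas~\ref{lem:sem-conj} and~\ref{lem:sem-disj}.

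For ``$\Longrightarrow$'', I assume $P=E$ and check~(\ref{ex-arrow4}). Given any $R'$ with ${R'}^\mathcal{I}\subseteq R^\mathcal{I}$ and $\Pi_r(R')^\mathcal{I}\subseteq C^\mathcal{I}$, I take $x\in\Pi_\ell(R')^\mathcal{I}$, choose a witness $y$ with $(x,y)\in{R'}^\mathcal{I}$, and chase the inclusions: $(x,y)\in R^\mathcal{I}$ and $y\in\Pi_r(R')^\mathcal{I}\subseteq C^\mathcal{I}$ put $x$ into $E=P$. Hence $\Pi_\ell(R')^\mathcal{I}\subseteq P$, which is the conclusion of~(\ref{ex-arrow4}).

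For ``$\Longleftarrow$'', I assume~(\ref{ex-arrow4}) and prove $P=E$ by double inclusion. The inclusion $P\subseteq E$ needs only the standing hypothesis~(\ref{ex-arrow2}): any $x\in P$ admits a witness $y$ with $(x,y)\in\role{(\exists R.C)}^\mathcal{I}\subseteq R^\mathcal{I}$ and $y\in\Pi_r(\role{(\exists R.C)})^\mathcal{I}\subseteq C^\mathcal{I}$, so $x\in E$. The reverse inclusion $E\subseteq P$ is where~(\ref{ex-arrow4}) is used: for $x\in E$ with witness $y\in C^\mathcal{I}$, $(x,y)\in R^\mathcal{I}$, I instantiate~(\ref{ex-arrow4}) at the singleton role $R'$ defined by ${R'}^\mathcal{I}=\{(x,y)\}$. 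Both premises hold, namely ${R'}^\mathcal{I}=\{(x,y)\}\subseteq R^\mathcal{I}$ and $\Pi_r(R')^\mathcal{I}=\{y\}\subseteq C^\mathcal{I}$, so the conclusion yields $\{x\}=\Pi_\ell(R')^\mathcal{I}\subseteq P$, i.e.\ $x\in P$.

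The only non-mechanical step is the choice of the singleton witness role in the reverse inclusion; this is the device that turns the universally quantified maximality statement~(\ref{ex-arrow4}) into pointwise membership, mirroring the singleton trick used in Lemma~\ref{lem:sem-rolecomp}. The one subtlety to flag is that the quantifier over $R'$ in~(\ref{ex-arrow4}) must be read set-theoretically, ranging over arbitrary binary relations on $\Delta^\mathcal{I}$ so that singletons are admissible instances, exactly as $S_1,S_2$ range over subrelations of $S^\mathcal{I}$ in Lemma~\ref{lem:sem-rolecomp}; without this reading,~(\ref{ex-arrow4}) would be too weak to force $E\subseteq P$. I expect no further difficulty, since every remaining step is a straightforward projection-chasing argument.
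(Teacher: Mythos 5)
Your proposal is correct and follows essentially the same argument as the paper's proof: the forward direction is the same witness-chasing via the definition of $\exists R.C$, and the reverse direction uses the identical singleton-role instantiation ${R'}^\mathcal{I}=\{(x,y)\}$ for $E\subseteq P$ together with hypothesis~(\ref{ex-arrow2}) for $P\subseteq E$. Your explicit remark that $R'$ must range over arbitrary subrelations of $R^\mathcal{I}$ (so singletons are admissible) is exactly the reading the paper uses implicitly, just as in Lemma~\ref{lem:sem-rolecomp}.
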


\begin{proof}
\noindent ``$\Longrightarrow$". Assume that (\ref{ex-arrow3}) holds.
For $R'$  such that ${R'}^\mathcal{I}\subseteq R^\mathcal{I}$,  $p_r(R')^\mathcal{I} \subseteq C^\mathcal{I}$, let   $x\in p_l(R')^\mathcal{I} $. There is some  $y\in p_r(R')^\mathcal{I} \subseteq C^\mathcal{I}$ such that $(x,y)\in {R'}^\mathcal{I}\subseteq R^\mathcal{I}$. This implies that $x\in \exists R.C^\mathcal{I}$. By the hypothesis  $\exists R.C^\mathcal{I}=p_l(\mathfrak{R}^\varnothing_{(\exists R.C)} )^\mathcal{I}$, we have  $x\in p_l(\mathfrak{R}^\varnothing_{(\exists R.C)} )^\mathcal{I}$.
    
\noindent ``$\Longleftarrow$". Assume that (\ref{ex-arrow4}) holds. Let  $x\in \{x'\in \Delta^\mathcal{I}\mid  \exists y\in C^\mathcal{I} \wedge (x',y)\in R^\mathcal{I}\}$. This implies that there is some $y\in C^\mathcal{I}$  with $(x,y)\in R^\mathcal{I}$. We take $R'$ such that   ${R'}^\mathcal{I}=\{(x,y)\}$.  We have ${R'}^\mathcal{I}\subseteq {R}^\mathcal{I}$ and  $p_r(R')^\mathcal{I} \subseteq C^\mathcal{I}$. Hence, $x\in p_l(\mathfrak{R}^\varnothing_{(\exists R.C)} )^\mathcal{I}$ by hypothesis.

Let  $x\in p_l(\mathfrak{R}^\varnothing_{(\exists R.C)} )^\mathcal{I}$.  This implies that there is some $y\in p_r(\mathfrak{R}^\varnothing_{(\exists R.C)} )^\mathcal{I}$  with $(x,y)\in (\mathfrak{R}^\varnothing_{(\exists R.C)} )^\mathcal{I}$. By   (\ref{ex-arrow2}) we have   $(x,y)\in R^\mathcal{I}$ and $y\in C^\mathcal{I}$. This implies that $x\in \{x'\in \Delta^\mathcal{I}\mid  \exists y\in C^\mathcal{I} \wedge (x',y)\in R^\mathcal{I}\}$.
\end{proof}

\noindent\textbf{Universal restriction}. If  unqualified universal quantifier  $\forall R$ in FOL represents all elements $x$ such that $(x,y)\in R$ for all $y$, then \cite{saunders92} showed that  it can be viewed as right adjoint to the pullback  functor $p^\ast$ of the projection $p$   from $\Delta\times \Delta$ to $\Delta$ \cite[Theorem 1, p.57]{saunders92}.
However, it is not relevant to exploit this rewriting   since  $\forall R$ in FOL  is semantically different from   $\forall R.C$  in DLs while the latter can be expressed using $\exists$ and $\neg$ thanks to  $\forall R.C\equiv \neg \exists R.\neg C$ under the set-theoretical semantics. Hence, we define  the categorical semantics of $\forall R.C$ using the categorical semantics of $\neg$ and $\exists$ as follows.

\noindent\textit{Formal meaning.}
\begin{enumerate}[leftmargin=0cm, itemindent=0.8cm]
\item[\mylabel{org-forall-weakened}{$\mathsf{P_\forall^-}$}$:$] $\forall R.C\in \mathsf{Ob}(\mathscr{C}_c)$ implies   $\forall R.C \leftrightarrows \neg \exists R.\neg C\in  \mathsf{Hom}(\mathscr{C}_c)$
\end{enumerate}

\noindent It it straightforward to  show that if we replace ``$\leftrightarrows$" in \ref{org-forall-weakened}   with  ``$\equiv$" (i.e., $\sqsubseteq$ and $\sqsupseteq$), then  the modified  \ref{org-forall-weakened} characterizes the usual set semantics of $\forall$ under $\mathcal{I}$. 
  
\noindent If we use only the properties from \ref{org-syntax-role} to \ref{org-forall-weakened}
 to characterize the categorical semantics of logical constructors, some interactions between them are missing. As indicated in Example~\ref{ex:distrib}, distributivity of conjunction over disjunction is not provable   from the categorical properties related to conjunction and disjunction. In this case, we say that the distributivity is   \emph{independent} from   \ref{org-conj-d}-\ref{org-disj-c}  while it is prouvable from conjunction and disjunction under the set-theoretical semantics. This phenomenon happens again to  the interaction between existential and universal restrictions. 
 For this reason, we need further categorical properties to be able to establish the equivalence between set-theoretical and categorical semantics. A knowledge engineer can consider them as options rather than  consequences of the semantics resulting from their previous choices of constructors.

\noindent\textbf{Interaction between $\sqcap$ and $\sqcup$}.  In a DL allowing for both $\sqcap$ and $\sqcup$, the interaction between them can be expressed by  the following property.

\begin{enumerate}[leftmargin=0cm, itemindent=0.8cm]
\item[\mylabel{org-dist}{$\mathsf{P_\sqcap^\sqcup}$}$:$] $X \rightarrow C, X\rightarrow D\sqcup E\in \mathsf{Hom}(\mathscr{C}_c)$ implies $X \rightarrow (C \sqcap  D ) \sqcup (C\sqcap E)\in \mathsf{Hom}(\mathscr{C}_c)$. 
\end{enumerate}

\noindent\textbf{Interaction between $\exists$ and $\circ$}.  
\begin{enumerate}[leftmargin=!, labelwidth=1em, align=left]
\item[\mylabel{org-exists-comp}{$\mathsf{P_\exists^{\circ}}$}$:$]  $X\rightarrow \exists S.C,  C\rightarrow \exists S.D \in \mathsf{Hom}(\mathscr{C}_c)$, $S\circ S\rightarrow S\in \mathsf{Hom}(\mathscr{C}_r)$ imply  $X\rightarrow \exists S.D \in \mathsf{Hom}(\mathscr{C}_c)$.
\end{enumerate}

\noindent\textbf{Interaction between $\exists$, $\forall$ and $\circ$}.  
\begin{enumerate}[leftmargin=!, labelwidth=1em, align=left]
\item[\mylabel{org-exists-forall}{$\mathsf{P_\exists^{\forall}}$}$:$]  $X\rightarrow p_l(\mathfrak{R}^\mathcal{C}_{(\exists P.C)}), X\rightarrow \forall R.D \in \mathsf{Hom}(\mathscr{C}_c)$, $P\rightarrow R\in \mathsf{Hom}(\mathscr{C}_r)$ 
imply  $\mathfrak{R}^{\mathcal{C}\cup\{\forall R.D\}}_{(\exists P.C)}\rightarrow \mathfrak{R}^{\mathcal{C}}_{(\exists P.C)}\in \mathsf{Hom}(\mathscr{C}_r)$, $    X\leftrightarrows p_l(\mathfrak{R}^{\mathcal{C}\cup\{\forall R.D\}}_{(\exists P.C)})$, $p_r(\mathfrak{R}^{\mathcal{C}\cup\{\forall R.D\}}_{(\exists P.C)})\rightarrow D \in \mathsf{Hom}(\mathscr{C}_c)$.

\item[\mylabel{org-exists-forall-ind}{$\mathsf{P_\exists^{\forall I}}$}$:$]
$\{(a,b)\}\rightarrow P \rightarrow R\in \mathsf{Hom}(\mathscr{C}_r)$ and $ \{a\}\rightarrow \forall R.D  \in \mathsf{Hom}(\mathscr{C}_c)$ imply $\{b\}\rightarrow  D\in \mathsf{Hom}(\mathscr{C}_c)$.
  
\item[\mylabel{org-exists-forall-comp}{$\mathsf{P_{\exists\circ}^{\forall}}$}$:$]  $X\rightarrow  p_l(\mathfrak{R}^\mathcal{C}_{(\exists P.C)})  , X\rightarrow  \forall R.D \in\mathsf{Hom}(\mathscr{C}_c),  P\rightarrow S, S\circ S\rightarrow S, S\rightarrow R \in \mathsf{Hom}(\mathscr{C}_r)$  imply $\mathfrak{R}^{\mathcal{C}\cup \{\forall R.D\}}_{(\exists P.C)}\rightarrow \mathfrak{R}^{\mathcal{C}}_{(\exists P.C)}\in \mathsf{Hom}(\mathscr{C}_r)$, $X \leftrightarrows  p_l(\mathfrak{R}^{\mathcal{C}\cup \{\forall R.D\}}_{(\exists P.C)}), p_r(\mathfrak{R}^{\mathcal{C}\cup \{\forall R.D\}}_{(\exists P.C)}) \rightarrow   \forall S.D \in \mathsf{Hom}(\mathscr{C}_c)$.  

\item[\mylabel{org-exists-forall-comp-ind}{$\mathsf{P_{\exists\circ}^{\forall I}}$}$:$] $\{(a,b)\}\rightarrow P \rightarrow S \rightarrow R, S\circ S\rightarrow S\in \mathsf{Hom}(\mathscr{C}_r)$ and $ \{a\}\rightarrow \forall R.D  \in \mathsf{Hom}(\mathscr{C}_c)$ imply $\{b\}\rightarrow  \forall S.D\in \mathsf{Hom}(\mathscr{C}_c)$.
\end{enumerate}

We can observe that \ref{org-exists-forall} and  \ref{org-exists-forall-comp} lead to adding a fresh role object $\mathfrak{R}^{\mathcal{C}' }_{(\exists P.C)}$ that  depends only on    $\exists P.C$ and a set of universal restrictions $\mathcal{C}'=\mathcal{C}\cup \{\forall R.D\}$. That means that if there are arrows $X\rightarrow   p_l(\mathfrak{R}^{\{\forall R.D_1\}}_{(\exists P.C)})$,    $X\rightarrow \forall R.D_2$,  $P\rightarrow R$  in the categories, then it is necessary to introduce a fresh role object $\mathfrak{R}^{\mathcal{C}}_{(\exists P.C)}$ with $\mathcal{C}=\{\forall R.D_1, \forall R.D_2\}$ and an arrow $p_r(\mathfrak{R}^{\mathcal{C}}_{(\exists P.C)})\rightarrow D_2$   according to \ref{org-exists-forall}. 
Therefore, $\mathfrak{R}^{\mathcal{C}}_{(\exists P.C)}$ represents pairs $(x,y)\in P^\mathcal{I}$ such that    $y \in C^\mathcal{I} \cap     D_1^\mathcal{I} \cap     D_2^\mathcal{I}$ in the usual set semantics. Furthermore, if some $Y$ occurs in the place of  $X$ in the arrows above, i.e   $Y\rightarrow \exists P.C\leftrightarrows p_l(\mathfrak{R}^\varnothing_{(\exists P.C)})$, $Y\rightarrow \forall R.D_1$ and   $Y\rightarrow \forall R.D_2$, then no new fresh role object is needed.

If we replace ``$\rightarrow$" in \ref{org-dist}, \ref{org-exists-comp}, 
\ref{org-exists-forall}, \ref{org-exists-forall-ind},   \ref{org-exists-forall-comp}, and  \ref{org-exists-forall-comp-ind}  with ``$\sqsubseteq$", then  the modified properties  trivially hold  in  the usual set semantics   under $\mathcal{I}$ in the following sense.

\begin{lemma}\label{lem:sem-interactions} It holds that 
    \begin{enumerate}[leftmargin=0cm, itemindent=0.5cm]
        \item\label{lem:sem-interactions:0} If $X^\mathcal{I}\subseteq C^\mathcal{I}, X^\mathcal{I} \subseteq (D^\mathcal{I}\cup E^\mathcal{I})$, then   $X^\mathcal{I}\subseteq (C^\mathcal{I}\cap D^\mathcal{I})\cup (C^\mathcal{I}\cap E^\mathcal{I})$

        \item\label{lem:sem-interactions:1} If $X^\mathcal{I}\subseteq   \exists S.C^\mathcal{I}$, $C \subseteq   \exists S.D^\mathcal{I}, S\circ S^\mathcal{I} \subseteq  S^\mathcal{I}$, then $X^\mathcal{I}\subseteq   \exists S.D^\mathcal{I}$.

        \item\label{lem:sem-interactions:2} If $X^\mathcal{I}\subseteq  p_l(\mathfrak{R}^\mathcal{C}_{\exists P.C})^\mathcal{I}, X^\mathcal{I} \subseteq   \forall R.D^\mathcal{I}$, $P^\mathcal{I}\subseteq  R^\mathcal{I}$, then there exists a fresh role  $\mathfrak{R}^{\mathcal{C}\cup \{\forall R.D\}}_{(\exists P.C)}$ such that         $(\mathfrak{R}^{\mathcal{C}\cup \{\forall R.D\}}_{(\exists P.C)})^\mathcal{I} \subseteq (\mathfrak{R}^{\mathcal{C}}_{(\exists P.C)})^\mathcal{I}$, $X^\mathcal{I}=  p_l(\mathfrak{R}^{\mathcal{C}\cup \{\forall R.D\}}_{(\exists P.C)})^\mathcal{I}$, and  $p_r(\mathfrak{R}^{\mathcal{C}\cup \{\forall R.D\}}_{(\exists P.C)})^\mathcal{I}  \subseteq D^\mathcal{I}$.

        \item\label{lem:sem-interactions:22} If $\{(a,b)\} \in   P^\mathcal{I}, P^\mathcal{I} \subseteq R^\mathcal{I}$, and $\{a\}\in \forall R.D^\mathcal{I}$, then $b \in D^\mathcal{I}$.

        \item\label{lem:sem-interactions:3} If $X^\mathcal{I} \subseteq   p_l(\mathfrak{R}^\mathcal{C}_{\exists P.C })^\mathcal{I}, X \subseteq   \forall R.D^\mathcal{I}$, $P^\mathcal{I}\subseteq S^\mathcal{I}$, $S\circ S^\mathcal{I}\subseteq S^\mathcal{I}$, $S^\mathcal{I}\subseteq R^\mathcal{I}$, then there exists a fresh role $\mathfrak{R}^{\mathcal{C}\cup \{\forall R.D\}}_{(\exists P.C)}$ such that   $(\mathfrak{R}^{\mathcal{C}\cup \{\forall R.D\}}_{(\exists P.C)})^\mathcal{I} \subseteq (\mathfrak{R}^{\mathcal{C}}_{(\exists P.C)})^\mathcal{I}$, $X^\mathcal{I} = p_l(\mathfrak{R}^{\mathcal{C}\cup \{\forall R.D\}}_{(\exists P.C)})^\mathcal{I}$, and        
        $p_r(\mathfrak{R}^{\mathcal{C}\cup \{\forall R.D\}}_{(\exists P.C)})^\mathcal{I}  \subseteq \forall R.D^\mathcal{I}$.

        \item\label{lem:sem-interactions:32} If $\{(a,b)\} \in   P^\mathcal{I}, P^\mathcal{I} \subseteq S^\mathcal{I} \subseteq R^\mathcal{I}$, $(S\circ S) ^\mathcal{I} \subseteq S^\mathcal{I}$, and  $\{a\}\in \forall R.D^\mathcal{I}$, then $b \in \forall S.D^\mathcal{I}$.
    \end{enumerate}
\end{lemma}
\begin{proof}  Items~\ref{lem:sem-interactions:0}, 
\ref{lem:sem-interactions:1},  
\ref{lem:sem-interactions:22} and \ref{lem:sem-interactions:32} are trivial.  To prove Item~\ref{lem:sem-interactions:2}, we define $\mathfrak{R}^{\mathcal{C}\cup \{\forall R.D\}}_{(\exists P.C)}$ such that  $(\mathfrak{R}^{\mathcal{C}\cup \{\forall R.D\}}_{(\exists P.C)})^\mathcal{I}\subseteq  (\mathfrak{R}^{\mathcal{C}}_{(\exists P.C)})^\mathcal{I}$ and $p_l(\mathfrak{R}^{\mathcal{C}\cup \{\forall R.D\}}_{(\exists P.C)})^\mathcal{I}=X^\mathcal{I}$. 
We have to show $p_r(\mathfrak{R}^{\mathcal{C}\cup \{\forall R.D\}}_{(\exists P.C)})^\mathcal{I}  \subseteq D^\mathcal{I}$. Let $y\in  p_r(\mathfrak{R}^{\mathcal{C}\cup \{\forall R.D\}}_{(\exists P.C)})^\mathcal{I}$. By definition, there is some $(x,y)\in  (\mathfrak{R}^{\mathcal{C}\cup \{\forall R.D\}}_{(\exists P.C)})^\mathcal{I}\subseteq (\mathfrak{R}^{\mathcal{C}}_{(\exists P.C)})^\mathcal{I} \subseteq P^\mathcal{I} \subseteq R^\mathcal{I}$
 and $x\in X^\mathcal{I}$. Due to $X^\mathcal{I}\subseteq \forall R.D^\mathcal{I}$, we have $y\in D^\mathcal{I}$.   To prove Item~\ref{lem:sem-interactions:3},  we define $\mathfrak{R}^{\mathcal{C}\cup \{\forall R.D\}}_{(\exists P.C)}$ with the same conditions. We have to show $p_r(\mathfrak{R}^{\mathcal{C}\cup \{\forall R.D\}}_{(\exists P.C)})^\mathcal{I}  \subseteq \forall R. D^\mathcal{I}$ with the hypothesis $P^\mathcal{I}\subseteq S^\mathcal{I}$, $S\circ S^\mathcal{I}\subseteq S^\mathcal{I}$, $S^\mathcal{I}\subseteq R^\mathcal{I}$.  Let $y\in  p_r(\mathfrak{R}^{\mathcal{C}\cup \{\forall R.D\}}_{(\exists P.C)})^\mathcal{I}$. Using the same argument above, we obtain $x\in X^\mathcal{I}$, $(x,y)\in P^\mathcal{I} \subseteq S^\mathcal{I} \subseteq R^\mathcal{I}$. Let $(y,z)\in S^\mathcal{I}$. This implies that $(x,z)\in S^\mathcal{I} \subseteq R^\mathcal{I}$. Due to $x\in \forall R.D^\mathcal{I}$, we have $z\in D^\mathcal{I}$.
\end{proof}

\begin{table}[H]
  \centering
  \begin{tabular}{|l|l|l|}
  \hline
   Constructor    &  Set-theoretical semantics & Categorical semantics\\
   \hline 
   $X\sqsubseteq Y$  & $X^\mathcal{I}\subseteq Y^\mathcal{I}$ & $X$ is more specific  than $Y$ (\ref{org-axioms}) \\ 
   \hline
    $C \sqcap D$  & intersection of $C^\mathcal{I}$ and $D^\mathcal{I}$ & the greatest object smaller than $C$ and $D$ (\ref{org-conj-d}, \ref{org-conj-c})\\
  \hline  
   $C \sqcup D$  & union of $C^\mathcal{I}$ and $D^\mathcal{I}$ & the smallest object greater than  $C$ and $D$ (\ref{org-disj-d}, \ref{org-disj-c})\\
   \hline
   $\neg C$  & complement of  $C^\mathcal{I}$  &  \makecell[l]{$C\sqcap \neg C$ is the  smallest object, and \\  $C\sqcup \neg C$ is the greatest object (\ref{org-neg-sqcap}, \ref{org-neg-sqcup})}      \\
   \hline
   $\exists R.C$  & \makecell[l]{elements having a relation \\ $R$  with an element of  $C^\mathcal{I}$}  &   \makecell[l]{left projection of the most general object $R'$ but \\
   more specific than $R$ such that  right projection of \\ $R'$ is more specific than $C$ (\ref{org-exists-filler}, \ref{org-exists})} \\
   \hline
    $\forall R.C$  & \makecell[l]{elements having a relation  \\ $R$ only with  elements of  $C^\mathcal{I}$}  &  isomorphic to $\neg (\exists  R.\neg C)$ (\ref{org-forall-weakened}) \\
   \hline
  \end{tabular}\caption{Correspondences of the two semantics}
  \label{tab:setsem-catsem}
\end{table}
  
We have so far introduced a set of necessary categorical properties to characterize  the semantics of all $\mathcal{SH}$ constructors. All Lemmas~\ref{lem:sem-conj}-\ref{lem:sem-interactions} ensure that if an arrow $X\rightarrow Y$  is \emph{directly derived}  from the categorical properties \ref{org-syntax-role}-\ref{org-exists-forall-comp} (i.e it occurs in the conclusion of a property), then $X^\mathcal{I}\subseteq Y^\mathcal{I}$ holds for all interpretation $\mathcal{I}$ under the set-theoretical semantics. In the sequel, we use   $\psh$ to denote all categorical properties, namely   \ref{org-syntax-role}, \ref{org-syntax-concept},\ref{org-syntax-functor} (for general concept and role categories), \ref{org-axioms} (for ontology axioms), 
\ref{org-conj-d}, \ref{org-conj-c} (for conjunction), \ref{org-disj-d}, \ref{org-disj-c} (for disjunction), \ref{org-neg-sqcap}, \ref{org-neg-sqcup},  \ref{org-neg-bot}, \ref{org-neg-top} (for negation),  \ref{org-exists-filler}, \ref{org-exists} (for existential restriction), \ref{org-forall-weakened} (for universal restriction), \ref{org-dist} (for the interaction between conjunction and disjunction),  \ref{org-exists-comp} (for the interaction between  existential restriction and transitive role), 
\ref{org-exists-forall} (for the interaction between  existential and universal restrictions), \ref{org-exists-forall-ind} (for the interaction between  existential, universal restrictions and individuals), \ref{org-exists-forall-comp} (for the interaction between  existential and universal restrictions and transitive roles) and \ref{org-exists-forall-comp-ind} (for the interaction between  existential and universal restrictions, transitive roles and individuals).
 The following lemmas provide   properties that are \emph{indirectly derived} from  $\psh$.

\begin{lemma}\label{lem:commu-asso-custom}
 The following properties hold in a concept  category $\mathscr{C}_c$.
\begin{align}
    &C\sqcup D\leftrightarrows D\sqcup C \label{com01}\\
    &C\sqcap D\leftrightarrows D\sqcap C \label{com02}\\
    &(C\sqcup D)\sqcup E\leftrightarrows C\sqcup(D\sqcup E) \label{asso01}\\
    &(C\sqcap D)\sqcap E\leftrightarrows C\sqcap(D\sqcap E) \label{asso02}\\
    &C\sqcap C\leftrightarrows C \label{idcap}\\
    &C\sqcup C\leftrightarrows C \label{idcup}
\end{align}

\noindent In other words, $\sqcup$ and $\sqcap$ are commutative, associative and idempotent.
\end{lemma}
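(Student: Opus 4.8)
The plan is to derive every one of \eqref{com01}--\eqref{idcup} purely from the universal mapping properties of product and coproduct, namely \ref{org-conj-d}--\ref{org-conj-c} for $\sqcap$ and \ref{org-disj-d}--\ref{org-disj-c} for $\sqcup$, together with the identity arrows and the composition closure guaranteed by \ref{org-syntax}. These are exactly the order-theoretic versions of the classical statements that products and coproducts are commutative, associative and idempotent up to isomorphism; here ``isomorphism'' is replaced by the two-sided arrow $\leftrightarrows$. A key point is that the standing hypothesis of the lemma---that all objects named in each arrow already belong to $\mathscr{C}_c$---is precisely what licenses the composition rules \ref{org-conj-c} and \ref{org-disj-c}, whose applicability requires the relevant product or coproduct object to be present in $\mathsf{Ob}(\mathscr{C}_c)$.

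I would dispatch idempotency and commutativity first, as each needs a single application of a decomposition rule and a single application of the matching composition rule. For \eqref{idcap}, $C\sqcap C\rightarrow C$ is immediate from \ref{org-conj-d}, while $C\rightarrow C\sqcap C$ follows by feeding the identity $C\rightarrow C$ into both premises of \ref{org-conj-c}; \eqref{idcup} is the exact dual using \ref{org-disj-d} and \ref{org-disj-c}. For \eqref{com02}, \ref{org-conj-d} yields $C\sqcap D\rightarrow D$ and $C\sqcap D\rightarrow C$, and \ref{org-conj-c} then assembles these into $C\sqcap D\rightarrow D\sqcap C$; the reverse arrow is obtained by swapping the roles of $C$ and $D$, and \eqref{com01} is the dual argument for $\sqcup$.

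Associativity is the only case requiring a short chain of compositions, so I would treat \eqref{asso01} in detail and obtain \eqref{asso02} by duality. To build $(C\sqcup D)\sqcup E\rightarrow C\sqcup(D\sqcup E)$ I would first produce the three ``leg'' arrows $C\rightarrow C\sqcup(D\sqcup E)$, $D\rightarrow C\sqcup(D\sqcup E)$ and $E\rightarrow C\sqcup(D\sqcup E)$: the first is direct from \ref{org-disj-d}, and the other two are compositions of an injection into $D\sqcup E$ (by \ref{org-disj-d}) with the injection $D\sqcup E\rightarrow C\sqcup(D\sqcup E)$ (again \ref{org-disj-d}), using composition closure. Applying \ref{org-disj-c} to the $C$-leg and $D$-leg gives $C\sqcup D\rightarrow C\sqcup(D\sqcup E)$, and a final application of \ref{org-disj-c} to this arrow together with the $E$-leg delivers $(C\sqcup D)\sqcup E\rightarrow C\sqcup(D\sqcup E)$. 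The opposite arrow is symmetric, and \eqref{asso02} follows by replacing each coproduct injection with a product projection and each application of \ref{org-disj-c} with one of \ref{org-conj-c}.

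I do not expect a genuine obstacle: the whole lemma is routine diagram chasing. The only points demanding care are bookkeeping ones---checking before every use of \ref{org-conj-c} or \ref{org-disj-c} that its object-existence precondition holds (supplied by the lemma's hypothesis), and invoking composition closure from \ref{org-syntax} at each step of the associativity chain---together with the discipline of proving only one direction of each $\leftrightarrows$ and citing duality or symmetry for its companion, so as not to double the length of the argument.
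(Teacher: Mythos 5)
Your proposal is correct and follows exactly the route the paper indicates: the paper's own proof is a one-line remark that the lemma "can be straightforwardly showed using" \ref{org-syntax} through \ref{org-disj-c}, and your argument is precisely the routine expansion of that remark via the product/coproduct rules \ref{org-conj-d}--\ref{org-disj-c}, identity arrows, and composition closure. Your additional care about the object-existence preconditions of \ref{org-conj-c} and \ref{org-disj-c} (supplied by the lemma's hypothesis) is a sound and welcome detail, not a deviation.
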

\begin{proof} The lemma can  be straightforwardly showed using  \ref{org-syntax-role} (composition of arrows), \ref{org-conj-c}, \ref{org-conj-d}, \ref{org-disj-c}, \ref{org-disj-d} if all objects involved in the arrows belong to the category. 
\end{proof}

\begin{lemma}\label{lem:de-morgan}
The following properties hold in $\mathscr{C}_c\tuple{\mathcal{O}}$. 
\begin{align}
&C \leftrightarrows  \neg \neg C \label{neg-double}\\ 
&C\rightarrow \neg D \text{ iff }  D\rightarrow \neg C \label{neg-dual}\\
&C \sqcap D \rightarrow \bot \text{ iff } C\rightarrow \neg D \text{ (or } D\rightarrow \neg C)\label{neg-disjoint}\\
&\neg (C \sqcap D) \leftrightarrows   \neg C \sqcup  \neg D \label{neg-conj}\\
&\neg (C \sqcup D) \leftrightarrows   \neg C \sqcap  \neg D \label{neg-disj}\\
&\top\rightarrow\neg C\sqcup D\text{ iff } C\rightarrow D
\label{top-sub}\\
&D\rightarrow E \text{ implies }C\sqcup D\rightarrow C\sqcup E,  
    C\sqcap D\rightarrow C\sqcap E \label{trans}
\end{align}
\end{lemma}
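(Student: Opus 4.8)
The plan is to derive the seven arrows in a dependency order, bootstrapping each from the product/coproduct universal properties \ref{org-conj-d}--\ref{org-disj-c}, the negation properties \ref{org-neg-neg}--\ref{org-neg-top}, and the commutativity/associativity facts of Lemma~\ref{lem:commu-asso-custom}. I would treat them in the order (\ref{neg-double}), (\ref{neg-disjoint}), (\ref{neg-dual}), (\ref{trans}), (\ref{top-sub}), (\ref{neg-disj}), (\ref{neg-conj}), since each later item reuses earlier ones.

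For double negation (\ref{neg-double}), the arrow $C\rightarrow\neg\neg C$ comes from applying \ref{org-neg-bot} to $\neg C\sqcap C\rightarrow\bot$ (which is \ref{org-neg-neg} read through commutativity (\ref{com02})), and $\neg\neg C\rightarrow C$ from applying \ref{org-neg-top} to $\top\rightarrow\neg C\sqcup C$ (again \ref{org-neg-neg} with (\ref{com01})). For the disjointness criterion (\ref{neg-disjoint}), the forward direction is immediate from \ref{org-neg-bot} after a commutation, and the converse composes $C\sqcap D\rightarrow C\rightarrow\neg D$ with $C\sqcap D\rightarrow D$ through \ref{org-conj-c} to land in $\neg D\sqcap D$, which maps to $\bot$ by \ref{org-neg-neg}. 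The duality (\ref{neg-dual}) is then just (\ref{neg-disjoint}) together with commutativity of $\sqcap$.

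Monotonicity (\ref{trans}) is routine: for $\sqcup$ one composes $D\rightarrow E\rightarrow C\sqcup E$ with $C\rightarrow C\sqcup E$ and applies \ref{org-disj-c}; dually for $\sqcap$ via \ref{org-conj-d} and \ref{org-conj-c}. Property (\ref{top-sub}) then follows: the forward direction applies \ref{org-neg-top} to $\top\rightarrow\neg C\sqcup D$ and precomposes with (\ref{neg-double}), while the converse transports $\top\rightarrow\neg C\sqcup C$ along $C\rightarrow D$ using (\ref{trans}). The law (\ref{neg-disj}) is then purely formal: the inclusion $\neg(C\sqcup D)\rightarrow\neg C\sqcap\neg D$ follows by \ref{org-conj-c} once $\neg(C\sqcup D)\rightarrow\neg C$ and $\neg(C\sqcup D)\rightarrow\neg D$ are obtained by dualizing $C\rightarrow C\sqcup D\rightarrow\neg\neg(C\sqcup D)$ through (\ref{neg-dual}), while the reverse is reduced by (\ref{neg-dual}) to $C\sqcup D\rightarrow\neg(\neg C\sqcap\neg D)$, which \ref{org-disj-c} yields from the projections $\neg C\sqcap\neg D\rightarrow\neg C$ and $\neg C\sqcap\neg D\rightarrow\neg D$.

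The main obstacle is the remaining law (\ref{neg-conj}), specifically the inclusion $\neg(C\sqcap D)\rightarrow\neg C\sqcup\neg D$: this is the only step genuinely requiring distributivity \ref{org-dist}. I would reduce it via \ref{org-neg-top} to proving $\top\rightarrow(C\sqcap D)\sqcup(\neg C\sqcup\neg D)$. The key sublemma is $D\rightarrow(C\sqcap D)\sqcup\neg C$, obtained by forming $D\rightarrow D\sqcap(C\sqcup\neg C)$ with \ref{org-conj-c} and $\top\rightarrow C\sqcup\neg C$, distributing through \ref{org-dist} to $(D\sqcap C)\sqcup(D\sqcap\neg C)$, and collapsing the summands along $D\sqcap C\leftrightarrows C\sqcap D$ and $D\sqcap\neg C\rightarrow\neg C$ via (\ref{trans}) and \ref{org-disj-c}. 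Feeding this sublemma and $\neg D\rightarrow\neg D$ into \ref{org-disj-c} starting from $\top\rightarrow D\sqcup\neg D$, and reassociating with (\ref{asso01}), produces the required arrow. The opposite inclusion $\neg C\sqcup\neg D\rightarrow\neg(C\sqcap D)$ is easy: dualize $C\sqcap D\rightarrow C\rightarrow\neg\neg C$ and $C\sqcap D\rightarrow D\rightarrow\neg\neg D$ by (\ref{neg-dual})/(\ref{neg-double}) and combine through \ref{org-disj-c}.
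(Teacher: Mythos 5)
Your proof is correct under the lemma's ambient hypotheses, and for items (\ref{neg-double}), (\ref{neg-disjoint}), (\ref{neg-dual}), (\ref{trans}), (\ref{top-sub}) and (\ref{neg-disj}) it is essentially the paper's argument; indeed your converse for (\ref{top-sub}), transporting $\top\rightarrow\neg C\sqcup C$ along $C\rightarrow D$ via (\ref{trans}), is slightly more direct than the paper's route, which detours through (\ref{neg-disjoint}) and (\ref{neg-conj}). The genuine divergence, and the one claim you get wrong, concerns (\ref{neg-conj}): the inclusion $\neg(C\sqcap D)\rightarrow\neg C\sqcup\neg D$ does \emph{not} require distributivity \ref{org-dist}, and the paper proves it without it. Because this negation satisfies both \ref{org-neg-bot} and \ref{org-neg-top}, double negation (\ref{neg-double}) is an isomorphism and contraposition (via (\ref{neg-dual}) and (\ref{neg-double})) is invertible, so negation behaves classically rather than intuitionistically. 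Concretely: from the injections $\neg C\rightarrow\neg C\sqcup\neg D$ and $\neg D\rightarrow\neg C\sqcup\neg D$, contraposition gives $\neg(\neg C\sqcup\neg D)\rightarrow\neg\neg C\leftrightarrows C$ and likewise $\neg(\neg C\sqcup\neg D)\rightarrow D$; pairing by \ref{org-conj-c} yields $\neg(\neg C\sqcup\neg D)\rightarrow C\sqcap D$, and contraposing once more gives $\neg(C\sqcap D)\rightarrow\neg\neg(\neg C\sqcup\neg D)\leftrightarrows\neg C\sqcup\neg D$. This is exactly the dual of your own distributivity-free treatment of (\ref{neg-disj}), so the asymmetry you posited between the two De Morgan laws is illusory here; it is the Heyting-algebra picture (where only the pseudo-complement \ref{org-neg-bot} is available) that makes $\neg(C\sqcap D)\rightarrow\neg C\sqcup\neg D$ problematic, and that picture does not apply.

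This difference matters beyond elegance. In Section~\ref{sec:newDL} the paper drops \ref{org-dist} (and \ref{org-exists-forall}) to obtain $\mathcal{EL}^{\rightarrow}$, and it explicitly relies on De Morgan's laws and double negation being provable \emph{without} \ref{org-dist} in order to justify restricting attention to ontologies in NNF there. Under your proof, (\ref{neg-conj}) would no longer be available once \ref{org-dist} is removed, and that normalisation argument would break. Your derivation is also heavier on object-existence: it needs $D\sqcap(C\sqcup\neg C)$, $(D\sqcap C)\sqcup(D\sqcap\neg C)$ and $((C\sqcap D)\sqcup\neg C)\sqcup\neg D$ to be present before \ref{org-dist} and \ref{org-disj-c} can fire, whereas the paper's route only manufactures $\neg(\neg C\sqcup\neg D)$ and the double negations. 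So: accept your proof as a proof of the lemma as stated (all of \ref{org-syntax}--\ref{org-exists-forall-comp} are in force at that point), but replace the distributivity argument for (\ref{neg-conj}) by the dual of your (\ref{neg-disj}) argument, and retract the claim that \ref{org-dist} is genuinely needed.
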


\begin{proof}
\begin{enumerate}[wide, labelwidth=!, labelindent=0pt]
\item By  \ref{org-neg-sqcap} and \ref{org-neg-bot}  where $C$ gets $\neg C$ and $X$ gets $C$, we have $C\rightarrow \neg \neg C$. Analogously,  by \ref{org-neg-sqcap} and \ref{org-neg-top}, we obtain  $\neg \neg C \rightarrow C$.  Hence, double negation is proved in $\mathscr{C}_c\tuple{\mathcal{O}}$.

\item  Assume  $C \rightarrow \neg D$ is in the category.  By \ref{org-conj-c}  and \ref{org-neg-sqcap}, we obtain $C\sqcap D \rightarrow D \sqcap \neg D\rightarrow \bot$. By $\mathsf{P_\neg^\bot}$, we have $D \rightarrow \neg C$. In the same way, we can show $C \rightarrow \neg D$ from $D \rightarrow \neg C$.  Hence, (\ref{neg-dual}) is proved. 

\item To prove (\ref{neg-disjoint}), we need $C\sqcap D\rightarrow \bot$ which follows from $C\sqcap D \rightarrow C\rightarrow \neg D$ and $C\sqcap D \rightarrow D$, and thus   $C\sqcap D \rightarrow D\sqcap \neg D\rightarrow \bot$. Conversely, if $D\rightarrow \neg C$, then we have ${C\sqcap D\rightarrow D\rightarrow \neg C}$ and ${C\sqcap D\rightarrow C}$ thus ${C\sqcap D\rightarrow\bot}$. Hence, (\ref{neg-disjoint}) is proved.

\item  We have $C\sqcap D\rightarrow C$ and $C\sqcap D\rightarrow D$. Due to (\ref{neg-dual}), we obtain $\neg C \rightarrow \neg(C\sqcap D)$ and $\neg D \rightarrow \neg(C\sqcap D)$. By \ref{org-disj-c}, we obtain $\neg C \sqcup \neg D \rightarrow \neg(C \sqcap D)$. To prove the inverse, we take arrows $\neg C \rightarrow \neg C  \sqcup  \neg D$ and $\neg D \rightarrow \neg C  \sqcup  \neg D$ from \ref{org-disj-d}. Due to double negation and (\ref{neg-dual}), it follows that $\neg (\neg C  \sqcup  \neg D) \rightarrow C$ and $\neg (\neg C  \sqcup  \neg D) \rightarrow D$. By \ref{org-conj-c}, we obtain $\neg(\neg C \sqcup \neg D) \rightarrow  C \sqcap D$. Due to (\ref{neg-dual}) and double negation, it follows that $\neg(C \sqcap D) \rightarrow  \neg C \sqcup \neg D$. Hence, (\ref{neg-conj})  is proved.

\item Analogously, we can prove (\ref{neg-disj}) by starting with arrows  $\neg C  \sqcap  \neg D  \rightarrow \neg C$ and $\neg C  \sqcap  \neg D \rightarrow  \neg D$ from  \ref{org-conj-d}. Due to (\ref{neg-dual}), we have $C\rightarrow \neg(\neg C  \sqcap  \neg D)$ and $D \rightarrow  \neg(\neg C  \sqcap  \neg D)$. By  \ref{org-disj-c}, we have $C \sqcup D \rightarrow \neg(\neg C  \sqcap  \neg D)$, and by double negation  and (\ref{neg-dual}) we obtain  $(\neg C  \sqcap  \neg D) \rightarrow \neg(C \sqcup D)$. To prove the inverse, we take arrows $C \rightarrow C  \sqcup  D$ and $D \rightarrow C  \sqcup  D$ obtained from \ref{org-disj-c}. Due to  (\ref{neg-dual}), we have $\neg(C  \sqcup  D) \rightarrow \neg C$ and $\neg(C  \sqcup  D) \rightarrow  \neg D$. By \ref{org-conj-c}, we have $\neg(C  \sqcup  D) \rightarrow \neg C \sqcap \neg D$.  

\item To prove (\ref{top-sub}), assume that $C\rightarrow D$. 
 Double negation implies   ${C\rightarrow\neg\neg D}$. From 
(\ref{neg-disjoint}), the previous arrow is equivalent to $C\sqcap\neg 
D\rightarrow\bot$, which in turn is equivalent to $\top\rightarrow\neg 
C\sqcup D$ from Properties~(\ref{neg-dual}), double negation, (\ref{trans}) 
    and (\ref{neg-conj}). For the other direction, we apply \ref{org-neg-top},
    which implies that $\neg\neg C\rightarrow D$ holds, then we apply double negation and
    arrow composition to obtain $C\rightarrow D$.

\item By hypothesis and \ref{org-disj-d}, we have  $D\rightarrow E\rightarrow C\sqcup E$ and $C\rightarrow C\sqcup E$. By \ref{org-disj-c}, we obtain $C \sqcup D\rightarrow C\sqcup E$. In the same way, by hypothesis and \ref{org-conj-d}, we have  $C\sqcup D \rightarrow D\rightarrow E$ and $C\sqcup D \rightarrow C$. By \ref{org-conj-c}, we obtain $C \sqcap D\rightarrow C\sqcap E$.
\end{enumerate}
\end{proof}

\begin{lemma}\label{lem:forall-prop} Assume that all objects involved in the following arrows are present in $\mathscr{C}_c\tuple{\mathcal{O}}$ and $\mathscr{C}_r\tuple{\mathcal{O}}$. The following properties hold. 
\begin{align}
& C \rightarrow D \text{ implies } \exists R.C  \rightarrow \exists R.D \label{exists-sub-filler}
\\
&C \rightarrow D \text{ implies }\forall R.C \longrightarrow  \forall R.D \label{forall-sub-filler}\\ 
&\exists R.C \sqcap \forall R.D \rightarrow  \exists R.(C \sqcap  D)  \label{forall-exists-inter} 
\end{align}
\end{lemma}

\begin{proof} 

    We show (\ref{exists-sub-filler}). We consider two objects $\mathfrak{R}^\varnothing_{(\exists R.C)}$ and $\mathfrak{R}^\varnothing_{(\exists R.D)}$ in $\mathscr{C}_r$. By \ref{org-exists-filler} and the hypothesis, we have $\mathfrak{R}^\varnothing_{(\exists R.C)}\rightarrow R$ and  $p_r(\mathfrak{R}^\varnothing_{(\exists R.C)})\rightarrow C \rightarrow D$. By \ref{org-exists}, we obtain $p_l(\mathfrak{R}^\varnothing_{(\exists R.C)})\rightarrow p_l(\mathfrak{R}^\varnothing_{(\exists R.D)})$.
        We show (\ref{forall-sub-filler}).  Due to (\ref{neg-dual}), $C\rightarrow D$ implies $\neg D \rightarrow \neg C$. By (\ref{exists-sub-filler}), we obtain $\exists R.\neg D\rightarrow \exists R.\neg C$. Again, due to (\ref{neg-dual}), we have $\neg \exists R.\neg C\rightarrow \neg \exists R.\neg D$. From \ref{org-forall-weakened}, it follows $\forall R.C \rightarrow \forall R.D$.
    
    We show (\ref{forall-exists-inter}). We have $X\rightarrow \exists R.C$ and $X\rightarrow \forall R.D$ with $X=\exists R.C\sqcap \forall R.D$ due to \ref{org-conj-d}. By \ref{org-exists-forall} and \ref{org-syntax-functor}, there are arrows $\mathfrak{R}^{\{\forall R.D\}}_{(\exists R.C)}\rightarrow \mathfrak{R}^{\varnothing}_{(\exists R.C)}\rightarrow R$, $X\rightarrow p_l(\mathfrak{R}^{\{\forall R.D\}}_{(\exists R.C)})$,
    $ p_r(\mathfrak{R}^{\{\forall R.D\}}_{(\exists R.C)})\rightarrow p_r(\mathfrak{R}^{\varnothing}_{(\exists R.C)}) \rightarrow C$ and $ p_r(\mathfrak{R}^{\{\forall R.D\}}_{(\exists R.C)})\rightarrow D$. Therefore,  $ p_r(\mathfrak{R}^{\{\forall R.D\}}_{(\exists R.C)})\rightarrow C\sqcap D$ due to \ref{org-conj-c}, and thus   $X\rightarrow  p_l(\mathfrak{R}^{\{\forall R.D\}}_{(\exists R.C)})\rightarrow \exists R.(C\sqcap D)$ due to \ref{org-exists}.

\end{proof}

We can observe that the categorical properties in $\psh$ do not prevent concept and role categories from admitting an arbitrary arrow $X\rightarrow Y$ that is not necessarily provable from  $\psh$, even if $\mathcal{O}\not\models X\sqsubseteq Y$. To define categorical inconsistency of an ontology, we need a particular  concept category such that it  contains only necessary arrows satisfying  the properties in $\psh$. The following theorem shows that such a concept category exists.

\begin{theorem}\label{thm:cat-minimality} Let $\mathcal{O}$ 
be an $\mathcal{SH}$ ontology. There exists a unique minimal (with respect to set inclusion)  concept   category of $\mathcal{O}$ satisfying  all categorical properties in $\psh$. 
\end{theorem}
\begin{proof}  
Since there is an identity arrow  $X\rightarrow X\in \mathsf{Hom}(\mathscr{C})$ for all object $X \in \mathsf{Ob}(\mathscr{C})$, and $X\rightarrow Y\in \mathsf{Hom}(\mathscr{C})$   implies  $X,Y\in \mathsf{Ob}(\mathscr{C})$ for every concept or role category $\mathscr{C}$, it holds that  minimality of $ \mathsf{Hom}(\mathscr{C})$ implies minimality of $\mathsf{Ob}(\mathscr{C})$. Hence, $\mathsf{Hom}(\mathscr{C})$ determines $\mathscr{C}$. We use $\Gamma_c\tuple{\mathcal{O}}$ to denote the set of all concept categories of $\mathcal{O}$  satisfying  all properties in $\psh$. By definition, each $\mathscr{C}'_c\in \Gamma_c\tuple{\mathcal{O}}$ has a corresponding  role category, denoted $\mathscr{C}'_{c,r}$,  involved in properties from $\psh$.
Let $\mathsf{H}_{c}(\mathscr{C}_m)=\displaystyle{\bigcap_{\mathscr{C}'_c\in \Gamma_c\tuple{\mathcal{O}}}\mathsf{Hom}(\mathscr{C}'_c)}$, and  $\mathsf{O}_{c}\tuple{\mathscr{C}_m}$ is the set of all objects involved in all arrows in $\mathsf{H}_{c}(\mathscr{C}_m)$.
We need to prove that     $\mathsf{H}_c(\mathscr{C}_m)$  is the set of arrows of a concept   category $\mathscr{C}_m$ for $\mathcal{O}$ such that  all properties in $\psh$ are satisfied in $\mathscr{C}_m$.  Given an ontology $\mathcal{O}$, we use  $\mathbf{C}_0$, $\mathbf{R}_0$ and $\mathbf{I}_0$  to denote respectively the sets of concept names, role names and individuals occurring in  $\mathcal{O}$. First, we determine a signature $\mathcal{S}=\tuple{\mathbf{C}, \mathbf{R}, \mathbf{I}}$ for $\mathscr{C}_m$ from $\mathcal{O}$ as follows: (i) the sets $\mathbf{C}, \mathbf{R}, \mathbf{I}$ are the smallest sets such that $\mathbf{C}_0\subseteq \mathbf{C}$, $\mathbf{R}_0\subseteq \mathbf{R}$ and $\mathbf{I}_0\subseteq \mathbf{I}$, and (ii) for each concept of the form $\exists R.C$ occurring in $\mathcal{O}$, and each subset $\mathcal{C}$ of the set of all concepts of the form $\forall R.D$ occurring in $\mathcal{O}$, $\mathbf{R}$ contains a role name $\mathfrak{R}^\mathcal{C}_{(\exists R.C)}$, and  $\mathbf{C}$ contains  two concept names $p_l(\mathfrak{R}^\mathcal{C}_{(\exists R.C)})$ and $p_r(\mathfrak{R}^\mathcal{C}_{(\exists R.C)})$. We have to  show that $\mathsf{H}_c(\mathscr{C}_m)$  satisfies all categorical properties in $\psh$. 

\begin{itemize}[wide, labelwidth=!, labelindent=0pt]
    \item Each property $\mathsf{P}\in \psh$ can be written as  implications each of which is composed of a premise and conclusion. For instance,  
    \begin{itemize}[wide, labelwidth=!, labelindent=0pt]
    \item \ref{org-syntax-role} : If $R\in\mathbf{R}$ or   $R=\{(a,b)\}$ with $(a,b)\in \mathbf{I}\times \mathbf{I}$, then $R\rightarrow R, R\rightarrow \role{\top}, \role{\bot} \rightarrow R\in  \mathsf{Hom}(\mathscr{C}_r)$.  Furthermore, if  $P\rightarrow  R\rightarrow S \in  \mathsf{Hom}(\mathscr{C}_r)$, then $P\rightarrow  S \in  \mathsf{Hom}(\mathscr{C}_r)$.   
    
    \item \ref{org-syntax-concept}: If $C\in\mathbf{C}$ or $C=\{a\}$  with $a\in \mathbf{I}$, then $C\rightarrow C, C\rightarrow \top, \bot \rightarrow R\in  \mathsf{Hom}(\mathscr{C}_c)$. Furthermore, if  $C\rightarrow  D\rightarrow E \in  \mathsf{Hom}(\mathscr{C}_c)$, then $C\rightarrow  E \in  \mathsf{Hom}(\mathscr{C}_c)$.  
     
    \item \ref{org-syntax-functor}: If $R\rightarrow S \in \mathsf{Hom}(\mathscr{C}_r)$, then $p_l(R)\rightarrow p_l(S), p_r(R)\rightarrow p_r(S)  \in \mathsf{Hom}(\mathscr{C}_r)$, $p_l(\role{\bot})\leftrightarrows p_r(\role{\bot}) \leftrightarrows \bot, p_l(\role{\top})\leftrightarrows p_r(\role{\top}) \leftrightarrows \top\in \mathsf{Hom}(\mathscr{C}_c)$. Furthermore,   $p_l(R)\rightarrow \bot \in \mathsf{Hom}(\mathscr{C}_c)$ iff $R\rightarrow \mathfrak{R}_\bot \in \mathsf{Hom}(\mathscr{C}_r)$, and  
      $p_r(R)\rightarrow\bot \in \mathsf{Hom}(\mathscr{C}_c)$ iff $R\rightarrow \mathfrak{R}_\bot \in \mathsf{Hom}(\mathscr{C}_r)$.  
      
    \end{itemize}
For   $\mathsf{P}$ that is different from \ref{org-syntax-role}, \ref{org-syntax-concept},  \ref{org-syntax-functor}, $\mathsf{P}$ is already   written as an implication.  Note that if a condition like ``$C\in \mathsf{Ob}(\mathscr{C}_c)$" occurs in  some property $\mathsf{P}$, then it can be rewritten as an arrow $C\rightarrow C\in \mathsf{Hom}(\mathscr{C}_c)$.


\item 
We show that if the premise of $\mathsf{P}$ holds in $\mathsf{H}_c(\mathscr{C}_m)$, then the arrows in the conclusion of $\mathsf{P}$ also holds in $\mathsf{H}_c(\mathscr{C}_m)$. For instance, 
    assume that the premise of \ref{org-exists-filler} holds in $\mathsf{H}_c(\mathscr{C}_m)$, i.e $\exists R.C\in \mathsf{O}_c\tuple{\mathcal{O}}$ and $\exists R.C\rightarrow \exists R.C\in \mathsf{H}_c(\mathscr{C}_m)$.  Since $\mathsf{H}_c(\mathscr{C}_m) \subseteq \mathsf{Hom}(\mathscr{C}'_c)$  for all $\mathscr{C}'_c\in \Gamma_c\langle \mathcal{O}\rangle$, we have   $\exists R.C\in \mathsf{Ob}(\mathscr{C}'_c)$ for all $\mathscr{C}'_c\in \Gamma_c\langle \mathcal{O}\rangle$. This implies that the premise of \ref{org-exists-filler} holds in all $\mathscr{C}'_c\in \Gamma_c\langle \mathcal{O}\rangle$, and thus the arrows in the conclusion of \ref{org-exists-filler} are included in $\mathsf{Hom}(\mathscr{C}'_c)$ or $\mathsf{Hom}(\mathscr{C}'_{c,r})$  for all $\mathscr{C}'_c \in \Gamma_c\langle \mathcal{O}\rangle$. Therefore, the arrows in the conclusion of \ref{org-exists-filler} are also included in  $\mathsf{H}_{c}(\mathscr{C}_m)$. We can use the same argument for every other property $\mathsf{P}$ since if the premise of  $\mathsf{P}$ holds in $\mathsf{H}_c(\mathscr{C}_m)$, then it also holds in  all $\mathscr{C}'_c\in \Gamma_c\langle \mathcal{O}\rangle$ due to the definition of $\mathsf{H}_c(\mathscr{C}_m)$.
    \end{itemize}
Therefore,  we have proved that      $\mathsf{H}_{c}(\mathscr{C}_m)$  is the set of arrows of a concept   category $\mathscr{C}_m$ for $\mathcal{O}$. To prove minimality of  $\mathscr{C}_m$, assume that there is some concept category  $\mathscr{C}'_m$ such that  all properties in $\psh$ are satisfied in $\mathscr{C}_m$ and $\mathsf{H}_{c}(\mathscr{C}'_m)\subset \mathsf{H}_{c}(\mathscr{C}_m)$. In this case, we have  $\mathscr{C}'_m\in \Gamma_c\tuple{\mathcal{O}}$, and thus $\mathsf{H}_{c}(\mathscr{C}_m)\neq \displaystyle{\bigcap_{\mathscr{C}'_c\in \Gamma_c\tuple{\mathcal{O}}}\mathsf{Hom}(\mathscr{C}'_c)}$, which contradicts the definition of $\mathsf{H}_{c}(\mathscr{C}_m)$.
Uniqueness of   $\mathsf{H}_{c}(\mathscr{C}_m)$ follows from its definition. 
\end{proof}

\begin{corollary}\label{cor:arrow-subsumption} Let $\mathcal{O}$ 
be an $\mathcal{SH}$ ontology   and  $\mathscr{C}_c\tuple{\mathcal{O}}$ be the minimal concept  category of $\mathcal{O}$ satisfying  all categorical properties in $\psh$. It holds that $X\rightarrow Y\in \mathsf{Hom}(\mathscr{C}_c\tuple{\mathcal{O}})$ iff there is some $\mathsf{P}\in\psh$ such that 
 $X\rightarrow Y$ occurs in the conclusion of  $\mathsf{P}$ and the premise of $\mathsf{P}$ holds in $\mathscr{C}_c\tuple{\mathcal{O}}$.
\end{corollary}

\begin{proof} For the ``$\Longleftarrow$" direction, it follows immediately from satisfaction of all properties in $\psh$.

\noindent For the ``$\Longrightarrow$" direction, assume that $X\rightarrow Y\in \mathsf{Hom}(\mathscr{C}_c\tuple{\mathcal{O}})$. By contradiction, assume that $X\rightarrow Y$ never occurs in the conclusion  of any application of any property $\mathsf{P}\in \psh$.  This implies that  $\mathscr{C}'_c\tuple{\mathcal{O}}$  with  $\mathsf{Hom}(\mathscr{C}'_c\tuple{\mathcal{O}})=\mathsf{Hom}(\mathscr{C}_c\tuple{\mathcal{O}})\setminus \{X\rightarrow Y\}$    satisfies also  all properties in $\psh$, which contradicts minimality of $\mathscr{C}_c\tuple{\mathcal{O}}$. Note that  \ref{org-syntax-role} and \ref{org-syntax-concept} ensure the presence of identity arrows, outgoing arrows from $\bot$ and ingoing arrows to $\top$ while  \ref{org-syntax-functor} ensures the presence of arrows  related to the functors $p_l, p_r$.   
\end{proof}

The following theorem is a consequence of Lemmas~\ref{lem:sem-conj}-\ref{lem:sem-interactions} and Corollary~\ref{cor:arrow-subsumption}. 

\begin{theorem}\label{thm:cat-set}
    Let $\mathcal{O}$ 
be an $\mathcal{SH}$ ontology   and  $\mathscr{C}_c\tuple{\mathcal{O}}$ be the minimal concept  category of $\mathcal{O}$ satisfying  all categorical properties  $\psh$. It holds that $X\rightarrow Y\in \mathsf{Hom}(\mathscr{C}_c\tuple{\mathcal{O}})$ implies  $\mathcal{O}\models X\sqsubseteq   Y$.    
\end{theorem}

Theorem~\ref{thm:cat-minimality} with Corollary~\ref{cor:arrow-subsumption}  allows us to introduce the notion of categorical  inconsistency of an $\mathcal{SH}$ ontology.

\begin{definition}[Categorical consistency]\label{def:SH-consistency} Let $\mathcal{O}$ 
be an $\mathcal{SH}$ ontology  where $\mathbf{C}$, $\mathbf{R}$  and $\mathbf{I}$ are respectively non-empty and disjoint sets of concept names, role names and  individuals  defined from  $\mathcal{O}$. We use $\mathscr{C}_c\tuple{\mathcal{O}}$ and  $\mathscr{C}_{r}\tuple{\mathcal{O}}$ to denote the minimal concept and role categories of $\mathcal{O}$ satisfying  all categorical properties in $\psh$.  
We say that $\mathcal{O}$ is categorically inconsistent if 
$\{a\}\rightarrow \bot\in \mathsf{Hom}(\mathscr{C}_c\tuple{\mathcal{O}})$ for some $a\in \mathbf{I}$.  
\end{definition}

We can observe that \ref{org-conj-c}, \ref{org-disj-c} and the properties related to negation such as \ref{org-neg-sqcap} may lead to forming  an infinite number of objects of the forms  $C\sqcap C\sqcap \cdots  $, $C\sqcup C  \sqcup \cdots$ and $\neg \cdots \neg  C$. However, $C\sqcap C$, $C\sqcup C$ and $\neg \neg C$  are isomorphic  to $C$ due to  idempotence of conjunction and disjunction, and double negation   by Lemmas~\ref{lem:commu-asso-custom} and ~\ref{lem:de-morgan}. Moreover, 
$C\leftrightarrows  \mathsf{NNF}(C)$ holds in  the categorical  semantics due to  \ref{org-forall-weakened} and Lemma~\ref{lem:de-morgan}. 
Hence,  $C$ is isomorphic to  $\mathsf{NNF}(C)$  in concept categories, and thus the minimal concept category.



\section{Semantic Equivalence}\label{sec:sem-equiv}

In this section, we show that the categorical semantics of $\mathcal{SH}$ described by $\psh$ is equivalent to its set-theoretical semantics, that means, an $\mathcal{SH}$ ontology is categorically inconsistent iff it is set-theoretically inconsistent.

\begin{theorem}[semantic equivalence]\label{thm:sem-equiv}
Let $\mathcal{O}$ be an $\mathcal{SH}$ ontology. It holds that $\mathcal{O}$ is set-theoretically inconsistent iff $\mathcal{O}$ is categorically inconsistent.
\end{theorem}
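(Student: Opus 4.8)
The plan is to establish the two implications separately. The direction ``categorically inconsistent $\Rightarrow$ set-theoretically inconsistent'' is the soundness of the categorical rules and follows by a routine induction; its converse is the completeness direction, which is the substantial part and which I would prove contrapositively by reading a set-theoretical model off a categorically consistent ontology, using the finite saturated category $\mathscr{C}_c\tuple{\mathcal{O}}$ (finite by Lemma~\ref{lem:sh-consistency}) as a blueprint.

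For soundness, fix an arbitrary model $\mathcal{I}\models\mathcal{O}$ and show, by induction on the number of rule applications during saturation, the invariant that every $X\rightarrow Y\in\Hom(\mathscr{C}_c\tuple{\mathcal{O}})$ satisfies $X^\mathcal{I}\subseteq Y^\mathcal{I}$ and every $R\rightarrow S\in\Hom(\mathscr{C}_r\tuple{\mathcal{O}})$ satisfies $R^\mathcal{I}\subseteq S^\mathcal{I}$. Every rule application preserves this invariant: the structural rules \ref{sh-syntax-cat-it}--\ref{sh-syntax-cat-functor} are valid by transitivity of $\subseteq$ and the projection semantics of $\Pi_\ell,\Pi_r$; the axiom rule \ref{sh-axioms} is valid because $\mathcal{I}\models\mathcal{O}$; and the constructor rules are discharged by the matching soundness lemmas---\ref{sh-conj-d}--\ref{sh-conj-c} by Lemma~\ref{lem:sem-conj}, \ref{sh-disj-d}--\ref{sh-disj-c} by Lemma~\ref{lem:sem-disj}, \ref{sh-neg-bot} by Lemma~\ref{lem:sem-neg}, \ref{sh-exists-filler}--\ref{sh-exists-hier} by Lemma~\ref{lem:sem-exists}, the role-composition step by Lemma~\ref{lem:sem-rolecomp}, \ref{sh-dist} and \ref{sh-exists-forall} by Lemma~\ref{lem:sem-interactions}, and \ref{sh-exists-comp}, \ref{sh-exists-forall-ind}, \ref{sh-exists-forall-comp}, \ref{sh-exists-forall-comp-ind} by Lemmas~\ref{lem:forall-prop} and \ref{lem:ELK} (whose set-theoretical readings were noted to hold). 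Consequently a derived arrow $\{a\}\rightarrow\bot$ would force $(\{a\})^\mathcal{I}=\{a^\mathcal{I}\}\subseteq\bot^\mathcal{I}=\varnothing$, which is impossible, so $\mathcal{O}$ admits no model.

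For completeness, assume $\{a\}\rightarrow\bot\notin\Hom(\mathscr{C}_c\tuple{\mathcal{O}})$ for every $a\in\mathbf{I}$. The pivotal local fact is a \emph{disjunction-resolution} step: if an object $X$ is $\bot$-free and $X\rightarrow C\sqcup D\in\Hom$, then at least one of the objects $X\sqcap C$, $X\sqcap D$ (both introduced by \ref{sh-dist} applied to $X\rightarrow X$ and $X\rightarrow C\sqcup D$) is again $\bot$-free, since otherwise \ref{sh-disj-c} would collapse $X\rightarrow(X\sqcap C)\sqcup(X\sqcap D)$ to $X\rightarrow\bot$. I would then build a forest-shaped interpretation in which every element carries a $\bot$-free \emph{tag} object of $\mathscr{C}_c\tuple{\mathcal{O}}$ and its label is the set of concepts into which that tag arrows. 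Named individuals are tagged by $\{a\}$; asserted edges $R(a,b)$ are realised directly; and for every $\exists R.C$ at an element tagged $X$ the saturation has already folded in, via \ref{sh-exists-forall} and---for transitive subroles $S$ with $S\rightarrow R$---via \ref{sh-exists-forall-comp}, all active $\forall R.D$ together with the continuation constraints $\forall S.D$, so the created $R$-successor is tagged by the saturated filler $\Pi_r(\role{(\exists R.(C\sqcap\cdots))})$, which is $\bot$-free because $X$ is. Each time a disjunction $X\rightarrow C\sqcup D$ surfaces in a label, the tag is refined to the $\bot$-free conjunct supplied by the resolution step; role extensions are closed under the declared role inclusions and transitivity; and blocking on repeated tags terminates the construction since $\mathscr{C}_c\tuple{\mathcal{O}}$ is finite.

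The hard part is showing this construction is \emph{faithful}: that no label ever exhibits a set-theoretical clash while its tag stays $\bot$-free. This is precisely the mapping, alluded to in the introduction, from the \emph{fragmented} object produced branch-by-branch by the non-deterministic disjunction choices into the single \emph{homogeneous} saturated category. Two points are delicate. First, the universals transmitted to a successor must be exactly those authorised by \ref{sh-exists-forall} and, under role hierarchies and transitivity, by \ref{sh-exists-forall-comp}, together with the individual-level variants \ref{sh-exists-forall-ind} and \ref{sh-exists-forall-comp-ind} governing asserted edges between named individuals, so that every $\forall R.D$ true at a node is honoured along all $R$-paths. Second, the disjunction choices made independently at different nodes must never conflict; this holds because each choice only conjoins a $\bot$-free concept to the tag, and the propagation rules never reintroduce a discarded disjunct. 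I would package both into the invariant ``every tag is a $\bot$-free object and its label equals $\{C:\mathrm{tag}\rightarrow C\in\Hom\}$'', noting that a clash $A,\neg A$ in a label would give tag arrows into both $A$ and $\neg A$, hence into $\bot$ by \ref{sh-neg-bot}, contradicting the invariant. Once the invariant is secured, the interpretation satisfies every GCI, CAA, RAA, RI and ITR by construction, so $\mathcal{O}$ is set-theoretically consistent, completing the contrapositive and hence the theorem.
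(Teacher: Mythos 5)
Your soundness direction coincides with the paper's (Lemma~\ref{lem:cat-set}): the same induction over rule applications, discharged by the same soundness lemmas, so there is nothing to compare there. For the hard direction you take a genuinely different route. The paper proves ``set-theoretically inconsistent $\Rightarrow$ categorically inconsistent'' \emph{directly}: it sets up the non-deterministic tableau calculus of Definition~\ref{def:tableau}, proves Theorem~\ref{thm:sat-tab}, arranges all runs of the algorithm into the tree of tableaux $\mathbf{T}$ (Notation~\ref{nota:ind-tab-nota}), and defines the mapping $\varphi$ (Definition~\ref{def:varphi}) whose invariants (Lemma~\ref{lem:ALCHC:pi}) allow a fixed rule-application strategy to propagate ``derived clashes'' from the leaves of $\mathbf{T}$ up to its root, yielding $\{a\}\rightarrow\bot$ (Lemma~\ref{lem:set-cat}). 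You instead prove the contrapositive, extracting a canonical model from a $\bot$-free saturated category. Your pivotal disjunction-resolution step is sound, and it is in fact the mirror image of the paper's use of the same two rules: you apply \ref{sh-dist} plus \ref{sh-disj-c} ``forwards'' to select a $\bot$-free branch, while the paper applies them ``backwards'' (Property~\ref{lem:ALCHC:pi:3} of Lemma~\ref{lem:ALCHC:pi}) to merge two clashing branches into a clash at the branching point. Likewise, your folding of universals into successor fillers via \ref{sh-exists-forall}, \ref{sh-exists-forall-comp} and the individual variants \ref{sh-exists-forall-ind}, \ref{sh-exists-forall-comp-ind} is exactly what the paper's $\forall_\exists$- and $\forall^+_\exists$-rules were engineered to mirror on the tableau side. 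What each approach buys: the paper reuses the tableau model construction (Lemmas~\ref{lem:label-concept} and \ref{lem:tab-mod}), so all model-theoretic work happens once, in tableau land, at the price of the technically heavy $\varphi$/strategy machinery relating a non-deterministic structure to the deterministic category; your route dispenses with $\mathbf{T}$ and $\varphi$ entirely and is more self-contained, at the price of having to re-prove, directly against the categorical rules, the analogue of the paper's truth lemma --- in particular that every $\forall R.D$ in a tag's label is honoured along all $R$-paths through transitive subroles and blocking redirects, and that tag refinement terminates inside the finite saturated category (Lemma~\ref{lem:sh-consistency}). You flag these points rather than carry them out, but they do go through with precisely the rules you cite, and your construction is close in spirit to the category-driven model the paper itself builds inside the proof of Theorem~\ref{thm:NEWDL}.
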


The ``category-set" direction  of Theorem~\ref{thm:sem-equiv} follows from Theorem~\ref{thm:cat-set}.
To prove the direction ``set-category" of Theorem~\ref{thm:sem-equiv}, we need to introduce further construction: a universal tableau  for describing all models of an $\mathcal{SH}$ ontology under the set-theoretical semantics. In the literature, tableau method  has been well studied for DLs from $\mathcal{ALC}$ to  very expressive DLs with complex roles such as   $\mathcal{SROIQ}$ \cite{hor2000}, \cite{HoKS06a}. We adapt these results to propose a universal tableau for $\mathcal{SH}$ that has some differences from those in the literature.

\begin{definition}[Universal Tableau for $\mathcal{SH}$]\label{def:uni-tableau} Let $\mathcal{O}$ be  an $\mathcal{SH}$ ontology   with  signature $\langle \mathbf{C}, \mathbf{R},\mathbf{I}\rangle$.
A universal tableau for $\mathcal{O}$ is a graph  $T=\langle V_0,\cdots,V_n, E,   L\rangle$ for  $\mathcal{O}$ with $n$ minimal where $V_i$ with $i\in \{0..n\}$ are   the smallest (with respect to set inclusion)  disjoint sets of nodes, $E$ a set of edges, and $L$  a labelling function  that associates the smallest set of concepts in NNF to each $x\in V_i$  such that  the following properties are satisfied.

\begin{enumerate}[leftmargin=!, labelwidth=1em, align=left]
\item[\mylabel{def:uni-tableau:ind}{$\mathbf{T_I}$}$:$] For each individual $a\in \mathbf{I}$, there is a node $x^a\in V_0$. Furthermore, if $\{a\}\sqsubseteq C\in \mathcal{O}$, then $C\in L(x^a)$; and if $\{(a,b)\}\sqsubseteq R\in \mathcal{O}$ and $R \overlay{\sqsubseteq}{\ast} S\in \mathcal{R}^+\tuple{\mathcal{O}}$, then $x^a$ has an  $S$-candidate root successor  $x^b$ (or $x^a$ is an $S$-candidate root predecessor of $x^b$, or there is an $S$-edge  $(x^a,y^b)\in E$). 

\item[\mylabel{def:uni-tableau:ax}{$\mathbf{T_{\sqsubseteq}}$}$:$] For each $x\in V_i$ with $i\in\{0,n\}$,  if  $C\sqsubseteq D\in \mathcal{O}$, then $\mathsf{NNF}(\neg C)\sqcup D\in  L(x)$. 

\item[\mylabel{def:uni-tableau:conj}{$\mathbf{T_{\sqcap}}$}$:$] For each $x\in V_i$ with  $i\in\{0,n\}$,   if $C\sqcap D\in L(x)$, then $\{C,D\}\subseteq  L(x)$.

\item[\mylabel{def:uni-tableau:disj}{$\mathbf{T_{\sqcup}}$}$:$] For each $x\in V_i$ with $E\sqcup F\in L(x)$ and $i\in\{0,n\}$,  there is the smallest set $\mathbb{S}(x)\subseteq V_i$  of siblings of $x$  with $x\in \mathbb{S}(x)$ such that 
\begin{enumerate}   
    \item\label{def:uni-tableau:disj:0} If $x$ is an $S$-candidate (root) successor of some node $y$, then all $x'\in \mathbb{S}(x)$ is also an $S$-candidate (root) successor of $y$ with $(y,x)\in E$.

   \item\label{def:uni-tableau:disj:1} $x'\in \mathbb{S}(x)$ and  $C\sqcup D\in L(x')$ imply $L(x')\cap\{C,D\}\neq \varnothing$;
    
    \item\label{def:uni-tableau:disj:2} 
    $x'\in \mathbb{S}(x)$, $C\sqcup D\in L(x')$, $C\neq D$ and $\{C, D\}\not\subseteq   L(x')$ imply that there is a node    $x''\in \mathbb{S}(x)$  such that $X\in L(x')$, $Y\in L(x'')$, $Y\notin L(x')$ and $X\notin L(x'')$, $L(x')\setminus \{X\}\subseteq L(x'')$ and $L(x'')\setminus \{Y\}\subseteq L(x')$ where $X,Y\in \{C,D\}$ and $X\neq Y$.
\end{enumerate}
     
\item[\mylabel{def:uni-tableau:exists}{$\mathbf{T_{\exists}}$}$:$] For each $x\in V_i$ with  $i\in\{0,n\}$, 
if $\exists R.C\in L(x)$, $R \overlay{\sqsubseteq}{\ast} S\in \mathcal{R}^+\tuple{\mathcal{O}}$, and  there is no node $x'\in V_{j}$ with $j< i$  such that $x'$ is a candidate ancestor of $x$ with $L(x)\subseteq L(x')$ (i.e. $x$ is blocked by $x'$), then  $x$ has an $S$-candidate successor $y\in V_{i+1}$ with an $S$-edge   $(x,y)\in E$ (or $x$ is  an $S$-candidate predecessor of $y$) such that 
$C\in L(y)$. 

\item[\mylabel{def:uni-tableau:forall}{$\mathbf{T_{\forall}}$}$:$] For each $x\in V_i$ with  $i\in\{0,n\}$,  if $\forall R.C\in L(x)$ and $x$ has a $P$-candidate successor $y\in V_{i+1}$ or a $P$-candidate root successor   $y\in V_{0}$ with  $P \overlay{\sqsubseteq}{\ast} R\in \mathcal{R}^+\tuple{\mathcal{O}}$,   then   $C\in L(y)$.

\item[\mylabel{def:uni-tableau:forall-trans}{$\mathbf{T^\circ_{\forall}}$}$:$] For each $x\in V_i$ with  $i\in\{0,n\}$,  if $\forall R.C\in L(x)$ and $x$ has a $P$-candidate successor $y\in V_{i+1}$ or $P$-candidate root successor $y\in V_{0}$ with  $P \overlay{\sqsubseteq}{\ast} S, S\circ S \overlay{\sqsubseteq}{\ast} S, S \overlay{\sqsubseteq}{\ast} R\in \mathcal{R}^+\tuple{\mathcal{O}}$,   then   $\forall S.C\in L(y)$.
\end{enumerate}

We say that a node $x\in V_i$ has a  direct clash if if  $\{A,\neg A\} \subseteq L(x)$ for  a concept name $A$, and $x$ has an indirect clash if each $y'\in \mathbb{S}(y)$ has a clash where $y$ is an $R$-candidate successor of $x$. A node has a clash if it has a direct or indirect clash.

\end{definition}

By definition, each $x^a\in V_0$ is the root of a tree whose nodes are located in $V_i$ with $i>0$ except for $x^a$. Hence, each node $x\in V_i$  with $i>0$ has at most one (candidate) predecessor. 
Minimality of  $V_i$ and  $L(x)$ 
ensures that no element of these sets can be removed  such that  \ref{def:uni-tableau:ind}, \ref{def:uni-tableau:ax}, \ref{def:uni-tableau:conj}, \ref{def:uni-tableau:disj}, \ref{def:uni-tableau:exists}, \ref{def:uni-tableau:forall}, \ref{def:uni-tableau:forall-trans} remain to be satisfied. Moreover,  there is a finite number of node layers $V_i$ due to the blocking condition that stipulates that if $x\in V_i$ is blocked by a node $x'\in V_j$ with $j< i$ then $x$ has no candidate successor. Note that $\mathbb{S}(x)$ specified by \ref{def:uni-tableau:disj}  represents nondeterminism resulting from disjunctions.

\begin{definition}\label{def:tree} Let $T=\langle V_0,\cdots,V_n, E,   L\rangle$  be a universal tableau of an $\mathcal{SH}$ ontology $\mathcal{O}$. Let $x\in V_i$ and $i\in \{0,n\}$ with a set $\mathbb{S}(x)$ of siblings of $x$. We define a minimal tree $\mathbb{T}(x) =\tuple{V', E',  L'}$   associated with  $\mathbb{S}(x)$ as follows.

\begin{enumerate}[leftmargin=0cm, itemindent=0.5cm]
    \item\label{claim:1:0}   $\mathbb{T}(x)$ is rooted at $v_0$ such that $C\in L'(v_0)$ iff $C\in L(x)$  is due to  all properties in Definition~\ref{def:uni-tableau} except for \ref{def:uni-tableau:disj}.

    \item\label{claim:1:1} For each node $v\in V'$, if $C\sqcap D\in L'(v)$ then $\{C,D\}\subseteq L'(v)$.
    
    \item\label{claim:1:2} For each node $v\in V'$, $v$ has two successors $v_1,v_2$ in $\mathbb{T}(x)$  with $(v,v_1), (v,v_2)\in E'$ iff  there is a disjunction $C\sqcup D\in L'(v)$ such that $C\neq D$,  $L'(v)\cap \{C,D\}=\varnothing$. In this case,  $L'(v_1)=L'(v)\cup \{C\}$ and $L'(v_2)=L'(v)\cup \{D\}$.
\end{enumerate}
\end{definition}

\begin{claim}\label{claim:tab} $x'\in \mathbb{S}(x)$ iff there is  a leaf node $v$ of $\mathbb{T}(x)$ such that  $L(x')=L'(v)$.
\end{claim}

\begin{proof} ``$\Longrightarrow$". Assume that $x'\in \mathbb{S}(x)$. By the definition of $\mathbb{T}(x)$, we have $L'(v_0)\subseteq L(x')$ where $v_0$ is the root of $\mathbb{T}(x)$. By induction on level of $\mathbb{T}(x)$, let $v\in V'$ such that $L'(v)\subseteq L(x')$. If $C\in L'(v)$ due to $C\sqcap C'\in L'(v)$ according to Item~\ref{claim:1:1} in Definition~\ref{def:tree}, then   $C\in L(x')$ due to  \ref{def:uni-tableau:conj}. Let $v',v''$ be successors of $v$. This implies that there is some $C\sqcup D\in L'(v)$ with $\{C,D\}\cap L'(v)=\varnothing$ and $C\neq D$ such that $L'(v')=L'(v)\cup \{C\}$ and $L'(v'')=L'(v)\cup \{D\}$ (or $L'(v')=L'(v)\cup \{D\}$ and $L'(v'')=L'(v)\cup \{C\}$).  By  \ref{def:uni-tableau:disj}-(c), we have $C\in L(x')$ (or $D\in L(x')$). Hence, $L'(v')\subseteq L(x')$ (or $L'(v'')\subseteq L(x')$). By induction, it holds that   $L'(v)\subseteq L(x')$ where $v$ is a leaf. By minimality of $L(x')$ we have $L(v)= L(x')$. 

\noindent ``$\Longleftarrow$". Assume that   $v$ is a leaf node of $\mathbb{T}(x)$. We prove that there is some node $x'\in \mathbb{S}(x)$ such that $L(x')=L'(v)$. By construction,   $L'(v_0)\subseteq L'(v)$ where $v_0$ is the root of  $\mathbb{T}(x)$. This implies that if $C\in L(x')$ is due to  all properties in Definition~\ref{def:uni-tableau} except for \ref{def:uni-tableau:disj} 
, then $C\in L'(v)$. Hence,   $L'(v_0)\subseteq L(x')$ for all $x'\in \mathbb{S}(x)$. By induction on level of $\mathbb{T}(x)$, let $v'$ be an ancestor of $v$ such that there is some $x'\in \mathbb{S}(x)$ with $L'(v')\subseteq L(x')$. Let $v''$ be a successor of $v'$. If $C\in L'(v'')$ due to $C\sqcap C'\in L'(v'')$ (Item~\ref{claim:1:1}, Definition~\ref{def:tree}), then $C\in L(x'')$ by \ref{def:uni-tableau:conj}. If $C\in L'(v'')$ is due to $C\sqcup D\in L'(v')$ with $C\neq D$ and $\{C,D\}\cap L(v')=\varnothing$ (Item~\ref{claim:1:2}, Definition~\ref{def:tree}), then $L'(v'')=L'(v')\cup \{C\}$ and there is some $x''\in \mathbb{S}(x)$ such that $C\in  L(x'')$, $D\in  L(x')$, and $L(x')\setminus \{D\}\subseteq L(x'')$ by \ref{def:uni-tableau:disj}-(c). This implies $L'(v'')\subseteq L(x'')$. By induction, there is some $x_0\in \mathbb{S}(x)$ such that $L'(v)\subseteq L(x_0)$. When reaching the leaf $v$, all conjunctions and disjunctions in $L'(v)$ are satisfied. By minimality of $L(x_0)$, it holds that   all conjunctions and disjunctions in $L(x_0)$ are also satisfied with respect to \ref{def:uni-tableau:conj} and \ref{def:uni-tableau:disj}, and thus $L'(v)= L(x_0)$.
\end{proof}

The following definition describes a standard tableau from a universal tableau $T$ by choosing one node $x'\in \mathbb{S}(x)$ for all node $x$ in $T$.

\begin{definition}[Tableau for $\mathcal{SH}$]\label{def:sh-tableau}

Let $\mathcal{O}$ be  an $\mathcal{SH}$ ontology, and $T=\langle V_0,\cdots,V_n, E,   L\rangle$ be a universal tableau of $\mathcal{O}$. A tableau $\mathbf{T}=\langle \mathcal{V}_0,\cdots,\mathcal{V}_n, \mathcal{E},   \mathcal{L}\rangle$  is defined from  $T$ as follows.   

\noindent$\bullet$  $\mathcal{V}_0 \subseteq  V_0$ such that $\lvert\mathcal{V}_0\cap \mathbb{S}(x^a)\rvert =1$  for each individual $a\in \mathbf{I}$. If $x^b$ is an $R$-candidate root successor of $x^a$ in $T$ and $x^b\in \mathcal{V}_0$, then $x^b$ is an $R$-root successor of $x^a$ in $\mathbf{T}$.

\noindent$\bullet$ $\mathcal{V}_i \subseteq  V_i$ for $i>0$ such that each $y\in \mathcal{V}_i$ is a candidate successor of some node $x\in \mathcal{V}_{i-1}$ in $T$ with $\lvert\mathcal{V}_i\cap \mathbb{S}(y)\rvert =1$. In this case, $y$ is called a successor of $x$ in $\mathbf{T}$. We define $(x,y)\in \mathcal{E}$.

\noindent$\bullet$ $\mathcal{L}(x)=L(x)$  for $x\in \mathcal{V}_{i}$ with $0\leqslant i\leqslant n$. We inductively define an $S$-\emph{path} $(x,y)$ over $\mathbf{T}$ as follows:  \begin{itemize}
    \item If $(x,y)$ is  an  $S$-edge and $y$ is non-blocked, then $(x,y)$ is an $S$-path.
    
    \item If $(x,y')$ is  an  $S$-edge and $y$ blocks $y'$, then $(x,y)$ is an $S$-path.
    
    \item If $(x,z)$  is an $S$-path, $(z,y)$ is an $S$-edge and $S\circ S \overlay{\sqsubseteq}{\ast} S\in \mathcal{R}^+\tuple{\mathcal{O}}$, then $(x,y)$ is an $S$-path.
    
    \item If $(x,y)$ is an $S$-path  and $S \overlay{\sqsubseteq}{\ast} R\in \mathcal{R}^+\tuple{\mathcal{O}}$, then we also say that $(x,y)$ is an $R$-path.
\end{itemize} 

A node $x\in \mathcal{V}_i$  has a direct clash if  $\{A,\neg A\} \subseteq \mathcal{L}(x)$ for  a concept name $A$. A node $x\in \mathcal{V}_i$  has an indirect clash if it has a descendant (possibly via candidate root successor) that has a direct  clash.  A node $x\in \mathcal{V}_i$  has a clash if it has  a direct or indirect clash.
A tableau $\mathbf{T}$ has a  clash
if there is an individual $a\in \mathbf{I}$ such that  $x^a\in \mathcal{V}_0$ has a clash. 
A \emph{tableau interpretation} $\mathcal{I}=\langle \Delta^{\mathcal{I}}, \cdot^{\mathcal{I}}\rangle$ is defined from $\mathbf{T}=\langle \mathcal{V}_0,\cdots,\mathcal{V}_n, \mathcal{E}, \mathcal{L}\rangle$ as follows.
    
\noindent$\bullet$ $\Delta^{\mathcal{I}} =\bigcup _{i\in \{0,n\}}\mathcal{V}_i\setminus \mathbf{B}$ where $\mathbf{B}$ is the set of blocked nodes.   
     
\noindent$\bullet$ $A^{\mathcal{I}}:= \{x\in \Delta^\mathcal{I} \mid  A \in \mathcal{L}(x)\}$ 
    
\noindent$\bullet$ $R^{\Tb{I}} :=\{\langle{x, y}\rangle\in \Delta^\mathcal{I} \times \Delta^\mathcal{I}    \mid (x,y)  \text{ is an  R-path}\}$ for each role name $R$.
\end{definition}

The tableau described by Definition~\ref{def:sh-tableau} for $\mathcal{SH}$ is slightly different from the standard one given in \cite{hor99} for $\mathcal{SHIQ}$  including $\mathcal{SH}$. For instance, the $\sqcup$-rule from standard tableau performs a choice $C'\in \{C,D\}$ when it is  applied to a concept $C\sqcup D\in L(x)$ while    Definition~\ref{def:sh-tableau} makes a choice of nodes from $\mathbf{S}(x)$ in the universal tableau.

Note that Definitions~\ref{def:uni-tableau} and \ref{def:sh-tableau} do not need to explicitly present an algorithm. They just describe  tableaux for an $\mathcal{SH}$ ontology, i.e a finite representation of a model of the ontology. In  the following, we need to show that if  an $\mathcal{SH}$ ontology has a free-clash tableau, then the ontology is set-theoretically consistent, and conversely, if  an $\mathcal{SH}$ ontology  is set-theoretically consistent, then the ontology has a free-clash tableau. These proofs are adapted from those for the standard tableau due to the difference between Definition~\ref{def:sh-tableau} and the standard tableau. Proofs of Lemmas~\ref{lem:label-concept}, \ref{lem:tab-mod} and  \ref{lem:mod-tab} can be found in Appendix.

\begin{lemma}\label{lem:label-concept} 
Let $\mathbf{T}=\langle \mathcal{V}_1,\cdots,\mathcal{V}_n, \mathcal{E},   \mathcal{L}\rangle$ be a clash-free tableau  and $\mathcal{I}=\langle \Delta^\mathcal{I}, \cdot^\mathcal{I} \rangle$  be the tableau interpretation of an  ontology $\mathcal{O}$ from $\mathbf{T}$. It holds that $C\in \mathcal{L}(x)$ implies $x\in C^\mathcal{I}$ for all $x\in \mathcal{V}_i$ for all $i$.
\end{lemma}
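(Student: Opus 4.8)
The plan is to argue by structural induction on the concept $C$. Since $\mathcal{O}$ is normalized, every $C\in\mathcal{L}(x)$ is in NNF, so the cases to treat are $A$, $\neg A$ (with $A$ a concept name), $\top$, the Booleans $C_1\sqcap C_2$ and $C_1\sqcup C_2$, and the restrictions $\exists R.C'$ and $\forall R.C'$. The base cases are immediate from the tableau interpretation: if $A\in\mathcal{L}(x)$ then $x\in A^\mathcal{I}$ by the defining clause $A^\mathcal{I}=\{x\mid A\in\mathcal{L}(x)\}$, and $\top$ is trivial since $\top^\mathcal{I}=\Delta^\mathcal{I}$. For $\neg A$ I would invoke clash-freeness of $T$: because $\{A,\neg A\}\not\subseteq\mathcal{L}(x)$, having $\neg A\in\mathcal{L}(x)$ forces $A\notin\mathcal{L}(x)$, i.e. $x\notin A^\mathcal{I}$, hence $x\in(\neg A)^\mathcal{I}=\Delta^\mathcal{I}\setminus A^\mathcal{I}$; clash-freeness likewise excludes a bare $\bot$ from any label, so no problematic literal survives.

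The two Boolean cases follow from saturation of the corresponding tableau rules together with the induction hypothesis. If $C_1\sqcap C_2\in\mathcal{L}(x)$, the $\sqcap$-rule guarantees $\{C_1,C_2\}\subseteq\mathcal{L}(x)$, so by the induction hypothesis $x\in C_1^\mathcal{I}\cap C_2^\mathcal{I}=(C_1\sqcap C_2)^\mathcal{I}$. If $C_1\sqcup C_2\in\mathcal{L}(x)$, the $\sqcup$-rule has placed some $X\in\{C_1,C_2\}$ in $\mathcal{L}(x)$, and the induction hypothesis yields $x\in X^\mathcal{I}\subseteq(C_1\sqcup C_2)^\mathcal{I}$.

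The existential case is where the real work lies, and I expect node blocking to be the main obstacle. Suppose $\exists R.C'\in\mathcal{L}(x)$. If $x$ is not blocked, saturation of the $\exists$-rule produces an $R$-edge $(x,y)$ with $C'\in\mathcal{L}(y)$; if $y$ is non-blocked, $(x,y)$ is an $R$-path by the first clause of the path definition, while if $y$ is blocked by an ancestor $y'$ then $\mathcal{L}(y)\subseteq\mathcal{L}(y')$ gives $C'\in\mathcal{L}(y')$ and the second (redirecting) clause makes $(x,y')$ an $R$-path. In either subcase the witness $z\in\{y,y'\}$ has $C'\in\mathcal{L}(z)$, so $z\in C'^\mathcal{I}$ by the induction hypothesis and $(x,z)\in R^\mathcal{I}$, whence $x\in(\exists R.C')^\mathcal{I}$. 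The delicate point is a \emph{blocked} $x$: one must show that the outgoing $R$-paths demanded by $(\exists R.C')^\mathcal{I}$ are still supplied, using that the edge reaching $x$ is redirected to the blocking ancestor and that label inclusion transfers the witnessing structure upward. Making this redirection align cleanly with the path clauses, so that \emph{every} node (blocked ones included) genuinely acquires the $R$-paths its label requires, is the crux of the argument and must be handled with care.

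For the universal case I would run an inner induction on the derivation of an $R$-path. Assume $\forall R.C'\in\mathcal{L}(x)$ and $(x,y)\in R^\mathcal{I}$, i.e. $(x,y)$ is an $R$-path; it suffices, by the induction hypothesis, to show $C'\in\mathcal{L}(y)$. For a single $R$-edge the $\forall$-rule already forces $C'\in\mathcal{L}(y)$ (or $C'\in\mathcal{L}(y')$ after a blocking redirect, using $\mathcal{L}(y)\subseteq\mathcal{L}(y')$). For a path obtained by composition along a transitive sub-role $S$ with $S\circ S\subtransrole S$ and $S\subtransrole R$, the $\forall^{+}$-rule propagates $\forall S.C'$ to each intermediate $S$-successor, so $C'$ reaches the endpoint; here the $\forall^{+}$- and $\forall^{+}_\exists$-rules play exactly the role of respecting transitivity, mirroring \ref{sh-exists-forall-comp}. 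Since $y$ was an arbitrary $R$-successor, this gives $x\in(\forall R.C')^\mathcal{I}$. Fitting the path-induction together with the blocking bookkeeping is the part that will demand the most attention.
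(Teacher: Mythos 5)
Your proposal follows the same route as the paper's proof: structural induction on $C$, with the base cases read off the definition of $\mathcal{I}$ and clash-freeness, the Boolean cases discharged by saturation of the $\sqcap$- and $\sqcup$-rules, the existential case split on whether the witnessing successor is blocked, and an inner induction on the derivation of an $R$-path (with the $\forall^{+}$-rule absorbing transitive subroles) for the universal case. On every case that the paper actually treats — including the nominal case $C=\{a\}$, which you omit but which is trivial — your argument and the paper's coincide.

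The divergence is exactly the point you flag as ``the crux'': a node $x$ that is \emph{itself} blocked. Here your sketched repair does not work, and no path bookkeeping can make it work. Edge redirection concerns edges \emph{into} a blocked node: the path clause turns an $S$-edge $(x,y')$ with $y'$ blocked by $y$ into the path $(x,y)$; it supplies nothing \emph{out of} a blocked node. Label inclusion points the wrong way for your purpose: $\mathcal{L}(x)\subseteq\mathcal{L}(x')$ guarantees that the \emph{blocker} $x'$ inherits every existential restriction of $x$ and hence acquires the witness, not that $x$ does. Concretely, take $\mathcal{O}=\{\{a\}\sqsubseteq A,\ A\sqsubseteq\exists R.A\}$: the clash-free tableau has root $x^a$ and a single successor $y$ with $\mathcal{L}(y)=\{A,\ \neg A\sqcup\exists R.A,\ \exists R.A\}\subseteq\mathcal{L}(x^a)$, so $y$ is blocked, the $\exists$-rule never fires at $y$, and the only $R$-path is $(x^a,x^a)$; thus $\exists R.A\in\mathcal{L}(y)$ while $y\notin(\exists R.A)^{\mathcal{I}}$, so the statement is false for blocked nodes as the definitions stand. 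What you have in fact uncovered is a gap in the paper's own proof: its existential case opens by asserting that $x$ has an $R$-successor containing the filler, which presupposes that the $\exists$-rule fired at $x$, i.e., that $x$ is not blocked; the blocked case is silently omitted. The correct resolution is not to align redirection with the path clauses but to change the formulation: restrict the claim (and the domain $\Delta^{\mathcal{I}}$) to non-blocked nodes, which costs nothing since edges into blocked nodes are already redirected to their blockers and blocked nodes then play no role in the model. With that restriction, your induction — and the paper's — goes through as written.
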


\begin{lemma}\label{lem:tab-mod} If  there is a clash-free tableau for an $\mathcal{SH}$ ontology  $\mathcal{O}$ then $\mathcal{O}$ is set-theoretically consistent.  
\end{lemma}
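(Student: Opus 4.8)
The plan is to exhibit a set-theoretical model of $\mathcal{O}$, from which consistency follows immediately. Given a clash-free (and, implicitly, rule-saturated) tableau $T=\langle\mathcal{V},\mathcal{E},\mathcal{L}\rangle$ for $\mathcal{O}$, I would take the tableau interpretation $\mathcal{I}=\langle\Delta^{\mathcal{I}},\cdot^{\mathcal{I}}\rangle$ associated with $T$ in Definition~\ref{def:tableau}, and prove $\mathcal{I}\models\mathcal{O}$ by checking each axiom type separately. The whole argument rests on the truth lemma already established as Lemma~\ref{lem:label-concept}: $C\in\mathcal{L}(x)$ implies $x\in C^{\mathcal{I}}$. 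This lemma converts syntactic membership in node labels into semantic membership in the interpretation, so most verifications reduce to locating the relevant concept in the appropriate label.

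For the concept-level axioms this is almost immediate. For a GCI $E\sqsubseteq F$, the $\sqsubseteq$-rule guarantees $\neg E\sqcup F\in\mathcal{L}(x)$ for every $x\in\mathcal{V}$; by Lemma~\ref{lem:label-concept} we get $x\in(\neg E\sqcup F)^{\mathcal{I}}$ for all $x$, i.e.\ $(\neg E\sqcup F)^{\mathcal{I}}=\Delta^{\mathcal{I}}$, and by the set semantics of $\neg$ and $\sqcup$ this is exactly $E^{\mathcal{I}}\subseteq F^{\mathcal{I}}$. For a CAA $\{a\}\sqsubseteq C$, the \textbf{Init-rule} places $C$ in $\mathcal{L}(x^a)$; since $a^{\mathcal{I}}=x^a$ and $\{a\}^{\mathcal{I}}=\{x^a\}$, Lemma~\ref{lem:label-concept} yields $a^{\mathcal{I}}\in C^{\mathcal{I}}$, i.e.\ $\{a\}^{\mathcal{I}}\subseteq C^{\mathcal{I}}$.

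The role-level axioms are handled through the path-based definition $R^{\mathcal{I}}=\{(x,y)\mid (x,y)\text{ is an }R\text{-path}\}$. For a RAA $\{(a,b)\}\sqsubseteq R$, the \textbf{Init-rule} creates the edge $(x^a,x^b)$ and declares $x^b$ an $R$-successor; since individual nodes are never blocked, $(x^a,x^b)$ is an $R$-path and $(a^{\mathcal{I}},b^{\mathcal{I}})\in R^{\mathcal{I}}$. For a RI $R\sqsubseteq S$, the last clause of the $R$-path definition shows that every $R$-path is an $S$-path whenever $R\overlay{\sqsubseteq}{\ast}S\in\mathcal{R}^+$, so $R^{\mathcal{I}}\subseteq S^{\mathcal{I}}$. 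For an ITR $S\circ S\sqsubseteq S$, given $S$-paths $(x,y)$ and $(y,z)$ with $S\circ S\overlay{\sqsubseteq}{\ast}S\in\mathcal{R}^+$, I would show $(x,z)$ is an $S$-path by induction on the length of $(y,z)$, using the path-composition clause at each step, thereby establishing transitivity of $S^{\mathcal{I}}$.

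The main obstacle is this last, role-theoretic part rather than the concept-level reasoning: one must verify that the path-based interpretation of roles genuinely respects the role hierarchy and transitivity, and that the bookkeeping around blocked successors does not break role assertions or the composition argument. In particular, the transitivity verification requires careful induction on path length together with the blocking clauses of the path definition, and one must confirm that the role-successor bookkeeping carried out via $\mathcal{R}^+$ in the \textbf{Init-}, \textbf{$\exists$-}, and $\forall$-family rules aligns exactly with membership in $R^{\mathcal{I}}$. Once all five axiom types are checked, $\mathcal{I}$ is a model of $\mathcal{O}$ and hence $\mathcal{O}$ is set-theoretically consistent.
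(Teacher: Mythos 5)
Your proposal is correct and takes essentially the same route as the paper's own proof: construct the tableau interpretation, invoke the truth lemma (Lemma~\ref{lem:label-concept}), and verify each axiom type, with GCIs/CAAs handled via the \textbf{$\sqsubseteq$-rule} and the role axioms via the path-based definition of $R^{\mathcal{I}}$. The only cosmetic differences are that you apply the truth lemma directly to the disjunction $\neg E\sqcup F$ where the paper splits it with the \textbf{$\sqcup$-rule} and argues by contradiction with clash-freeness, and that you make explicit the RAA case and the induction on path length for transitivity, both of which the paper leaves implicit or states tersely.
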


\begin{lemma}\label{lem:mod-tab} If an  $\mathcal{SH}$ ontology  $\mathcal{O}$ is set-theoretically consistent then  there is a clash-free tableau  for $\mathcal{O}$.
\end{lemma}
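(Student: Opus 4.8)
The plan is to use a set-theoretic model of $\mathcal{O}$ to drive the nondeterministic choices of the tableau algorithm, so that the resulting completed tableau (which is finite by Lemma~\ref{lem:termination}) is automatically clash-free. Concretely, assume $\mathcal{O}$ is set-theoretically consistent and fix a model $\mathcal{I}=\langle\Delta^\mathcal{I},\cdot^\mathcal{I}\rangle\models\mathcal{O}$. I would run the algorithm of Definition~\ref{def:tableau} while maintaining, as an invariant, a map $\pi:\mathcal{V}\to\Delta^\mathcal{I}$ such that (i) $\pi(x^a)=a^\mathcal{I}$ for every individual $a$; (ii) $C\in\mathcal{L}(x)$ implies $\pi(x)\in C^\mathcal{I}$; and (iii) whenever $(x,y)$ is an $R$-edge of $\mathcal{E}$, then $(\pi(x),\pi(y))\in R^\mathcal{I}$. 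The whole proof then reduces to showing that this invariant can be established initially and preserved under every rule, since a clash $\{A,\neg A\}\subseteq\mathcal{L}(x)$ would force $\pi(x)\in A^\mathcal{I}\cap(\Delta^\mathcal{I}\setminus A^\mathcal{I})$, which is empty by Lemma~\ref{lem:sem-neg}, a contradiction.

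For the base case, the \textbf{Init-rule} sets $\pi(x^a)=a^\mathcal{I}$; then $\{a\}\sqsubseteq C\in\mathcal{O}$ gives $a^\mathcal{I}\in C^\mathcal{I}$ since $\mathcal{I}\models\mathcal{O}$, and each role assertion axiom together with the closure $\mathcal{R}^+$ yields $(a^\mathcal{I},b^\mathcal{I})\in R^\mathcal{I}$, so (ii)--(iii) hold for the initial edges. The \textbf{$\sqsubseteq$-rule} adds $\neg E\sqcup F$ for each GCI $E\sqsubseteq F$; since $E^\mathcal{I}\subseteq F^\mathcal{I}$ we have $(\neg E\sqcup F)^\mathcal{I}=\Delta^\mathcal{I}$, so (ii) is preserved at every node.

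For the inductive step I would verify each rule in turn. The deterministic Boolean rules are immediate from Lemmas~\ref{lem:sem-conj} and \ref{lem:sem-disj}: the \textbf{$\sqcap$-rule} is sound because $\pi(x)\in(D_1\sqcap D_2)^\mathcal{I}=D_1^\mathcal{I}\cap D_2^\mathcal{I}$. The nondeterminism is where $\mathcal{I}$ intervenes: for the \textbf{$\sqcup$-rule}, $\pi(x)\in(D_1\sqcup D_2)^\mathcal{I}=D_1^\mathcal{I}\cup D_2^\mathcal{I}$ lets me choose the disjunct $X\in\{D_1,D_2\}$ with $\pi(x)\in X^\mathcal{I}$; for the \textbf{$\exists$-rule}, $\pi(x)\in(\exists R.D)^\mathcal{I}$ provides a witness $d\in\Delta^\mathcal{I}$ with $(\pi(x),d)\in R^\mathcal{I}$ and $d\in D^\mathcal{I}$, which I assign as $\pi(y)$ to the fresh successor, establishing (ii)--(iii) for the new node and edge (and for every $R'$ with $R\overlay{\sqsubseteq}{\ast}R'$ by the satisfied role inclusions). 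The \textbf{$\forall$-rule} is sound because $(\pi(x),\pi(y))\in R^\mathcal{I}$ and $\pi(x)\in(\forall R.D)^\mathcal{I}$ give $\pi(y)\in D^\mathcal{I}$, and the \textbf{$\forall^+$-rule} follows by the same argument combined with transitivity $S^\mathcal{I}\circ S^\mathcal{I}\subseteq S^\mathcal{I}\subseteq R^\mathcal{I}$, exactly as in Lemma~\ref{lem:sem-interactions}. The interaction rules \textbf{$\forall_\exists$-rule} and \textbf{$\forall^+_\exists$-rule} only add labels and no edges, and their soundness is the set-theoretic content recorded by Lemmas~\ref{lem:sem-exists} and \ref{lem:sem-interactions}: from $\{C,D\}\subseteq\mathcal{L}(y)$ and $(\pi(x),\pi(y))\in P^\mathcal{I}$ one gets $\pi(x)\in(\exists P.(C\sqcap D))^\mathcal{I}$ and $\pi(y)\in(C\sqcap D)^\mathcal{I}$; the ABox variants involving $\{(a,b)\}\sqsubseteq P$ are handled identically using $\pi(x^a)=a^\mathcal{I}$ and $\pi(x^b)=b^\mathcal{I}$.

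The main obstacle I anticipate is not any single rule but the bookkeeping around the $\forall$-family together with role hierarchies, transitive roles and blocking: I must track that an edge created for a role $R$ is correctly treated as an $R'$-edge for all $R\overlay{\sqsubseteq}{\ast}R'$, and that the transitive-role propagation in the \textbf{$\forall^+$-rule} and \textbf{$\forall^+_\exists$-rule} stays faithful to $\mathcal{I}$ through the closure $\mathcal{R}^+$ and the inclusions $S\circ S\overlay{\sqsubseteq}{\ast}S$ and $S\overlay{\sqsubseteq}{\ast}R$. Blocking causes no difficulty in this direction, since it only suppresses applications of the \textbf{$\exists$-rule} and therefore cannot violate an invariant that is maintained whenever labels or edges are added; the path-based definition of $R^\mathcal{I}$ plays no role here and is needed only for the converse Lemma~\ref{lem:tab-mod}. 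Once the invariant is shown to survive every rule, termination (Lemma~\ref{lem:termination}) guarantees that the process yields a completed tableau, and the invariant guarantees it is clash-free, which proves the lemma.
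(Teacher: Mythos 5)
Your proposal is correct and follows essentially the same strategy as the paper's proof: fixing a model $\mathcal{I}$ of $\mathcal{O}$, maintaining a map $\pi:\mathcal{V}\to\Delta^\mathcal{I}$ with exactly the invariants (membership of labels and preservation of $R$-successors) that the paper calls Properties~(\ref{tableau-prop1}) and (\ref{tableau-prop2}), using $\mathcal{I}$ to resolve the nondeterminism of the $\sqcup$-rule and to supply witnesses for the $\exists$-rule, and concluding clash-freeness because a clash would contradict $\pi(x)\in A^\mathcal{I}\cap(\neg A)^\mathcal{I}$. The rule-by-rule verification you sketch, including the $\forall$-family with transitive roles and the ABox variants, matches the paper's case analysis, so there is nothing substantive to add.
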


Lemmas~\ref{lem:label-concept}, \ref{lem:tab-mod} and  \ref{lem:mod-tab} allow us to know that if each tableau $\mathbf{T}$ defined from a universal tableau $T$  for  $\mathcal{O}$ has clash, then  $\mathcal{O}$ is set-theoretically inconsistent. The following lemma establishes the ``set-category" direction of Theorem~\ref{thm:sem-equiv}.

\begin{lemma}\label{lem:set-cat} Let  $\mathcal{O}$  be an $\mathcal{SH}$ ontology, and $T=\langle V_0,\cdots,V_n, E,   L\rangle$  be a universal tableau of $\mathcal{O}$. If all tableau $\mathbf{T}$ built from $T$ have a clash,  then   $\mathcal{O}$ is categorically inconsistent.
\end{lemma}

\begin{proof} 
Assume that there is a universal tableau $T$ such that each tableau $\mathbf{T}$ built from $T$ has a clash. This implies that there is  a sequence $S$ of   applications of properties from  Definition~\ref{def:uni-tableau}, namely     \ref{def:uni-tableau:ind}, \ref{def:uni-tableau:ax}, \ref{def:uni-tableau:conj}, \ref{def:uni-tableau:disj}, \ref{def:uni-tableau:exists}, \ref{def:uni-tableau:forall}, \ref{def:uni-tableau:forall-trans} such that  $S$  allows to build $T$.  Recall that due to minimality of $T$ including $L(x)$, $V_i$ and $\mathbb{S}(x)$, each element in these sets must be added by applying a property in Definition~\ref{def:uni-tableau} (otherwise, i.e if there is a node or concept in $T$ that does not result from an application of a property, then it can be removed from $T$ such that the modified $T$ remains a universal tableau). We show that there is a sequence of applications of properties in $\psh$   that ensures that each  concept category $\mathscr{C}_c\langle \mathcal{O}\rangle$ contains $\{a\}\rightarrow \bot\in \mathsf{Hom}(\mathscr{C}_c\langle \mathcal{O}\rangle)$ for some individual $a$. For each $\mathscr{C}_c\langle \mathcal{O}\rangle$ we use  $\mathscr{C}_{c,r}\langle \mathcal{O}\rangle$ to denote the corresponding  role category. We show that for each node $x\in V_i$ there is some object $C_x$ such that $C_x\rightarrow  C\in \mathsf{Hom}(\mathscr{C}_c\langle \mathcal{O}\rangle)$ for all $C\in L(x)$.

\begin{enumerate}[leftmargin=0cm, itemindent=0.5cm]
 \item\label{thm:set-cat:proof:1}  For  $x^a\in V_0$, let $C_{x^a}=\{a\}$. We show that if a concept  $X$  added to  $L(x^a)$  by some property in Definition~\ref{def:uni-tableau} except for \ref{def:uni-tableau:disj},  then $C_x\rightarrow X\in \mathsf{Hom}(\mathscr{C}_c\langle \mathcal{O}\rangle)$.  We prove   by induction on the length of $S$. Let's consider the following cases.
\begin{itemize}[leftmargin=0cm, itemindent=0.6cm]
\item $X$ is added to $L(x^a)$ by \ref{def:uni-tableau:ind} with $\{a\}\sqsubseteq X\in \mathcal{O}$. By
\ref{org-axioms}, we have $C_{x^a}=\{a\}\rightarrow X\in \mathsf{Hom}(\mathscr{C}_c\langle \mathcal{O}\rangle)$.

\item $X$ is added to $L(x^a)$ by   \ref{def:uni-tableau:ax} with  $X=\mathsf{NNF}(\neg C)\sqcup D$ for $C \sqsubseteq D\in \mathcal{O}$. By \ref{org-syntax-concept}, 
\ref{org-axioms}, \ref{org-neg-top}, \ref{org-disj-d} and \ref{org-disj-c}, we have $C_{x^a}=\{a\}\rightarrow \top \rightarrow  \mathsf{NNF}(\neg C)\sqcup C\rightarrow \mathsf{NNF}(\neg C)\sqcup D\in \mathsf{Hom}(\mathscr{C}_c\langle \mathcal{O}\rangle)$.

\item  $X$ is added to $L(x^a)$ by \ref{def:uni-tableau:conj} with $X\sqcap C\in L(x^a)$. By induction, we have $C_{x^a}=\{a\}\rightarrow X\sqcap C\in \mathsf{Hom}(\mathscr{C}_c\langle \mathcal{O}\rangle)$. By
\ref{org-conj-d}, we have $C_{x^a}\rightarrow X\in \mathsf{Hom}(\mathscr{C}_c\langle \mathcal{O}\rangle)$.

\item  $X$  is never added by \ref{def:uni-tableau:exists} since $x^a\in V_0$, and thus \ref{def:uni-tableau:exists} is not applicable to any node in $V_0$.

\item  $X$  is added to $L(x^a)$ by \ref{def:uni-tableau:forall} where $x^a$ is a $P$-candidate root successor of $x^b$ and $\forall R.X\in L(x^b)$ with $ \{(a,b)\} \rightarrow P, P\rightarrow R\in \mathsf{Hom}(\mathscr{C}_r\langle \mathcal{O}\rangle)$. By induction, $\{b\}\rightarrow \forall R.X$. By \ref{org-exists-forall-ind},  $C_{x^a}=\{a\}\rightarrow X\in \mathsf{Hom}(\mathscr{C}_c\langle \mathcal{O}\rangle)$.

\item  $X$ is added by \ref{def:uni-tableau:forall-trans} to $L(x^a)$ with $X=\forall S.D$    where $x^a$ is a $P$-candidate root successor of $x^b$ and $\forall R.D\in L(x^b)$ with $\{(a,b)\} \rightarrow P, P\rightarrow S, S\circ S\rightarrow S, S\rightarrow R \in \mathsf{Hom}(\mathscr{C}_{c,r}\langle \mathcal{O}\rangle)$.  By induction, $\{b\}\rightarrow \forall R.D$, and thus by \ref{org-exists-forall-comp-ind},  $C_{x^a}=\{a\}\rightarrow \forall S.D\in \mathsf{Hom}(\mathscr{C}_c\langle \mathcal{O}\rangle)$.
\end{itemize}

\item\label{thm:set-cat:proof:2} We now consider \ref{def:uni-tableau:disj}. For $x^a\in V_0$, it holds that $C_{x^a}=\{a\}\rightarrow C \in \mathsf{Hom}(\mathscr{C}_c\langle \mathcal{O}\rangle)$ for all concept $C$  added to $L(x^a)$ by properties in Definition~\ref{def:uni-tableau} different from \ref{def:uni-tableau:disj} due to Item~\ref{thm:set-cat:proof:1}. 
By the construction of $\mathbb{T}(x^a)$ (Definition~\ref{def:tree}) and \ref{org-dist}, for each  interior node $v$ of $\mathbb{T}(x^a)$ it holds that 

$\displaystyle{C_{x^a} \rightarrow  (\bigsqcap_{C\in L(v')} C \sqcap E) \sqcup (\bigsqcap_{C\in L(v'')} C \sqcap F)\in \mathsf{Hom}(\mathscr{C}_c\langle \mathcal{O}\rangle)}$  where $E\sqcup F\in L(v)$, $E\neq F$ and $v',v''$ are successors of $v$. According to Claim~\ref{claim:tab}, the leaves of $\mathbb{T}(x^a)$ are nodes in $\mathbb{S}(x^a)$. By \ref{org-syntax-concept} (transitivity), \ref{org-dist}, 
it holds that   $C_{x^a} \rightarrow  \displaystyle{ \bigsqcup_{y^a\in \mathbb{S}(x^a)} (\bigsqcap_{C\in L(y^a)} C)}  \in \mathsf{Hom}(\mathscr{C}_c\langle \mathcal{O}\rangle)$.

Let $x\in V_i$ with $i>0$ such that $x$ has a predecessor $y$. By construction and minimality of $T$, $\mathbf{T}_\exists$ is applied  to  some  $\exists P.C\in L(y)$ and $C\in L(x)$.  By \ref{org-exists-filler}, we have $p_r(\mathfrak{R}^\varnothing_{(\exists P.C)}) \rightarrow C\in  \mathsf{Hom}(\mathscr{C}_c\langle \mathcal{O}\rangle)$. Due to \ref{org-exists-forall} and \ref{org-exists-forall-comp}, if $\mathcal{C}$ is the set of all concepts of the form $\forall R.D\in L(y)$ where $P\rightarrow R \in  \mathsf{Hom}(\mathscr{C}_c\langle \mathcal{O}\rangle)$,  or $P\rightarrow S \rightarrow R, S\circ S\rightarrow S\in   \Homrcat$, then $p_r(\mathfrak{R}^\mathcal{C}_{(\exists P.C)}) \rightarrow X\in  \mathsf{Hom}(\mathscr{C}_c\langle \mathcal{O}\rangle)$ where $X=C$, $X=D$ or $X=\forall S.D$. Moreover, $p_r(\mathfrak{R}^\mathcal{C}_{(\exists P.C)}) \rightarrow \top \rightarrow \mathsf{NNF}(\neg C) \sqcup C \rightarrow \mathsf{NNF}(\neg C) \sqcup D\in  \mathsf{Hom}(\mathscr{C}_c\langle \mathcal{O}\rangle)$ for each $C\sqsubseteq D\in \mathcal{O}$ by  \ref{org-syntax-concept}, 
\ref{org-axioms}, \ref{org-neg-top}, \ref{org-disj-d} and \ref{org-disj-c}. Let $C_x=p_r(\mathfrak{R}^\mathcal{C}_{(\exists P.C)})$. This implies that $C_x \rightarrow X \in \mathsf{Hom}(\mathscr{C}_c\langle \mathcal{O}\rangle) $ for all $X\in L(v_0)$ where $v_0$ is the root of $\mathbb{T}(x)$ associated with $\mathbb{S}(x)$ in Definition~\ref{def:tree}.  By \ref{org-syntax-concept} (transitivity), 
\ref{org-dist}
it holds that
$C_x\rightarrow  \displaystyle{ \bigsqcup_{x'\in \mathbb{S}(x)} (\bigsqcap_{C\in L(x')} C)}  \in \mathsf{Hom}(\mathscr{C}_c\langle \mathcal{O}\rangle)$. By induction on  level of $T$, $C_y\rightarrow X \in \mathsf{Hom}(\mathscr{C}_c\langle \mathcal{O}\rangle)$ for  all $X\in L(y)$, and $ X\rightarrow p_l(\mathfrak{R}^\mathcal{C}_{(\exists P.C)})\in \mathsf{Hom}(\mathscr{C}_c\langle \mathcal{O}\rangle)$ for  some $X\in L(y)$ due to \ref{org-exists-forall} or \ref{org-exists-forall-comp}. Thus,  $C_y\rightarrow p_l(\mathfrak{R}^\mathcal{C}_{(\exists P.C)}) \in \mathsf{Hom}(\mathscr{C}_c\langle \mathcal{O}\rangle)$.
   
\item\label{thm:proof:4} We show that if each tableau $\mathbf{T}$ built from $T$ has a clash, then there is some individual $a$ such that each $y^a\in \mathbb{S}(x^a)$  has a clash in $T$, i.e  $y^a\in \mathbb{S}(x^a)$ has a descendant $y$ in $T$ such that each $y'\in \mathbb{S}(y)$ has a direct or indirect  clash. We need to show that such a clash can be propagated from $y'$ to $y^a$ thanks to \ref{org-syntax-functor}.

According to Item~\ref{thm:set-cat:proof:2}, for each $y$ of $T$ that has a candidate successor $x$ there is a role object $\mathfrak{R}^\mathcal{C}_{(\exists P.C)}$ such that $\exists P.C\in L(y)$,  $\mathcal{C}\subseteq  L(y)$, and 
\begin{align}
&C_x\rightarrow  \displaystyle{ \bigsqcup_{x'\in \mathbb{S}(x)} (\bigsqcap_{C\in L(x')} C)}  \in \mathsf{Hom}(\mathscr{C}_c\langle \mathcal{O}\rangle) 
\label{thm:set-sat:proof-31}\\ 
&C_y  \rightarrow   p_l(\mathfrak{R}^\mathcal{C}_{(\exists P.C)}) \in \mathsf{Hom}(\mathscr{C}_c\langle \mathcal{O}\rangle) 
\label{thm:set-sat:proof-32}
\end{align}
Therefore, the clashes are propagated until $V_0$ via $p_r(\mathfrak{R}^\mathcal{C}_{(\exists P.C)})$,  $p_l(\mathfrak{R}^\mathcal{C}_{(\exists P.C)})$ and \ref{org-syntax-functor}. Indeed, if $x$ is a candidate successor of $y$ and each $x'\in \mathbb{S}(x)$ has a direct or indirect clash, then $\displaystyle{\bigsqcap_{C\in L(x')} C}  \rightarrow \bot\in \mathsf{Hom}(\mathscr{C}_c\langle \mathcal{O}\rangle)$ due to \ref{org-neg-sqcap} and \ref{org-conj-d}. By (\ref{thm:set-sat:proof-31}) and \ref{org-disj-c}, we have $C_x \rightarrow \bot$. By \ref{org-syntax-functor}, we obtain $p_l(\mathfrak{R}^\mathcal{C}_{(\exists P.C)})\rightarrow \bot$, and (\ref{thm:set-sat:proof-32}) we have $C_y \rightarrow \bot$.
Hence, we have   
 $\{a\}\rightarrow  \displaystyle{ \bigsqcup_{y^a\in \mathbb{S}(x^a)} (\bigsqcap_{C\in L(y^a)} C)}  \in \mathsf{Hom}(\mathscr{C}_c\langle \mathcal{O}\rangle)$ with $   \displaystyle{\bigsqcap_{C\in L(y^a)} C} \rightarrow \bot  \in \mathsf{Hom}(\mathscr{C}_c\langle \mathcal{O}\rangle)$. Therefore, $\{a\}\rightarrow \bot \in \mathsf{Hom}(\mathscr{C}_c\langle \mathcal{O}\rangle)$, and thus $\mathcal{O}$ is categorically inconsistent  according to Definition~\ref{def:SH-consistency}.
\end{enumerate}
\end{proof}

The following corollary is a consequence of the construction of  $\mathsf{Hom}(\mathscr{C}_c\langle \mathcal{O}\rangle)$ from a universal tableau $T$  presented in the proof of Lemma~\ref{lem:set-cat}.

\begin{corollary}\label{cor:rulesforSH}
     The construction of a concept category $\mathsf{Hom}(\mathscr{C}\langle \mathcal{O}\rangle)$ containing $\{a\}\rightarrow \bot$ from a set-theoretically inconsistent $\mathcal{SH}$ ontology $\mathcal{O}$ in NNF does not require the properties \ref{org-conj-c}, \ref{org-neg-bot}, \ref{org-neg-top}, \ref{org-exists-comp} and \ref{org-forall-weakened} from $\psh$.
\end{corollary}

The non-requirement of \ref{org-neg-bot}, \ref{org-neg-top} and \ref{org-forall-weakened} is due to the fact that concepts in a universal tableau $T$ are in NNF. \ref{org-conj-c} is covered by   \ref{org-dist}, \ref{org-conj-d}, \ref{org-disj-c}, \ref{org-disj-d} and \ref{org-syntax-concept}    
 while \ref{org-exists-comp} results from   \ref{org-dist} , \ref{org-neg-top} and \ref{org-exists-forall-comp}.

%
%

\section{Checking Categorical Consistency}\label{sec:sh-consistency}
 To reason on an $\mathcal{SH}$ ontology  under the categorical semantics, we need a set of rules in Table~\ref{tab:SHrules}, denoted $\rsh$, to build sets of objects and arrows. These rules represent a reduced version of $\psh$ since the  properties guaranteeing the NNF conversion  are no longer needed when concept objects are written in NNF.  
 Each rule $\mathsf{R}$ in Table~\ref{tab:SHrules} is composed of three components as follows: $\mathsf{R= \displaystyle{\frac{[premise]}{[conclusion]} :  [condition]} }$ 
  where $[\mathsf{premise}]$ (possibly empty) and $[\mathsf{conclusion}]$ can contain several arrows. These rules aim  to build from an $\mathcal{SH}$ ontology $\mathcal{O}$ a set $\mathsf{Hom}\tuple{\mathcal{O}}$ of arrows and a set $\mathsf{Ob}\tuple{\mathcal{O}}$ of objects such that $C\in \mathsf{Ob}\tuple{\mathcal{O}}$ iff there is an arrow $C\rightarrow D \in \mathsf{Hom}\tuple{\mathcal{O}}$ or $D\rightarrow C \in \mathsf{Hom}\tuple{\mathcal{O}}$.
 $\mathsf{R}$ is applicable if (i) one of the arrows in $[\mathsf{conclusion}]$ is not included in $\mathsf{Hom}\tuple{\mathcal{O}}$, (ii)  the arrows in  $[\mathsf{premise}]$ are present in $\mathsf{Hom}$,  and (iii) $[\mathsf{condition}]$ (possibly empty) holds. For instance, if $\mathsf{R}= \displaystyle{\frac{}{X\rightarrow X~~~X\rightarrow \top~~~\bot\rightarrow X}} $, then the $\mathsf{premise}$ of $\mathsf{R}$ is empty, the $\mathsf{condition}$ of $\mathsf{R}$ is empty (and thus they always hold), and the $\mathsf{conclusion}$ of $\mathsf{R}$ is $\{X\rightarrow X, X\rightarrow \top, \bot\rightarrow X\}$.
\vspace{0.1cm}
\begin{table}[H]   
  \hrule
  \vspace{0.2cm}
\begin{enumerate}[leftmargin=!, labelwidth=1em, align=left]

\item[\mylabel{sh-id}{$\mathsf{R_{id}}$}$:$] 
 
$\displaystyle{\frac{}{X \rightarrow X ~~~X \rightarrow \top~~~\bot \rightarrow X}}$

\item[\mylabel{sh-trans}{$\mathsf{R_{tr}}$}$:$] 
 
$\displaystyle{\frac{X \rightarrow Y~~~Y \rightarrow Z}{X \rightarrow Z} }$

\item[\mylabel{sh-functor}{$\mathsf{R_{f}}~$}$:$] 
 
$\displaystyle{\frac{R \rightarrow S}{p_l(R) \rightarrow p_l(S) ~~~p_r(R) \rightarrow p_r(S)}}~~~$
$\displaystyle{\frac{p_r(R) \rightarrow \bot}{p_l(R) \rightarrow \bot} }$

\item[\mylabel{sh-axioms}{$\mathsf{R_\sqsubseteq}$}$:$]
 
 $\displaystyle{ \frac{}{X\rightarrow Y  ~~~\top \rightarrow  \mathsf{NNF}(\neg X)\sqcup X ~~~\mathsf{NNF}(\neg X)\sqcup Y \rightarrow  \mathsf{NNF}(\neg X)\sqcup Y} : \mathsf{GCI} ~X\sqsubseteq Y \in \mathcal{O} }$

\item[\mylabel{sh-conj-d}{$\mathsf{R}_\sqcap^\mathsf{d}$}$:$]

$\displaystyle{\frac{}{C \sqcap D \rightarrow C~~~ C \sqcap D \rightarrow  D } : C\sqcap D\in  \mathsf{Ob}\langle \mathcal{O}\rangle}$

\item[\mylabel{sh-disj-d}{$\mathsf{R}^\mathsf{d}_\sqcup$}$:$]

$\displaystyle{\frac{C\sqcup D\rightarrow X}{C \rightarrow  X ~~~D\rightarrow X  } }$

\item[\mylabel{sh-disj-c}{$\mathsf{R}^\mathsf{c}_\sqcup$}$:$]

$\displaystyle{\frac{C \rightarrow X ~~~ D \rightarrow X }{C\sqcup D \rightarrow  X} : C\sqcup D\in  \mathsf{Ob}\langle \mathcal{O}\rangle }$

\item[\mylabel{sh-neg-sqcap}{$\mathsf{R}_\neg^\sqcap$}$:$] 
 
$\displaystyle{\frac{X  \rightarrow A ~~~ X  \rightarrow \neg A }{X \rightarrow   \bot}  }$

\item[\mylabel{sh-exists-filler}{$\mathsf{R}_\exists^\mathfrak{R}$}$:$]

$\displaystyle{\frac{ }{ \mathfrak{R}_{(\exists R.C)}\rightarrow R~~~ \exists R.C\leftrightarrows p_l(\mathfrak{R}_{(\exists R.C)}) ~~~p_r(\mathfrak{R}_{(\exists R.C)})\rightarrow C} : \exists R.C\in  \mathsf{Ob}\langle \mathcal{O}\rangle}$

\item[\mylabel{sh-exists}{$\mathsf{R}_\exists$}$:$]

$\displaystyle{\frac{R'\rightarrow R ~~~ p_r(R') \rightarrow C}{ p_l(R') \rightarrow p_l(\mathfrak{R}_{(\exists R.C)})}    :  \exists R.C\in  \mathsf{Ob}\langle \mathcal{O}\rangle}$

\item[\mylabel{sh-dist}{$\mathsf{R}_\sqcap^\sqcup$}$:$] 

$\displaystyle{\frac{ X\rightarrow  C ~~~ X \rightarrow  D\sqcup E  }{ X \rightarrow  (C\sqcap D)\sqcup (C\sqcap E) }}$

\item[\mylabel{sh-exists-forall}{$\mathsf{R_\exists^{\forall}}$}$:$]  

$\displaystyle{\frac{ X\rightarrow  p_l(\mathfrak{R}^\mathcal{C}_{\exists P.C}) ~~~  X\rightarrow \forall R.D~~~  P \rightarrow  R}{  \mathfrak{R}^{\mathcal{C}\cup\{\forall R.D\} }_{(\exists P.C)}\rightarrow \mathfrak{R}^\mathcal{C}_{(\exists P.C)} ~~~ X \leftrightarrows p_l(\mathfrak{R}^{\mathcal{C}\cup\{\forall R.D\}}_{(\exists P.C)}) ~~~p_r(\mathfrak{R}^{\mathcal{C}\cup\{\forall R.D\}}_{(\exists P.C)})\rightarrow D   } }$

\item[\mylabel{sh-exists-forall-ind}{$\mathsf{R}_\exists^{\forall \mathsf{I}}$}$:$]

$\displaystyle{\frac{\{(a,b)\}\rightarrow P ~~~ \{a\}\rightarrow \forall R.D ~~~P \rightarrow  R}{ \{b\}\rightarrow D}   }$

\item[\mylabel{sh-exists-forall-comp}{$\mathsf{R_{\exists \circ}^{\forall}}$}$:$]

$\displaystyle{\frac{X\rightarrow  p_l(\mathfrak{R}^\mathcal{C}_{\exists P.C}) ~~~ X\rightarrow\forall R.D  ~~~ P\rightarrow  S ~~~ S\circ S \rightarrow  S ~~~ S \rightarrow  R ~~~   }{\mathfrak{R}^{\mathcal{C}\cup \{\forall R.D\}}_{(\exists P.C)}\rightarrow\mathfrak{R}^\mathcal{C}_{(\exists P.C)}~~~ X \leftrightarrows  p_l(\mathfrak{R}^{\mathcal{C}\cup \{\forall R.D\}}_{(\exists P.C)})  ~~~ p_r(\mathfrak{R}^{\mathcal{C}\cup \{\forall R.D\}}_{(\exists P.C)}) \rightarrow \forall S.D}}$

\item[\mylabel{sh-exists-forall-comp-ind}{$\mathsf{R}_{\exists \circ}^{\forall\mathsf{I}}$}$:$]  

$\displaystyle{\frac{\{(a,b)\}\rightarrow P~~~\{a\}\rightarrow\forall R.D~~~P\rightarrow  S ~~~ S\circ S \rightarrow  S ~~~ S \rightarrow  R}{\{b\}\rightarrow   \forall S.D }  }$
 
\end{enumerate}
\hrule
\vspace{0.2cm}
 \caption{Categorical Rules $\rsh$ for $\mathcal{SH}$}
  \label{tab:SHrules}
\end{table}
 
It seems  that  $\rsh$ is weaker than $\psh$
since there are some of properties from $\psh$ such as \ref{org-conj-c}, \ref{org-neg-bot}, \ref{org-neg-top},   \ref{org-forall-weakened} and \ref{org-exists-comp}  
that are not explicitly expressed  in $\rsh$. However, the following lemma  affirms that  $\rsh$ is equivalent to $\psh$ if concept objects in categories are written in NNF. 

\begin{lemma}\label{lem:sh-consistency}
Let $\mathcal{O}$  be  an  $\mathcal{SH}$ ontology. 
Let  $\mathsf{Hom}\tuple{\mathcal{O}}$ be the set of arrows built from $\mathcal{O}$ by applying  rules  $\rsh$  in  Table~\ref{tab:SHrules} until no rule is applicable. It holds that 
\begin{enumerate}[leftmargin=0.1cm, itemindent=0.5cm] 


\item\label{lem:sh-consistency:1}   $\mathcal{O}$  is categorically inconsistent iff $\mathsf{Hom}\tuple{\mathcal{O}}$ contains   $\{a\}\rightarrow \bot$ for some individual $a$. 

\item\label{lem:sh-consistency:2} The cardinality of $\mathsf{Hom}\tuple{\mathcal{O}}$ is bounded by an exponential function in the size of $\mathcal{O}$ up to object isomorphism.

\end{enumerate}
\end{lemma}

\begin{proof} 
\begin{enumerate}[leftmargin=0.1cm, itemindent=0.5cm] 

\item  The direction  ``$\Longleftarrow$" of Item~\ref{lem:sh-consistency:1}. It holds that each rule from $\rsh$ adds an arrow $X\rightarrow Y$ to  $\mathsf{Hom}\tuple{\mathcal{O}}$, then  $\mathcal{O}\models X \sqsubseteq Y$  according to Theorem~\ref{thm:cat-set}.
Hence, if $\mathsf{Hom}\tuple{\mathcal{O}}$ contains   $\{a\}\rightarrow \bot$, then $\mathcal{O}\models \{a\}\sqsubseteq \bot$, and thus  $\mathcal{O}$ is set-theoretically inconsistent. By Theorem~\ref{thm:sem-equiv}, $\mathcal{O}$ is categorically inconsistent. 

The direction ``$\Longrightarrow$" of Item~\ref{lem:sh-consistency:1}. Assume that $\mathcal{O}$ is categorically  inconsistent. We have to show that there is a sequence  of applications of rules in $\rsh$ that allows to build $\mathsf{Hom}\tuple{\mathcal{O}}$ containing   $\{a\}\rightarrow \bot$. 
By Theorem~\ref{thm:sem-equiv},  $\mathcal{O}$ is set-theoretically inconsistent. By Lemma~\ref{lem:mod-tab},  every tableau $\mathbf{T}$ built from  universal tableau $T$ has a clash. According to Lemma~\ref{lem:set-cat} and Corollary~\ref{cor:rulesforSH},  there is a sequence $S$ of applications of properties from $\psh$  such that $S$ builds from $T$ a concept category $\occat$ containing $\{a\}\rightarrow \bot$ and  $S$ does not contain any application of  \ref{org-conj-c}, \ref{org-neg-bot},  \ref{org-neg-top}, \ref{org-forall-weakened} and \ref{org-exists-comp} under the hypothesis that all concept objects are written in NNF. Moreover,  \ref{org-neg-sqcup} is applied only if $C$ occurs in an axiom $C\sqsubseteq D\in \mathcal{O}$, and  \ref{org-neg-sqcap} is replaced with  the following : $X\rightarrow A$ and $X\rightarrow \neg A$ imply $X\rightarrow \bot$.

We use $\mathsf{Hom_i}(\occat)$ to denote the set of arrows added by $S$ from the beginning until an application $\mathsf{S^i_P}\in S$ where $\mathsf{P}\in \psh \setminus \{\text {\ref{org-conj-c}, \ref{org-neg-bot},  \ref{org-neg-top}, \ref{org-forall-weakened},\ref{org-exists-comp}}\}$. 
Hence, it suffices to show by induction on the length of $S$  that there are applications of rules in $\rsh$ for each $\mathsf{S^i_P}\in S$.  We use $\mathsf{Hom_i}\tuple{\mathcal{O}}$ to denote the set of arrows added by rules from $\rsh$ when $\mathsf{S^i_P}$ is applied. Assume that $\mathsf{Hom_0}(\occat)=\mathsf{Hom_0}\tuple{\mathcal{O}}=\varnothing$. For each update by $\mathsf{S^i_P}$, we show that $\mathsf{Hom_i}(\occat)\subseteq \mathsf{Hom_i}\tuple{\mathcal{O}}$.

\begin{itemize}[leftmargin=0.1cm, itemindent=0.5cm]
    \item Assume  that $\mathsf{P}\in \{\text{\ref{org-syntax-role},  \ref{org-syntax-concept},  \ref{org-syntax-functor}, \ref{org-axioms}}\}$ ensures the existence of an arrow $C\rightarrow D$ in $\mathsf{Hom_i}(\occat)$.  In this case, the premise of one of rules \ref{sh-id}, \ref{sh-trans}, \ref{sh-functor} and \ref{sh-axioms} holds  due to  $\mathsf{Hom_{i-1}}(\occat)\subseteq \mathsf{Hom_{i-1}}\tuple{\mathcal{O}}$ by induction. Hence, it can add  $C\rightarrow D$ to $\mathsf{Hom_i}\tuple{\mathcal{O}}$. Therefore, $\mathsf{Hom_i}(\occat)\subseteq \mathsf{Hom_i}\tuple{\mathcal{O}}$ holds.

    \item Assume  that $\mathsf{P}\in \{\text{\ref{org-neg-sqcup},  \ref{org-neg-sqcap}}\}$ ensures the existence of an arrow $C\rightarrow D$ in $\mathsf{Hom_i}(\occat)$ under the context described above. In this case,  the premise of one of  rules \ref{sh-axioms}, \ref{sh-neg-sqcap} holds  by induction. Hence, it can add
    $C\rightarrow D$ to $\mathsf{Hom_i}\tuple{\mathcal{O}}$   Hence, $\mathsf{Hom_i}(\occat)\subseteq \mathsf{Hom_i}\tuple{\mathcal{O}}$ holds.

    \item Assume  that $\mathsf{P}\in \{\text{\ref{org-conj-d}, \ref{org-disj-d}, \ref{org-disj-c},  \ref{org-exists-filler}, \ref{org-exists}, \ref{org-dist}, \ref{org-exists-forall}, \ref{org-exists-forall-ind}, \ref{org-exists-forall-comp}, \ref{org-exists-forall-comp-ind}   }\}$ ensures the existence of an arrow $C\rightarrow D$ in $\mathsf{Hom_i}(\occat)$. In this case,  the premise of one of rules \ref{sh-conj-d},  
    \ref{sh-disj-d}, \ref{sh-disj-c},  \ref{sh-exists-filler}, \ref{sh-exists}, \ref{sh-dist}, \ref{sh-exists-forall}, \ref{sh-exists-forall-ind}, \ref{sh-exists-forall-comp}, \ref{sh-exists-forall-comp-ind} holds  by induction. Hence, it can add
    $C\rightarrow D$ to $\mathsf{Hom_i}\tuple{\mathcal{O}}$   Hence, $\mathsf{Hom_i}(\occat)\subseteq \mathsf{Hom_i}\tuple{\mathcal{O}}$ holds.
     
\end{itemize}

Since $S$ is finite, there is some $i$ such that $\mathsf{S^i_P}$ adds $\{a\}\rightarrow \bot$ to $\mathsf{Hom_i}(\occat)$. Hence,  $\{a\}\rightarrow \bot\in \mathsf{Hom_i}\tuple{\mathcal{O}}$ due to $\mathsf{Hom_i}(\occat)\subseteq \mathsf{Hom_i}\tuple{\mathcal{O}}$.
   
\item We show Item~\ref{lem:sh-consistency:2}. Since there are at most two arrows between 2 objects, and a rule is not applicable if the arrows in the conclusion of the rule are already in  $\mathsf{Hom}\tuple{\mathcal{O}}$, it suffices to compute the number of objects added by rules to determine the cardinality of $\mathsf{Hom}\tuple{\mathcal{O}}$.
If an ontology $\mathcal{O}$ can be seen as a string of length $k$, then  $\lvert \mathcal{O}\rvert=k$ where $\lvert \mathcal{O}\rvert$   denotes the size of $\mathcal{O}$. Hence, the number of concept objects from $\mathcal{O}$ is bounded by  $O(k^2)$. We use  $\mathbf{C}_0$, $\mathbf{R}_0$ and $\mathbf{I}_0$  to denote respectively the sets of concept names, role names and individuals occurring in  $\mathcal{O}$.  We determine  extensions $\mathbf{C}, \mathbf{R}$ and $\mathbf{I}$ from  $\mathbf{C}_0$, $\mathbf{R}_0$ and $\mathbf{I}_0$   as follows: (i)   $\mathbf{C}, \mathbf{R}, \mathbf{I}$ are the smallest sets such that $\mathbf{C}_0\subseteq \mathbf{C}$, $\mathbf{R}_0\subseteq \mathbf{R}$ and $\mathbf{I}_0\subseteq \mathbf{I}$, and (ii) for each concept of the form $\exists R.C$ occurring in $\mathcal{O}$, and each subset $\mathcal{C}$ of the set of all concepts of the form $\forall R.D$ occurring in $\mathcal{O}$, $\mathbf{R}$ contains a role name $\mathfrak{R}^\mathcal{C}_{(\exists R.C)}$, and  $\mathbf{C}$ contains  two concept names $p_l(\mathfrak{R}^\mathcal{C}_{(\exists R.C)})$ and $p_r(\mathfrak{R}^\mathcal{C}_{(\exists R.C)})$. We consider the rules involved in generating of these fresh role objects.

\begin{itemize}[leftmargin=0.1cm, itemindent=0.2cm]
\item It holds that the number of concepts   of the form $\exists R.C$ and  $\forall R.C$ occurring in $\mathcal{O}$ is bounded by $k$. If a rule from $\rsh$ adds an object of the form $\exists R.C$ to $\mathsf{Ob}\tuple{\mathcal{O}}$, then $\exists R.C$ must occur in $\mathcal{O}$. This implies that    the number of concept objects   of the form $\exists R.C$ in $\mathsf{Ob}\tuple{\mathcal{O}}$ is bounded by $k^2$. Hence,  the number of  fresh role objects   of the form $\mathfrak{R}^\varnothing_{\exists R.C}$ added by \ref{sh-exists-filler}  to $\mathsf{Ob}\tuple{\mathcal{O}}$ is bounded by $k^2$. 

\item Two rules  \ref{sh-exists-forall-comp}, \ref{sh-exists-forall-comp-ind} can add fresh  concept objects of the form $\forall S.D$ if there is a concept $\forall R.D$ and a transitive role $S$ occur in $\mathcal{O}$. This implies that    the number of objects   of the form $\forall S.D$ in $\mathsf{Ob}\tuple{\mathcal{O}}$ is bounded by $k^2$ where $k$ bounds the number of roles $S\in \mathbf{R}_0$, and the number of concepts of the form $\forall R.D$ occurring in $\mathcal{O}$.

\item Two rules \ref{sh-exists-forall}, \ref{sh-exists-forall-comp} can add fresh  role objects   of the form $\mathfrak{R}^\mathcal{C}_{\exists R.C}$ where $\mathcal{C}$ is a subset of the set of all concept objects of the form $\forall R.C$. This implies that the number of role objects   of the form $\mathfrak{R}^\mathcal{C}_{\exists R.C}$ in $\mathsf{Ob}\tuple{\mathcal{O}}$  is bounded by $k2^{k}$ where $k$ bounds the number of concepts of the forms $\exists R.C$ and  $\forall R.C$ occurring in $\mathcal{O}$; and $2^k$ bounds the number of different $\mathcal{C}$. Therefore, $\lvert \mathbf{R}\rvert \leqslant O(k2^{k})$.

\item The rule \ref{sh-functor} adds two fresh concept objects $p_l(R)$ and $p_r(R)$ for each role object $R$. Therefore, $\lvert \mathbf{C}\rvert \leqslant O(2k2^{k})$.
 
\item \ref{sh-id} and \ref{sh-trans} add no new object.

\item \ref{sh-axioms} adds a new concept objects $\mathsf{NNF}(\neg X)\sqcup X$, $\mathsf{NNF}(\neg X)\sqcup Y, \mathsf{NNF}(\neg X), Y$ for each axiom $X\sqsubseteq Y\in \mathcal{O}$.  Therefore, the number of concept objects added by this rule is bounded by $4k$ where $k$ bounds the number of each new object.

\item All rules \ref{sh-conj-d} to \ref{sh-neg-sqcap} and  \ref{sh-exists} add no new object.

\item  \ref{sh-dist} adds new concept objects of the form $(C\sqcap D)\sqcup  (C\sqcap E), C\sqcap D, C\sqcap E$ from arrows $X\rightarrow C$ and $X \rightarrow  D\sqcup E$ where disjunction $D\sqcup E$ occurs in $\mathcal{O}$ or it is added by \ref{sh-axioms} or it is added by \ref{sh-dist} itself. It holds that  all objects produced by \ref{sh-dist}  can be regrouped in  OR-trees with branching nodes corresponding to $\sqcup$ such that the depth of trees is bounded by $d$. We show that $d\leqslant k$. 

Since a fresh concept name $p_l(\mathfrak{R}^\mathcal{C}_{(\exists P.C)})$ is introduced in $\mathsf{Hom}\tuple{\mathcal{O}}$ with arrows $X\leftrightarrows p_l(\mathfrak{R}^\mathcal{C}_{(\exists P.C)})$ by \ref{sh-exists-forall} or \ref{sh-exists-forall-comp}, it  is always isomorphic to some concept $X$ occurring in $\mathcal{O}$. However, $p_r(\mathfrak{R}^\mathcal{C}_{(\exists P.C)})$ is introduced in $\mathsf{Hom}\tuple{\mathcal{O}}$ with an arrow $p_r(\mathfrak{R}^\mathcal{C}_{(\exists P.C)}) \rightarrow D$ or $p_r(\mathfrak{R}^\mathcal{C}_{(\exists P.C)}) \rightarrow \forall S.D$  by \ref{sh-exists-forall} or \ref{sh-exists-forall-comp}.   Hence, $p_r(\mathfrak{R}^\mathcal{C}_{(\exists P.C)})$ can occur in a conjunction generated by \ref{sh-dist} but it never occurs in a disjunction as a disjunct like   $(p_r(\mathfrak{R}^\mathcal{C}_{(\exists P.C)}) \sqcup Y)$. This implies that if $p_r(\mathfrak{R}^\mathcal{C}_{(\exists P.C)})$ occurs in a node of an OR-tree, it occurs in both successors of that node. Hence, the depth $d$ of each OR-tree is bounded by the number of disjunctions occurring in $\mathcal{O}$ or added by \ref{sh-axioms}, and thus $d\leqslant k$. Therefore, the number of conjunctions generated by \ref{sh-dist} for all OR-trees is bounded by $k2^k$ up to object isomorphism.
\end{itemize}

We have proved that each rule from $\psh$ adds an exponential number of objects in the size of $\mathcal{O}$. Therefore, the number of objects added by all rules in $\psh$ is bounded by an exponential function in the size of $\mathcal{O}$, and thus the cardinality of $\mathsf{Hom}\tuple{\mathcal{O}}$   is bounded by an exponential function   in the size of $\mathcal{O}$ up to object isomorphism.
\end{enumerate}
\end{proof}

\begin{example}\label{ex:def3running}
We consider an ontology $\mathcal{O}$ consisting of the following axioms:

\begin{tabular}{ccccc}
$\mathsf{(ax1):}\{a\}\sqsubseteq \exists S.C$   
&$(\mathsf{ax2):}\{a\}\sqsubseteq \forall S.\neg D$
&$(\mathsf{ax3):}C\sqsubseteq \exists S.D$ 
&$(\mathsf{ax4):}S\circ S\sqsubseteq S$
\end{tabular}

Note that $\mathcal{O}$ is  set-theoretically inconsistent. We now check categorical consistency  of $\mathcal{O}$ by applying categorical rules  $\rsh$ to the axioms of $\mathcal{O}$ with arrows generated on the right-hand side.
\begin{align}
   &\text {\ref{sh-axioms} to } (\mathsf{ax1}), (\mathsf{ax2}) & \{a\}\rightarrow \exists S.C, \{a\}\rightarrow \forall S.\neg D
   \label{ex:run1}\\
  &\text{\ref{sh-exists-filler} to } \exists S.C & p_r(\mathfrak{R}^\varnothing_{(\exists S.C)}) \rightarrow  C     \label{ex:run3}\\
  & \text{\ref{sh-id}, \ref{sh-axioms}  to } (\mathsf{ax3}),\text{ \ref{sh-disj-d}, \ref{sh-disj-c},   \ref{sh-trans} }&C \rightarrow \top \rightarrow \neg C \sqcup C\rightarrow \neg C\sqcup \exists S.D     \label{ex:run4}\\
  & \text{\ref{sh-trans} to (\ref{ex:run4})}&C \rightarrow   \neg C\sqcup \exists S.D     \label{ex:run41}\\
  &\text{\ref{sh-dist} to } (\ref{ex:run4}) \text{ and  } C \rightarrow C &C \rightarrow (\neg C\sqcap C) \sqcup (C\sqcap \exists S.D)   \label{ex:run5}\\
  & \text{\ref{sh-neg-sqcap} and  \ref{sh-id} }&\neg C \sqcap C \rightarrow  \bot \rightarrow C\sqcap \exists S.D      \label{ex:run6}\\
  & \text {\ref{sh-disj-c} to  (\ref{ex:run6})}&(\neg C \sqcap C)\sqcup (C\sqcap \exists S.D) \rightarrow  (C\sqcap \exists S.D)       \label{ex:run7}\\
  &\text {\ref{sh-trans} to  (\ref{ex:run5}), (\ref{ex:run7}) and \ref{sh-conj-d}}& C \rightarrow   \exists S.D       \label{ex:run8}\\
   &\text {\ref{sh-exists-forall-comp} to  (\ref{ex:run1}) and } (\mathsf{ax4})& \{a\} \rightarrow p_l(\mathfrak{R}_{(\exists S.C)}^{\{\forall S.\neg D\}}), \mathfrak{R}_{(\exists S.C)}^{\{\forall S.\neg D\}}\rightarrow \mathfrak{R}^\varnothing_{(\exists S.C)}      \label{ex:run10}\\
   &\text {\ref{sh-exists-forall-comp} to   (\ref{ex:run1})  and } (\mathsf{ax4})&p_r(\mathfrak{R}_{(\exists S.C)}^{\{\forall S.\neg D\}}) \rightarrow   \forall S. \neg D    \label{ex:run12}\\
   & \text{\ref{sh-functor} to (\ref{ex:run10}), (\ref{ex:run3}) and (\ref{ex:run8})}  &p_r(\mathfrak{R}_{(\exists S.C)}^{\{\forall S.\neg D\}}) \rightarrow  p_r(\mathfrak{R}^\varnothing_{(\exists S.C)}) \rightarrow C \rightarrow \exists S.D \label{ex:run12a}\\
   &\text {\ref{sh-exists-forall}, \ref{sh-trans} to (\ref{ex:run12}) and   (\ref{ex:run12a})} &p_r(\mathfrak{R}_{(\exists S.C)}^{\{\forall S.\neg D\}}) \rightarrow  p_l(\mathfrak{R}_{(\exists S.D)}^{\{\forall  S.\neg D\}}), p_r(\mathfrak{R}_{(\exists S.D)}^{\{\forall S.\neg D\}})\rightarrow \neg D   
   \label{ex:run12b}\\
    & \text{\ref{sh-exists-forall} to (\ref{ex:run12})  and (\ref{ex:run12a})}& \mathfrak{R}_{(\exists S.D)}^{\{\forall R.\neg D\}}  \rightarrow   \mathfrak{R}^\varnothing_{(\exists S.D)}  
    \label{ex:run12b2}\\
    &\text{\ref{sh-functor} to (\ref{ex:run12b2}),   \ref{sh-exists-filler} to } \exists S.D&p_r(\mathfrak{R}_{(\exists S.D)}^{\{\forall S.\neg D\}})  \rightarrow p_r(\mathfrak{R}^\varnothing_{(\exists S.D)}) \rightarrow D 
    \label{ex:run12b3}\\
    & \text {\ref{sh-neg-sqcap} and \ref{sh-trans} to (\ref{ex:run12b})  and  (\ref{ex:run12b3})}&  p_r(\mathfrak{R}_{(\exists S.D)}^{\{\forall S.\neg D\}})  \rightarrow \bot   \label{ex:run14} \\
    & \text {\ref{sh-functor} to (\ref{ex:run14})}& p_l(\mathfrak{R}_{(\exists S.D)}^{\{\forall S.\neg D\}})  \rightarrow \bot  \label{ex:run15} \\
    & \text {\ref{sh-trans} to (\ref{ex:run15})  and (\ref{ex:run12b})}&p_r(\mathfrak{R}_{(\exists S.C)}^{\{\forall S.\neg D\}})  \rightarrow \bot 
     \label{ex:run16} \\
    & \text {\ref{sh-functor} to (\ref{ex:run16})}&p_l(\mathfrak{R}_{(\exists S.C)}^{\{\forall S.\neg D\}})  \rightarrow \bot 
     \label{ex:run17} \\
  & \text{\ref{sh-trans} to (\ref{ex:run17})  and (\ref{ex:run10})}& \{a\} \rightarrow  \bot   \label{ex:run20}
\end{align}
Hence, $\mathcal{O}$ is categorically inconsistent.\hfill$\square$
\end{example} 
%
%
%
%

\section{A new tractable extension $\mathcal{EL}^{\rightarrow}$ of 
$\mathcal{EL}$} \label{sec:newDL}
 In this section, we introduce a novel tractable logic less expressive than $\mathcal{SH}$, namely  $\mathcal{EL}^{\rightarrow}$, but it is  capable of expressing negative knowledge. 
It is known that \ref{sh-dist} can generate a disjunction of    $2^n$ disjuncts from a conjunction such as $(C_1\sqcap D_1)\sqcup \cdots \sqcup  (C_n\sqcap D_n)$. Moreover, 
\ref{sh-exists-forall} or \ref{sh-exists-forall-comp} may lead to adding an exponential number of fresh roles $\mathfrak{R}_{(\exists R.C)}^{\mathcal{C}}$ where $\mathcal{C}=\{\forall R.C_1,\cdots, \forall R.C_n\}$ is an arbitrary set of universal restrictions on $R$ occuring in an ontology. Hence,  it is necessary to drop or weaken \ref{org-dist}, \ref{org-exists-forall} and \ref{org-exists-forall-comp} from the categorical semantics of $\mathcal{SH}$ to obtain tractability.
As indicated in Example~\ref{ex:distrib}, \ref{org-dist} is   independent from  the other properties, and partly responsible for intractability. It is necessary to drop it  if we aim at obtaining a tractable logic. Moreover, \ref{org-exists-forall} and \ref{org-exists-forall-comp} are also responsible for intractability but they involve other constructors  such as role inclusion and transitive role. Hence, instead of dropping them we should replace  \ref{org-exists-forall} and \ref{org-exists-forall-comp} with  their weakened version \ref{org-exists-forall-weak} and \ref{org-exists-forall-comp-weak}  that prevent from  generating an exponential number of roles of the form $\mathfrak{R}^\mathcal{C}_{(\exists r.C)}$, i.e \ref{org-exists-forall-weak} and \ref{org-exists-forall-comp-weak} generate uniquely $\mathfrak{R}^\mathcal{C}_{(\exists R.C)}$ where  $\mathcal{C}$ is singleton.

\begin{enumerate}[leftmargin=!, labelwidth=1em, align=left]
\item[\mylabel{org-exists-forall-weak}{$\mathbb{P}_\exists^{\forall}~$}$:$]  $X\rightarrow \exists P.C, X\rightarrow \forall R.D \in \mathsf{Hom}(\mathscr{C}_c)$, $P\rightarrow R\in \mathsf{Hom}(\mathscr{C}_r)$ 
imply  $\mathfrak{R}^{\forall R.D}_{(\exists P.C)}\rightarrow \mathfrak{R}^{\varnothing}_{(\exists P.C)}\in \mathsf{Hom}(\mathscr{C}_r)$, $    X\leftrightarrows p_l(\mathfrak{R}^{\forall R.D}_{(\exists P.C)})$, $p_r(\mathfrak{R}^{\forall R.D}_{(\exists P.C)})\rightarrow D \in \mathsf{Hom}(\mathscr{C}_c)$.
 
\item[\mylabel{org-exists-forall-comp-weak}{$\mathbb{P}_{\exists\circ}^{\forall}$}$:$]  $X\rightarrow  \exists P.C, X\rightarrow  \forall R.D \in\mathsf{Hom}(\mathscr{C}_c),  P\rightarrow S, S\circ S\rightarrow S, S\rightarrow R \in \mathsf{Hom}(\mathscr{C}_r)$  imply $\mathfrak{R}^{\forall R.D}_{(\exists P.C)}\rightarrow \mathfrak{R}^{\varnothing}_{(\exists P.C)}\in \mathsf{Hom}(\mathscr{C}_r)$, $X \leftrightarrows  p_l(\mathfrak{R}^{\forall R.D}_{(\exists P.C)}), p_r(\mathfrak{R}^{\forall R.D}_{(\exists P.C)}) \rightarrow   \forall S.D \in \mathsf{Hom}(\mathscr{C}_c)$.  
\end{enumerate}

Replacing \ref{org-exists-forall} and \ref{org-exists-forall-comp} with \ref{org-exists-forall-weak} and \ref{org-exists-forall-comp-weak} leads to losing the capacity of detecting  $n$-disjointness such as $\exists R.C\sqcap\forall R.D_1\sqcap \cdots \sqcap \forall R.D_n\rightarrow \bot$ from $C\sqcap D_1\sqcap \cdots \sqcap D_n\rightarrow \bot$ while it allows to maintain the capacity of detecting  binary disjointness such as $\exists R.C\sqcap\forall R.D \rightarrow \bot$ from $C\sqcap D\rightarrow \bot$. As indicated in Example~\ref{ex:intro},  binary disjointness occurs much more frequently than $n$-disjointness with $n>2$ in biomedical ontologies. Furthermore, dropping \ref{org-dist} leads  to losing the capacity of detecting $C\sqcap (D\sqcup E)\rightarrow \bot$ from both $C\sqcap D\rightarrow \bot$ and $C\sqcap E\rightarrow \bot$ while it allows to maintain the capacity of detecting $C\sqcup D\rightarrow \bot$ from $C \rightarrow \bot$ and    $D \rightarrow \bot$, which would be sufficient for some biomedical ontologies as mentioned in Section~\ref{sec:intro}.

By dropping \ref{org-dist} and weakening \ref{org-exists-forall}, \ref{org-exists-forall-comp},  the semantics of    $\sqcup$ and $\forall$ are  weakened. Moreover, since the negation $\neg$ is defined using the property \ref{org-neg-sqcup} with the weakened $\sqcup$, the negation is also weakened. We use $\bar{\sqcup}$,  $\bar{\forall}$ and $\bar{\neg}$ to denote weakened disjunction,  weakened universal restriction and weakened negation respectively.

\begin{definition}[The logic $\mathcal{EL}^{\rightarrow}$]\label{def:ELA} Let $\mathbf{C}$, $\mathbf{R}$  and $\mathbf{I}$ be disjoint nonempty sets of concept names, role names and individuals respectively. Concepts in $\mathcal{EL}^{\rightarrow}$ over  $\mathbf{C}$, $\mathbf{R}$  and $\mathbf{I}$ are defined in the same way as those in $\mathcal{SH}$ with all constructors  except that  $\sqcup$, $\forall$ and $\neg$ are replaced with $\bar{\sqcup}$, $\bar{\forall}$ and $\bar{\neg}$. 
An ontology $\mathcal{O}$ in $\mathcal{EL}^{\rightarrow}$ is a set  of statements  allowed in an ontology   in $\mathcal{SH}$, namely  general concept inclusions (GCI),  role inclusions (RI),  inclusions for transitive role (ITR),  concept assertions, and  role assertions. The semantics of $\mathcal{O}$ is defined  using a set $\pela$ of categorical properties   obtained from  $\psh$ by dropping  \ref{org-dist}, and replacing   \ref{org-exists-forall} and  \ref{org-exists-forall-comp}  with   \ref{org-exists-forall-weak} and  \ref{org-exists-forall-comp-weak} respectively. 
\end{definition}
 
Note that  $\pela$ contains   \ref{org-conj-d}, \ref{org-conj-c}, \ref{org-disj-d}, \ref{org-disj-c}, \ref{org-neg-bot}, \ref{org-neg-top}, \ref{org-neg-sqcap}, \ref{org-neg-sqcup},  \ref{org-forall-weakened} and \ref{org-exists-comp}. Hence, De Morgan's Laws, double negation and the duality of  $\exists$ and $\forall$ remain to hold due to Lemma~\ref{lem:de-morgan}. This allows us to assume that all concepts in an $\mathcal{EL}^{\rightarrow}$ ontology are written in $\mathsf{NNF}$ when designing an algorithm for reasoning.

\begin{theorem}\label{thm:ela-minimality} Let $\mathcal{O}$ 
be an $\mathcal{EL}^\rightarrow$ ontology. There exists a unique minimal concept   category of $\mathcal{O}$ satisfying  all categorical properties in $\pela$. 
\end{theorem}
\begin{proof}  We use $\Gamma_c\tuple{\mathcal{O}}$ to denote the set of all concept categories of $\mathcal{O}$  satisfying  all properties in $\pela$. By definition, each $\mathscr{C}'_c\in \Gamma_c\tuple{\mathcal{O}}$ has a corresponding  role category, denoted $\mathscr{C}'_{c,r}$,  involved in properties from $\pela$. Let $\mathsf{H}_{c}(\mathscr{C}_m)=\displaystyle{\bigcap_{\mathscr{C}'_c\in \Gamma_c\tuple{\mathcal{O}}}\mathsf{Hom}(\mathscr{C}'_c)}$, and  $\mathsf{O}_{c}\tuple{\mathscr{C}_m}$ is the set of all objects involved in all arrows in $\mathsf{H}_{c}(\mathscr{C}_m)$. We can use the same argument to prove Theorem~\ref{thm:cat-minimality} to show that $\mathsf{H}_c(\mathscr{C}_m)$  is the set of arrows of a concept   category $\mathscr{C}_m$ for $\mathcal{O}$  such that  all properties in $\pela$ are satisfied in $\mathscr{C}_m$. Minimality and uniqueness follow from the definition of $\mathsf{H}_c(\mathscr{C}_m)$.
\end{proof}

\begin{definition}[Categorical consistency]\label{def:ELA-consistency} Let $\mathcal{O}$ 
be an $\mathcal{EL}^\rightarrow$ ontology from a set $\mathbf{C}$ of concept names, a set $\mathbf{R}$ of role names,   and a set $\mathbf{I}$ of  individuals.
We use $\mathscr{C}_c\tuple{\mathcal{O}}$ a to denote the minimal concept  category of $\mathcal{O}$ satisfying  all categorical properties in $\pela$.  
We say that $\mathcal{O}$ is (categorically) $\mathcal{EL}^\rightarrow$-inconsistent if 
$\{a\}\rightarrow \bot\in \mathsf{Hom}(\mathscr{C}_c\tuple{\mathcal{O}})$ for some $a\in \mathbf{I}$.  
\end{definition}

To check consistency of an ontology in $\mathcal{EL}^\rightarrow$, we use a set of rules, denoted  $\rela$, introduced in Table~\ref{tab:ELArules}. 
Each rule in  $\rela$ has the same presentation as those in $\rsh$ in Table~\ref{tab:SHrules}, i.e  each rule $\mathbb{R}$ in Table~\ref{tab:ELArules} is composed of three components as follows:
$\mathsf{\mathbb{R}= \displaystyle{\frac{[premise]}{[conclusion]} :  [condition]} }$. 
We can observe that \ref{el-id}, \ref{el-trans}, \ref{el-functor},  \ref{el-axioms}, \ref{el-conj-d}, \ref{el-conj-c}, \ref{el-disj-d}, \ref{el-disj-c}, \ref{el-neg-sqcap}, \ref{el-neg}, \ref{el-exists-filler},  \ref{el-exists} and  \ref{el-exists-comp} are semantically equivalent to \ref{org-syntax-role}, \ref{org-syntax-concept}, \ref{org-syntax-functor},  \ref{org-axioms}, \ref{org-conj-d}, \ref{org-conj-c}, \ref{org-disj-d}, \ref{org-disj-c}, \ref{org-neg-sqcap}, \ref{org-neg-sqcup}, \ref{org-neg-top},  \ref{org-neg-bot}, \ref{org-exists-filler}, \ref{org-exists} and  \ref{org-exists-comp} from $\pela$. It is not needed to use rules to  capture \ref{org-forall-weakened} from $\pela$ since it is supposed that all concepts are in NNF. Moreover, dropping \ref{org-dist} from the semantics of $\mathcal{EL}^{\rightarrow}$ leads to using explicitly \ref{el-neg-sqcap} and \ref{el-neg-sqcup} for reasoning in $\mathcal{EL}^\rightarrow$. In fact, \ref{el-neg-sqcap} and  \ref{el-neg-sqcup}  can be obtained from \ref{org-dist} with  \ref{org-neg-top}, \ref{org-conj-d} and \ref{org-syntax-concept}. This explains why it is not needed to capture  \ref{org-neg-sqcap} and  \ref{org-neg-sqcup} by rules  in Table~\ref{tab:SHrules} due to the presence of   \ref{sh-dist} in Table~\ref{tab:SHrules}. The rules in $\rela$ aim  to build from an $\mathcal{EL}^\rightarrow$ ontology $\mathcal{O}$ a set $\mathsf{Hom}\tuple{\mathcal{O}}$ of arrows and a set $\mathsf{Ob}\tuple{\mathcal{O}}$ of objects such that $C\in \mathsf{Ob}\tuple{\mathcal{O}}$ iff there is an arrow $C\rightarrow D \in \mathsf{Hom}\tuple{\mathcal{O}}$ or $D\rightarrow C \in \mathsf{Hom}\tuple{\mathcal{O}}$.

\begin{table}[H]
  \centering    
\hrule
  \vspace{0.2cm}
\begin{enumerate}[leftmargin=!, labelwidth=1em, align=left]
\item[\mylabel{el-id}{$\mathsf{\mathbb{R}_{id}}$}$:$] 
 
$\displaystyle{\frac{}{X \rightarrow X ~~~X \rightarrow \top~~~\bot \rightarrow X}}$

\item[\mylabel{el-trans}{$\mathsf{\mathbb{R}_{tr}}$}$:$] 
 
$\displaystyle{\frac{X \rightarrow Y~~~Y \rightarrow Z}{X \rightarrow Z} }$

\item[\mylabel{el-functor}{$\mathbb{R}_\mathsf{f}~$}$:$]

$\displaystyle{\frac{R \rightarrow S}{p_l(R) \rightarrow p_l(S) ~~~p_r(R) \rightarrow p_r(S)} }~~~$
$\displaystyle{\frac{p_r(R) \rightarrow \bot}{p_l(R) \rightarrow \bot} }$
 
\item[\mylabel{el-axioms}{$\mathbb{R_\sqsubseteq}$}$:$]
 
 $\displaystyle{\frac{}{X \rightarrow Y   } :  X\sqsubseteq Y \in \mathcal{O} }$
 
\item[\mylabel{el-conj-d}{$\mathbb{R}_\sqcap^\mathsf{d}$}$:$]

$\displaystyle{\frac{}{C \sqcap D \rightarrow C~~~ C \sqcap D \rightarrow  D } : C\sqcap D\in  \mathsf{Ob}\langle \mathcal{O}\rangle}$

\item[\mylabel{el-conj-c}{$\mathbb{R}_\sqcap^\mathsf{c}$}$:$]

$\displaystyle{\frac{X\rightarrow C ~~~ X\rightarrow D  }{X  \rightarrow  C \sqcap D} : C\sqcap D\in  \mathsf{Ob}\langle \mathcal{O}\rangle }$

\item[\mylabel{el-disj-d}{$\mathbb{R}^\mathsf{d}_\sqcup$}$:$]

$\displaystyle{\frac{C\bar{\sqcup} D\rightarrow X}{C \rightarrow  X ~~~D\rightarrow X  } }$
 
\item[\mylabel{el-disj-c}{$\mathbb{R}^\mathsf{c}_\sqcup$}$:$]

$\displaystyle{\frac{C \rightarrow X ~~~ D \rightarrow X }{C\bar{\sqcup} D \rightarrow  X} : C\bar{\sqcup} D\in  \mathsf{Ob}\langle \mathcal{O}\rangle}$

\item[\mylabel{el-neg-sqcap}{$\mathsf{\mathbb{R}}_{\neg}^\sqcap$}$:$] 
$\displaystyle{\frac{X  \sqcap C \rightarrow \bot  }{X \rightarrow  \mathsf{NNF}(\bar{\neg} C)}  }$

\end{enumerate}
\end{table}
  
\pagebreak
\begin{table}[H]
  \centering   

\begin{enumerate}[leftmargin=!, labelwidth=1em, align=left]

\item[\mylabel{el-neg-sqcup}{$\mathsf{\mathbb{R}}_{\neg}^\sqcup$}$:$] 
$\displaystyle{\frac{ \top \rightarrow X  \sqcup C   }{\mathsf{NNF}(\bar{\neg} X) \rightarrow  C}  }$

\item[\mylabel{el-neg}{$\mathsf{\mathbb{R}}_{\neg}$}$:$] 
$\displaystyle{\frac{     }{C\sqcap \mathsf{NNF}(\bar{\neg} C)  \rightarrow  \bot ~~~\top   \rightarrow  C\sqcup \mathsf{NNF}(\bar{\neg} C)}  }$

\item[\mylabel{el-exists-filler}{$\mathsf{\mathbb{R}}_\exists^\mathfrak{R}$}$:$]

$\displaystyle{\frac{ }{ \mathfrak{R}^\varnothing_{(\exists R.C)}\rightarrow R~~~ \exists R.C\leftrightarrows p_l(\mathfrak{R}^\varnothing_{(\exists R.C)}) ~~~p_r(\mathfrak{R}^\varnothing_{(\exists R.C)})\rightarrow C} : \exists R.C\in  \mathsf{Ob}\langle \mathcal{O}\rangle}$

\item[\mylabel{el-exists}{$\mathbb{\mathbb{R}}_\exists$}$:$]

$\displaystyle{\frac{R'\rightarrow R ~~~ p_r(R') \rightarrow C}{ p_l(R') \rightarrow p_l(\mathfrak{R}^\varnothing_{(\exists R.C)})}    :  \exists R.C\in  \mathsf{Ob}(\mathscr{C}_c\langle \mathcal{O}\rangle)}$

\item[\mylabel{el-exists-comp}{$\mathbb{\mathbb{R}}_\exists^\circ$}$:$]

$\displaystyle{\frac{X\rightarrow \exists S.C ~~~ C \rightarrow \exists S.D~~~S\circ S\rightarrow S}{ X\rightarrow \exists S.D} }$

\item[\mylabel{el-exists-forall}{$\mathsf{\mathbb{R}}_{\exists}^{\forall}$}$:$] 
$\displaystyle{\frac{X\rightarrow  \exists P.C ~~~  X\rightarrow \bar{\forall} R.D~~~  P \rightarrow  R}{  \mathfrak{R}^{\bar{\forall} R.D}_{(\exists P.C)}\rightarrow \mathfrak{R}^\varnothing_{(\exists P.C)} ~~~ X \leftrightarrows   p_l(\mathfrak{R}^{\bar{\forall} R.D}_{(\exists P.C)})~~~p_r(\mathfrak{R}^{\bar{\forall} R.D}_{(\exists P.C)})\rightarrow D   } }$

\item[\mylabel{el-exists-forall-ind}{$\mathsf{\mathbb{R}}_\exists^{\forall \mathsf{I}}$}$:$]

$\displaystyle{\frac{\{(a,b)\}\rightarrow P ~~~ \{a\}\rightarrow \bar{\forall} R.D ~~~P \rightarrow  R}{ \{b\}\rightarrow D}   }$

\item[\mylabel{el-exists-forall-comp}{$\mathsf{\mathbb{R}}_{\exists \circ}^{\forall}$}$:$]
$\displaystyle{\frac{ X\rightarrow  \exists P.C ~~~  X\rightarrow \bar{\forall} R.D~~~  P \rightarrow  S~~~S\circ S \rightarrow  S~~~S\rightarrow R}{  \mathfrak{R}^{\bar{\forall} R.D}_{(\exists P.C)}\rightarrow \mathfrak{R}^\varnothing_{(\exists P.C)} ~~~ X \leftrightarrows   p_l(\mathfrak{R}^{\bar{\forall} R.D}_{(\exists P.C)})~~~p_r(\mathfrak{R}^{\bar{\forall} R.D}_{(\exists P.C)})\rightarrow  \bar{\forall} S.D} }$

\item[\mylabel{el-exists-forall-comp-ind}{$\mathsf{\mathbb{R}}_{\exists \circ}^{\forall\mathsf{I}}$}$:$]  
$\displaystyle{\frac{\{(a,b)\}\rightarrow P ~~~ \{a\}\rightarrow \bar{\forall} R.D ~~~P \rightarrow  S~~~ S\circ S\rightarrow S~~~ S\rightarrow R}{ \{b\}\rightarrow \bar{\forall} S.D}   }$
 
\end{enumerate}
\hrule
\vspace{0.2cm}
 \caption{Categorical Rules $\rela$ for $\mathcal{EL}^\rightarrow$}
  \label{tab:ELArules}
\end{table}

\begin{theorem}\label{thm:el-consistency}

Let $\mathcal{O}$  be  an $\mathcal{EL}^\rightarrow$ ontology. Let  $\mathsf{Hom}\tuple{\mathcal{O}}$ be the set of arrows built from $\mathcal{O}$ by applying  rules  $\rela$  in  Table~\ref{tab:ELArules} until no rule is applicable. It holds that 
\begin{enumerate}[leftmargin=0.1cm, itemindent=0.5cm] 
 
\item\label{lem:ela-consistency:2} $\mathcal{O}$  is (categorically) $\mathcal{EL}^\rightarrow$-inconsistent iff  $\mathsf{Hom}\tuple{\mathcal{O}}$ contains   $\{a\}\rightarrow \bot$ for some individual $a$. 

\item\label{lem:ela-consistency:1} The cardinality of $\mathsf{Hom}\tuple{\mathcal{O}}$  is bounded by a function polynomial  in the size of $\mathcal{O}$ up to object isomorphism. 
\end{enumerate}
 
\end{theorem}

\begin{proof}
\begin{enumerate} [leftmargin=0.1cm, itemindent=0.5cm] 

\item  It follows straightforwardly from the definition of $\pela$ and $\rela$, i.e, one of rules in  $\rela$  adds an arrow $C\rightarrow D$ to $\mathsf{Hom}\tuple{\mathcal{O}}$ iff there is a property from $\pela$ that defines  $C\rightarrow D$ in all concept categories. 



\item We show Item~\ref{lem:sh-consistency:2}. Since there are at most two arrows between 2 objects, and a rule is not applicable if the arrows in the conclusion of the rule are already in  $\mathsf{Hom}\tuple{\mathcal{O}}$, it suffices to compute the number of objects added by rules to determine the cardinality of $\mathsf{Hom}\tuple{\mathcal{O}}$.
If an ontology $\mathcal{O}$ can be seen as a string of length $k$, then  $\lvert \mathcal{O}\rvert=k$ where $\lvert \mathcal{O}\rvert$   denotes the size of $\mathcal{O}$. Hence, the number of concept objects from $\mathcal{O}$ is bounded by  $O(k^2)$. We use  $\mathbf{C}_0$, $\mathbf{R}_0$ and $\mathbf{I}_0$  to denote respectively the sets of concept names, role names and individuals occurring in  $\mathcal{O}$.  We determine  extensions $\mathbf{C}, \mathbf{R}$ and $\mathbf{I}$ from  $\mathbf{C}_0$, $\mathbf{R}_0$ and $\mathbf{I}_0$   as follows: (i)   $\mathbf{C}, \mathbf{R}, \mathbf{I}$ are the smallest sets such that $\mathbf{C}_0\subseteq \mathbf{C}$, $\mathbf{R}_0\subseteq \mathbf{R}$ and $\mathbf{I}_0\subseteq \mathbf{I}$, and (ii) for each concept of the form $\exists R.C$ occurring in $\mathcal{O}$, $\mathbf{R}$ contains a role name $\mathfrak{R}_{(\exists R.C)}$ and $\mathbf{C}$ contains  two concept names $p_l(\mathfrak{R}_{(\exists R.C)})$ and $p_r(\mathfrak{R}_{(\exists R.C)})$;  (iii) and each pair of concepts of the form  $\exists R.C$ and  $\forall R.D$ occurring in $\mathcal{O}$, $\mathbf{R}$ contains a role name $\mathfrak{R}^{\bar{\forall} R.D}_{(\exists R.C)}$, and  $\mathbf{C}$ contains  two concept names $p_l(\mathfrak{R}^{\bar{\forall} R.D}_{(\exists R.C)})$ and $p_r(\mathfrak{R}^{\bar{\forall} R.D}_{(\exists R.C)})$. We consider the rules involved in generating of these fresh role objects.

\begin{itemize}[leftmargin=0.1cm, itemindent=0.2cm]
\item It holds that the number of concepts   of the form $\exists R.C$ and  $\bar{\forall} R.C$ occurring in $\mathcal{O}$ is bounded by $k$. If a rule from $\rela$ adds an object of the form $\exists R.C$ to  $\mathsf{Ob}\tuple{\mathcal{O}}$, then $\exists R.C$ must occur in $\mathcal{O}$. This implies that    the number of concept objects   of the form $\exists R.C$ in $\mathsf{Ob}\tuple{\mathcal{O}}$ is bounded by $k^2$. Hence,  the number of  fresh role objects   of the form $\mathfrak{R}^\varnothing_{\exists R.C}$ added by \ref{el-exists-filler} to $\mathsf{Ob}\tuple{\mathcal{O}}$ is bounded by $k^2$. 

\item Two rules  \ref{el-exists-forall-comp}, \ref{el-exists-forall-comp-ind} can add fresh  concept objects of the form $\bar{\forall} S.D$ if there is a concept $\bar{\forall} R.D$ and a transitive role $S$ occur in $\mathcal{O}$. This implies that    the number of objects   of the form $\forall S.D$ in $\mathsf{Ob}\tuple{\mathcal{O}}$ is bounded by $k^2$ where $k$ bounds the number of roles $S\in \mathbf{R}_0$, and the number of concepts of the form $\forall R.D$ occurring in $\mathcal{O}$.

\item Two rules \ref{el-exists-forall}, \ref{el-exists-forall-comp} can add fresh  role objects   of the form $\mathfrak{R}_{\exists R.C}$ and $\mathfrak{R}^{\bar{\forall} R.D}_{\exists R.C}$. This implies that the number of role objects   of the form $\mathfrak{R}_{\exists R.C}$ and $\mathfrak{R}^{\bar{\forall} R.D}_{\exists R.C}$ in $\mathsf{Ob}\tuple{\mathcal{O}}$ is bounded by $2k^{4}$. Therefore, $\lvert \mathbf{R}\rvert \leqslant O(3k^{4})$.

\item The rule \ref{el-functor} adds two fresh concept objects $p_l(R)$ and $p_r(R)$ for each role object $R$. Therefore, $\lvert \mathbf{C}\rvert \leqslant O(6k^{4})$.
 
\item \ref{el-id} and \ref{el-trans} add no new object.

\item \ref{el-axioms} adds  concept objects $X, Y$ for each axiom $X\sqsubseteq Y\in \mathcal{O}$.  Therefore, the number of concept objects added by this rule is bounded by $k^2$.

\item All rules \ref{el-conj-d} to \ref{el-disj-c}, \ref{el-exists}  and \ref{el-exists-comp}   add no new object.

\item \ref{el-neg-sqcup} and \ref{el-neg-sqcap} add $\mathsf{NNF}(\bar{\neg} C)$ for each concept object $C$, and  \ref{el-neg} adds $\mathsf{NNF}(\bar{\neg} C)\sqcup C$ and $\mathsf{NNF}(\bar{\neg} C)\sqcap C$  of each concept object $C$. Since $\mathsf{NNF}(\bar{\neg} (\mathsf{NNF}(\bar{\neg} C)\sqcup C))$  and $\mathsf{NNF}(\neg (\mathsf{NNF}(\bar{\neg} C)\sqcap C))$ are  isomorphic to  $\mathsf{NNF}(\bar{\neg} C)\sqcap C$ and   $\mathsf{NNF}(\bar{\neg} C)\sqcup C$,  it holds that \ref{el-neg-sqcup}, \ref{el-neg-sqcap} and \ref{el-neg} add to $\mathsf{Ob}\tuple{\mathcal{O}}$ at most $3m$ new concept objects where $m$ is the number of objects added to $\mathsf{Ob}\tuple{\mathcal{O}}$ by the other rules up to object isomorphism.

\end{itemize}

We have proved that each rule from $\pela$ adds a polynomial number of objects in the size of $\mathcal{O}$.  Therefore, the number of objects added by all rules in $\pela$ is bounded by a polynomial function in the size of $\mathcal{O}$, and thus the cardinality of $\mathsf{Hom}\tuple{\mathcal{O}}$   is bounded by a polynomial function   in the size of $\mathcal{O}$ up to object isomorphism.
\end{enumerate}
\end{proof}

\begin{corollary}\label{cor:SH-ELA}  Let  $\mathcal{O}$ be an $\mathcal{SH}$ ontology, and  $\mathcal{O}'$ the  $\mathcal{EL}^\rightarrow$ ontology obtained from $\mathcal{O}$ by replacing $\forall,\sqcup$ and $\neg$ with $\bar{\forall},\bar{\sqcup}$ and $\bar{\neg}$ respectively. It holds that if $\mathcal{O}'$  is  inconsistent then $\mathcal{O}$ is (categorically) $\mathcal{EL}^\rightarrow$-inconsistent.
\end{corollary}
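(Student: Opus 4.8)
The plan is to prove the contrapositive: assuming the saturated category $\mathscr{C}_c\tuple{\mathcal{O}'}$ derives $\{a\}\rightarrow\bot$ for some individual $a$, I will exhibit a contradiction with every model of $\mathcal{O}$, so that $\mathcal{O}$ can have no model. The organising device is a translation $\tau$ that \emph{erases the bars}, sending $\bar{\forall},\bar{\sqcup},\bar{\neg}$ back to $\forall,\sqcup,\neg$; then $\tau(\mathcal{O}')=\mathcal{O}$, and $\tau$ sends every object of $\mathscr{C}_c\tuple{\mathcal{O}'}$ (resp. $\mathscr{C}_r\tuple{\mathcal{O}'}$) to an ordinary $\mathcal{SH}$ concept (resp. role). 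Fixing an arbitrary model $\mathcal{I}\models\mathcal{O}$, I interpret each object $O$ by $\tau(O)^\mathcal{I}$, reading the individual object $\{a\}$ as the singleton $\{a^\mathcal{I}\}$, role objects as binary relations, $\mathfrak{R}_{(\exists R.C)}$ as $\{(x,y)\in R^\mathcal{I}\mid y\in \tau(C)^\mathcal{I}\}$, and $\Pi_\ell,\Pi_r$ as the image projections.

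I would then establish a Soundness Lemma by induction on the number of applications of the $\mathcal{EL}^\rightarrow$ rules (from Definition~\ref{def:ala-consistency}, with existence and uniqueness of the saturation guaranteed by Lemma~\ref{lem:ela-consistency}): every arrow $X\rightarrow Y$ added satisfies $\tau(X)^\mathcal{I}\subseteq\tau(Y)^\mathcal{I}$, and likewise for role arrows. The base cases \ref{el-syntax-it}, \ref{el-syntax-role} and \ref{el-axioms} are immediate — the syntactic arrows are trivially valid, and an axiom $X\sqsubseteq Y\in\mathcal{O}'$ corresponds to $\tau(X)\sqsubseteq\tau(Y)\in\mathcal{O}$, which $\mathcal{I}$ satisfies. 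Note that \emph{no normalisation is needed} here precisely because $\mathcal{O}'$ keeps its GCIs in NNF and \ref{el-axioms} installs $C\rightarrow D$ directly, rather than going through $\top\rightarrow\neg C\bar{\sqcup}D$.

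For the inductive step I would split the rules into two groups after applying $\tau$. The rules that differ from their $\mathcal{SH}$ counterparts only by a bar (\ref{el-syntax-trans}, \ref{el-syntax-functor}, \ref{el-conj-d}, \ref{el-conj-c}, \ref{el-disj-d}, \ref{el-disj-c}, \ref{el-neg-bot}, \ref{el-exists-filler}, \ref{el-exists-hier}, \ref{el-exists-comp}, \ref{el-exists-forall-ind}, \ref{el-exists-forall-comp-ind}) are set-theoretically sound by the lemmas already proved for $\mathcal{SH}$, namely Lemmas~\ref{lem:sem-rolecomp}, \ref{lem:sem-conj}, \ref{lem:sem-disj}, \ref{lem:sem-neg}, \ref{lem:sem-exists}, \ref{lem:sem-interactions} and \ref{lem:ELK}. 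The object-introducing rules \ref{elp-exists-forall} and \ref{elp-exists-forall-comp} carry no semantic obligation, since adding a concept object $C\sqcap D$ or $C\sqcap\bar{\forall}S.D$ only asserts that it is a legitimate concept. The genuinely new arrows are the two bottom-producing rules \ref{el-exists-forall} and \ref{el-exists-forall-comp}. Their soundness is exactly the set-theoretical reading of Lemma~\ref{lem:forall-prop}, eq.~(\ref{equiv-weakened-forall}) and of Lemma~\ref{lem:sem-interactions}, item~\ref{lem:sem-interactions:3}: if $(\tau(C)\sqcap\tau(D))^\mathcal{I}=\varnothing$ while some $x\in X^\mathcal{I}$ has a $P^\mathcal{I}\subseteq R^\mathcal{I}$-successor in $\tau(C)^\mathcal{I}$ that $\forall R.\tau(D)$ forces into $\tau(D)^\mathcal{I}$, then $X^\mathcal{I}=\varnothing$; the transitive variant replaces $\tau(D)$ by $\forall S.\tau(D)$, deriving $z\in\tau(D)^\mathcal{I}$ for every $(y,z)\in S^\mathcal{I}$ from $P^\mathcal{I}\subseteq S^\mathcal{I}$, $S\circ S^\mathcal{I}\subseteq S^\mathcal{I}$ and $S^\mathcal{I}\subseteq R^\mathcal{I}$. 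In both cases the conclusion $X\rightarrow\bot$ is sound.

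Finally, from $\{a\}\rightarrow\bot\in\mathsf{Hom}(\mathscr{C}_c\tuple{\mathcal{O}'})$ the Soundness Lemma yields $\{a^\mathcal{I}\}=\tau(\{a\})^\mathcal{I}\subseteq\bot^\mathcal{I}=\varnothing$, which is impossible; hence $\mathcal{O}$ has no model, establishing the contrapositive. I expect the main obstacle to be the clean bookkeeping of the induction — in particular matching the four new $\exists/\bar{\forall}$-rules to the correct set-theoretical facts once the bars are erased, and verifying that the object-introducing variants impose no constraint. An alternative route that avoids re-deriving soundness is to show, by the same induction, that each $\mathcal{EL}^\rightarrow$ arrow translates to an arrow already present in the saturated $\mathcal{SH}$ category $\mathscr{C}_c\tuple{\mathcal{O}}$ (using \ref{sh-exists-forall}, \ref{sh-exists-forall-comp}, \ref{sh-exists-filler} together with eq.~(\ref{exists-filler-bot}) of Lemma~\ref{lem:forall-prop} to simulate \ref{el-exists-forall} and \ref{el-exists-forall-comp}), and then invoke Lemma~\ref{lem:cat-set}; this variant is equally viable but must additionally bridge the normalisation gap between the NNF $\mathcal{O}$ and the normalised ontology underlying $\mathcal{SH}$-consistency.
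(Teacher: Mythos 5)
Your proof is correct, but it takes a genuinely different route from the paper. You prove soundness of the $\mathcal{EL}^\rightarrow$ rules \emph{directly against models of} $\mathcal{O}$: fixing $\mathcal{I}\models\mathcal{O}$ and the bar-erasing translation $\tau$, you show by induction on rule applications that every arrow $X\rightarrow Y$ in $\mathscr{C}_c\tuple{\mathcal{O}'}$ yields $\tau(X)^\mathcal{I}\subseteq\tau(Y)^\mathcal{I}$, so $\{a\}\rightarrow\bot$ is impossible when a model exists; this is in effect an $\mathcal{EL}^\rightarrow$ analogue of Lemma~\ref{lem:cat-set}. The paper instead argues purely syntactically: it shows that every arrow added to $\mathsf{Hom}(\mathscr{C}_c\tuple{\mathcal{O}'})$ by an $\mathcal{EL}^\rightarrow$ rule is also derivable in the saturated $\mathcal{SH}$ category $\mathscr{C}_c\tuple{\mathcal{O}}$ of the \emph{normalized} ontology (simulating \ref{el-exists-forall} and \ref{el-exists-forall-comp} via \ref{sh-exists-forall}, \ref{sh-exists-forall-comp}, \ref{sh-exists-filler} and \ref{sh-syntax-cat-functor}), and then invokes Theorem~\ref{thm:sem-equiv} to pass from categorical to set-theoretical inconsistency of $\mathcal{O}$. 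This is exactly the ``alternative route'' you sketch in your last sentences, and the obstacle you flag there is real and is precisely what the paper's first bullet handles: since \ref{el-axioms} installs $C\rightarrow D$ while the normalized $\mathcal{SH}$ category only gets $\top\rightarrow\mathsf{NNF}(\neg C)\sqcup D$, the paper must re-derive $C\rightarrow D$ using \ref{sh-dist}, \ref{sh-neg-bot}, \ref{sh-disj-c} and \ref{sh-conj-d}. What each approach buys: yours is more self-contained (no normalization bridging, no dependence on Theorem~\ref{thm:sem-equiv}, only the set-theoretic soundness facts of Lemmas~\ref{lem:sem-conj}--\ref{lem:sem-interactions} and \ref{lem:ELK}), while the paper's simulation establishes the stronger syntactic fact that $\mathcal{EL}^\rightarrow$ derivations embed into $\mathcal{SH}$ derivations, which supports its narrative that $\mathcal{EL}^\rightarrow$ is obtained from $\mathcal{SH}$ by dropping rules.
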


Corollary~\ref{cor:SH-ELA} follows straightforwardly from $\psh$ and $\pela$ since  $\pela$ is weaker than $\psh$, i.e, the minimal  concept category satisfying $\pela$ is a subcategory of the minimal  concept category satisfying $\psh$. 
 
It is important to note that although $\bar{\forall}$ (resp. $\bar{\sqcup}$) is weaker than $\forall$ (resp.   $\sqcup$)  in the sense that the categorical properties describing $\bar{\forall}$ are weaker than  those describing $\forall$, a logic $\mathcal{L}'$ containing $\bar{\forall}$  is not a sublogic of $\mathcal{L}$ obtained from $\mathcal{L}'$ by replacing    $\bar{\forall}$ with $\forall$. By ``sublogic" we mean that a concept $A_0$  is satisfiable with respect to an ontology $\mathcal{O}$ expressed in  $\mathcal{L}'$ iff  $A_0$  is satisfiable   with respect to  $\mathcal{O}$ expressed in  $\mathcal{L}$.  The following example confirms this observation.   
 
\begin{example}\label{ex:weakened-forall}
We take $\mathcal{L}=\mathcal{SH}$ and define a logic $\mathcal{L}_1$ that is obtained from $\mathcal{SH}$ by replacing    $\forall$ with  $\bar{\forall}$. Let us consider an ontology  $\mathcal{O}_1$ expressed in $\mathcal{L}_1$ including  

$A_0\rightarrow \exists R.C_1 \sqcap \bar{\forall} R.C_2 \sqcap \bar{\forall} R.C_3$, 
$C_1 \sqcap C_2 \sqcap C_3\rightarrow \bot$ 
 
\noindent $A_0$ is (categorically) $\mathcal{EL}^\rightarrow$-satisfiable with respect to  $\mathcal{O}_1$ (i.e., $A_0\rightarrow \bot$ does not belong to the minimal concept category of $\mathcal{O}_1$) while $A_0$ is not set-theoretically satisfiable with respect to an ontology $\mathcal{O}_2$ expressed in $\mathcal{L}$ including  


$A_0\sqsubseteq \exists R.C_1 \sqcap \forall R.C_2 \sqcap \forall R.C_3$, 
$C_1 \sqcap C_2 \sqcap C_3\sqsubseteq  \bot$  

However, if a concept $A$ is categorically unsatisfiable with respect to an ontology $\mathcal{O}$ expressed in  $\mathcal{L}_1$, then $A$ is also (set-theoretically and categorically) unsatisfiable with respect to   $\mathcal{O}'$ obtained from $\mathcal{O}$ by replacing $\bar{\forall}$ with $\forall$. This is due to Corollary~\ref{cor:SH-ELA}.
\end{example}


\noindent 

\begin{example}\label{ex:weakened-sqcup}
\noindent We now define $\mathcal{L}_2$ that is obtained from $\mathcal{L}=\mathcal{SH}$ by replacing $\sqcup$ with $\bar{\sqcup}$. Let us consider an ontology  $\mathcal{O}_3$ expressed in $\mathcal{L}_2$ including

$A_0\sqsubseteq C_1 \sqcap (C_2 \bar{\sqcup} C_3), C_1 \sqcap  C_2  \sqsubseteq  \bot$,   $C_1 \sqcap  C_3  \sqsubseteq  \bot$  

\noindent It holds that $A_0$ is categorically satisfiable with respect to  $\mathcal{O}_3$ while $A_0$ is  unsatisfiable with respect to $\mathcal{O}_4$ expressed in $\mathcal{L}$ including
$A_0\sqsubseteq C_1 \sqcap (C_2 \sqcup C_3), C_1 \sqcap  C_2  \sqsubseteq  \bot$, $C_1 \sqcap  C_3  \sqsubseteq  \bot$.
\noindent Unsatisfiability of $A_0$ with respect to $\mathcal{O}_4$ is due to \ref{org-dist} and \ref{org-disj-c}. 
\end{example}

\section{Independence of $\bar{\forall}, \bar{\sqcup}$ and $\bar{\neg}$}

We now prove a result that affirms that the weakened constructors $\bar{\forall}, \bar{\sqcup}$ and $\bar{\neg}$ are independent from the others, i.e., we obtain $\mathcal{EL}_\bot^\circ$  from $\mathcal{EL}^{\rightarrow}$ by dropping  from $\pela$ all categorical properties related to $\bar{\forall}, \bar{\sqcup}$ and $\bar{\neg}$, namely   \ref{org-disj-d}, \ref{org-disj-c}, \ref{org-neg-sqcap}, \ref{org-neg-sqcup},  \ref{org-neg-bot}, \ref{org-neg-top},  \ref{org-forall-weakened}, \ref{org-dist},  \ref{org-exists-forall}, \ref{org-exists-forall-ind}, \ref{org-exists-forall-comp} and \ref{org-exists-forall-comp-ind}. Therefore,  $\pelc=\{\text{\ref{org-syntax-role}, \ref{org-syntax-concept},\ref{org-syntax-functor}, \ref{org-axioms},
\ref{org-conj-d}, \ref{org-conj-c},  \ref{org-exists-filler}, \ref{org-exists}, \ref{org-exists-comp}}\}$ defines the categorical semantics of $\mathcal{EL}_\bot^\circ$.
 
\begin{theorem}
\label{thm:NEWDL}
  Let $\mathcal{O}$ be an ontology in $\mathcal{EL}_\bot^\circ$. It holds that $\mathcal{O}$ is set-theoretically inconsistent iff  $\mathcal{O}$ is categorically  $\mathcal{EL}^\circ_\bot$-inconsistent.   
\end{theorem}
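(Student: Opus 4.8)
The plan is to prove the two directions separately, after isolating a simplification specific to $\mathcal{EL}_\bot^\circ$. Since $\mathcal{O}$ contains no occurrence of $\sqcup$, $\forall$ or $\neg$, and no $\mathcal{EL}^{\rightarrow}$ rule ever introduces $\bar{\sqcup}$, $\bar{\forall}$ or $\bar{\neg}$ unless one is already present in a premise, the rules \ref{el-disj-d}--\ref{el-neg-bot} and \ref{elp-exists-forall}--\ref{el-exists-forall-comp-ind} can never fire on $\mathscr{C}_c\tuple{\mathcal{O}}$. Hence only the ``core'' rules \ref{el-syntax-it}--\ref{el-axioms}, \ref{el-conj-d}, \ref{el-conj-c}, \ref{el-exists-filler}, \ref{el-exists-hier} and \ref{el-exists-comp} are ever applicable, and these are precisely the categorical counterpart of the classical completion rules for $\mathcal{EL}_\bot^\circ$. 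I would record this observation first, since it controls every subsequent step.

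For the direction ``$\mathcal{EL}^{\rightarrow}$-inconsistent $\Rightarrow$ set-inconsistent'' (soundness) I would reuse the machinery already in place: taking $\mathcal{O}'=\mathcal{O}$ in Corollary~\ref{cor:SH-ELA} (the replacement of $\forall,\sqcup,\neg$ by their barred versions is vacuous on an $\mathcal{EL}_\bot^\circ$ ontology) shows that set-consistency of $\mathcal{O}$ implies $\mathcal{EL}^{\rightarrow}$-consistency. Contraposing, and using that Corollary~\ref{cor:SH-ELA} already identifies set and categorical consistency through Theorem~\ref{thm:sem-equiv}, yields that $\{a\}\rightarrow\bot\in\mathsf{Hom}(\mathscr{C}_c\tuple{\mathcal{O}})$ forces $\mathcal{O}$ to be set-theoretically inconsistent. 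This disposes of one direction with essentially no new work.

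The substance is the converse (completeness): if $\mathcal{O}$ is set-inconsistent then $\mathcal{EL}^{\rightarrow}$ derives $\{a\}\rightarrow\bot$. I would argue by contraposition and build a canonical set-model from the (finite, by Lemma~\ref{lem:ela-consistency}) saturated category. Assume $\{a\}\rightarrow\bot\notin\mathsf{Hom}(\mathscr{C}_c\tuple{\mathcal{O}})$ for every individual $a$. Let $\Delta^\mathcal{I}$ be the set of concept objects $C$ with $C\rightarrow\bot\notin\mathsf{Hom}(\mathscr{C}_c\tuple{\mathcal{O}})$, put $a^\mathcal{I}=\{a\}$, set $A^\mathcal{I}=\{C\in\Delta^\mathcal{I}\mid C\rightarrow A\}$ for each concept name $A$, and declare $(C,D)\in R^\mathcal{I}$ whenever there is a role $S\rightarrow R$ with $C\rightarrow\exists S.D$ (reading $\exists S.D\leftrightarrows\Pi_\ell(\mathfrak{R}_{(\exists S.D)})$), together with the edges $(\{a\},\{b\})$ arising from role assertions $\{(a,b)\}\rightarrow R$; for a transitive role I would close $R^\mathcal{I}$ under composition. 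The total absence of $\forall$ is exactly what makes this closure harmless, as there are no universal constraints a spurious edge could violate.

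The core of the argument is a truth lemma: $C\rightarrow D\in\mathsf{Hom}(\mathscr{C}_c\tuple{\mathcal{O}})$ iff $C\in D^\mathcal{I}$, for every subconcept $D$ of $\mathcal{O}$, proved by induction on the structure of $D$. The cases $\top$, $\bot$, concept names and $\sqcap$ are routine, using \ref{el-conj-d}, \ref{el-conj-c} and the definition of $\Delta^\mathcal{I}$. The existential case is the main obstacle: the ``labelled $\Rightarrow$ in $R^\mathcal{I}$'' direction is immediate from the edge definition, but the converse must recover an arrow $C\rightarrow\exists R.E$ from a witnessing $R^\mathcal{I}$-edge that may have been produced by the transitive closure rather than directly. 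Here I would induct on the length of the underlying $S$-path and invoke Lemma~\ref{lem:ELK}: role monotonicity $\exists R.C\rightarrow\exists S.C$ from $R\rightarrow S$ (\ref{ELK:SRI-exists}), filler monotonicity $\exists R.C\rightarrow\exists R.D$ from $C\rightarrow D$ (\ref{exists-sub-filler}), and crucially propagation over transitive roles (\ref{ELK:RCI-exists}), which is rule \ref{el-exists-comp}, to collapse a chain of existentials into a single arrow $C\rightarrow\exists R.E$. Once the truth lemma holds, verifying that $\mathcal{I}$ satisfies every GCI (via \ref{el-axioms} and transitivity \ref{el-syntax-trans}), every concept assertion, every role inclusion (via \ref{el-exists-hier} and \ref{ELK:SRI-exists}), every transitivity axiom (by the construction of $R^\mathcal{I}$) and every role assertion is straightforward, giving a set-model of $\mathcal{O}$ and hence set-consistency. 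I expect the transitive-role bookkeeping in the existential case of the truth lemma to be the only genuinely delicate point; everything else reduces to finiteness and confluence already established in Lemma~\ref{lem:ela-consistency}.
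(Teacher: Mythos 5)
Your opening observation (the barred-constructor rules can never fire on an $\mathcal{EL}_\bot^\circ$ ontology) is correct, and your soundness direction is fine: contraposing Corollary~\ref{cor:SH-ELA} with $\mathcal{O}'=\mathcal{O}$ (the renaming is vacuous and an $\mathcal{EL}_\bot^\circ$ ontology is trivially in NNF) gives exactly what the paper gets by invoking Lemma~\ref{lem:cat-set} directly. The completeness direction, however, is where you depart from the paper --- it works through an auxiliary deterministic tableau and transfers only the \emph{clash} into the category --- and your replacement, a canonical model read off from the saturated category together with an iff-form truth lemma, does not work.

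The gap sits exactly where you said it would be "delicate", but it is not a bookkeeping issue: the truth lemma you need is \emph{false}. Take $\mathcal{O}=\{\,\{a\}\sqsubseteq\exists S_1.C_1,\; C_1\sqsubseteq\exists S_2.C_2,\; C_2\sqsubseteq E,\; S_1\sqsubseteq S,\; S_2\sqsubseteq S,\; S\circ S\sqsubseteq S,\; S\sqsubseteq R,\; \exists R.E\sqsubseteq F\,\}$. Set-theoretically $\mathcal{O}\models \{a\}\sqsubseteq F$: the two edges lie in $S$, compose by transitivity of $S$, and $S\sqsubseteq R$. But in the saturated category the only nontrivial existential inference is $\exists S_2.C_2\rightarrow \exists R.E$ via \ref{el-exists-hier} (hence $C_1\rightarrow F$), and nothing reaches $\{a\}$: rule \ref{el-exists-comp} requires both premise arrows to carry the \emph{same literal} role $S$, yet the category contains no object of the form $\exists S.X$ at all, because once the barred rules are switched off no $\mathcal{EL}^{\rightarrow}$ rule ever creates a fresh existential object; and \ref{el-exists-hier} can only target objects that already exist. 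So $\{a\}\rightarrow\exists R.E$ and $\{a\}\rightarrow F$ are not derivable. Now your interpretation cannot be a model: to satisfy $S\sqsubseteq R$ you must put the closure edge $(\{a\},C_2)$ of $S^{\mathcal{I}}$ into $R^{\mathcal{I}}$, which makes $\{a\}\in(\exists R.E)^{\mathcal{I}}$ while $\{a\}\notin F^{\mathcal{I}}$, violating $\exists R.E\sqsubseteq F$; omit the edge and the role inclusion fails instead. Your remark that the closure is "harmless" because there are no universal constraints overlooks that existentials on the \emph{left} of GCIs are just as sensitive to spurious edges. The deeper point: your route would prove completeness of the calculus for instance/subsumption checking, which is strictly stronger than Theorem~\ref{thm:NEWDL} and is in fact false ($\mathcal{O}\models\{a\}\sqsubseteq F$ but $\{a\}\rightarrow F\notin\mathsf{Hom}(\mathscr{C}_c\tuple{\mathcal{O}})$), so \emph{no} interpretation whose memberships coincide with derivable arrows can be a model. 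The theorem itself survives because inconsistency needs no role-chain collapse: an unsatisfiable filler makes the enclosing existential unsatisfiable (\ref{el-exists-filler} with \ref{el-syntax-functor}), so $\bot$ propagates backwards through nested existentials one step at a time. That is precisely the mechanism the paper's proof isolates --- detect the clash in a tableau whose labels are not constrained to be category objects, then push only the clash into the category via the map $(\dagger)$ --- and any repair of your argument will have to retreat from the iff truth lemma to this kind of $\bot$-propagation statement.
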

 
\begin{proof}  
Assume $\mathcal{O}$ is categorically inconsistent, i.e $\{a\}\rightarrow \bot\in \mathsf{Hom}(\mathscr{C}_c\tuple{\mathcal{O}})$ for some individual $a$ where $\mathscr{C}_c\tuple{\mathcal{O}}$ is an  NNF-saturated concept category of $\mathcal{O}$.
For each arrow $X\rightarrow Y$ added to $\mathsf{Hom}(\mathscr{C}_c\tuple{\mathcal{O}})$ by a categorical rule for $\mathcal{EL}^\rightarrow$, 
we have  $\mathcal{O}\models X\sqsubseteq Y$ by Theorem~\ref{thm:cat-set}. This implies that $\mathcal{O}\models \{a\}\sqsubseteq \bot$.

For the converse direction, we  define a structure, namely  universal tableau for $\mathcal{EL}^\circ_\bot$, to characterize  set-theoretical consistency of $\mathcal{O}$. First, we define the smallest set of sub-concepts $\mathsf{sub}(\mathcal{O})$ such that  
\begin{itemize}[leftmargin=0.1cm, itemindent=0.2cm]
    \item If there is  an individual $a$ that occurs in   $\mathcal{O}$, then $\{a\}\in \mathsf{sub}(\mathcal{O})$
    \item If $C\sqsubseteq D\in  \mathcal{O}$ is a GCI or CAA, then  $C,D\in \mathsf{sub}(\mathcal{O})$.
    \item If $C \sqcap D\in  \mathsf{sub}(\mathcal{O})$, then  $C,D\in \mathsf{sub}(\mathcal{O})$.
    \item If $\exists R.C\in  \mathsf{sub}(\mathcal{O})$, then  $C\in \mathsf{sub}(\mathcal{O})$.
\end{itemize}

Given an  $\mathcal{EL}^\circ_\bot$ ontology $\mathcal{O}$,  a tableau for  $\mathcal{O}$ is a graph $\tuple{V_0,\cdots,V_n,E,L}$   where $n$ is minimal, $V_i$ is the smallest set of nodes, and  $L$ a labeling function which maps each node $x\in V_i $ ($0\leqslant i\leqslant n$) to the smallest set of sub-concepts in $\mathcal{P}(\mathsf{sub}(\mathcal{O}))$    such that the following conditions are satisfied.


\begin{itemize}[leftmargin=0cm, itemindent=0.7cm]
    \item[\mylabel{el-cano-ind}{$\mathsf{T}_{id}$}:] If $a$ is an  individual occurring in $\mathcal{O}$, then  $x^{a}\in V_0$ with $\{a\}\in L(x^a)$.  If $\{(a,b)\}\sqsubseteq R\in \mathcal{O}$, then $x^b$ is called an $R$-root successor of $x^a$ and $(x^a,x^b)\in E$. Furthermore, if  $R\overlay{\sqsubseteq}{\ast} S\in \mathcal{R}^+\tuple{\mathcal{O}}$, then $x^b$ is also called an $S$-root successor of $x^a$.

    \item[\mylabel{el-cano-axioms}{$\mathsf{T}_\sqsubseteq$}:] If $C_1,\cdots , C_k\in L(x)$ and  $C_1\sqcap \cdots \sqcap  C_k \sqsubseteq D\in \mathcal{O}$, then $D\in L(x)$.
    
    \item[\mylabel{el-cano-exists-plus}{$\mathsf{T}_\exists^+$}:] If $x$  has an $S$-successor $y$ with $D\in L(y)$, $\exists S.D\in \mathsf{sub}(\mathcal{O})$ and  $\exists S.D\notin L(x)$, then  $\exists S.D\in L(x)$.
    
    \item[\mylabel{el-cano-exists-comp}{$\mathsf{T}_\exists^\circ~$}:] If  $x$ has an $S$-successor $y$ with  $\exists S.E\in  L(y)$,  $S\circ S\overlay{\sqsubseteq}{\ast} S\in \mathcal{R}^+\tuple{\mathcal{O}}$ and $\exists S.E\notin L(x)$, then   $\exists S.E\in L(x)$.

    \item[\mylabel{el-cano-exists}{$\mathsf{T}_\exists~$}:] If $\exists R.C\in L(x)$, $x\in V_i$ with $i>0$  and $x$ is not blocked, i.e $x$ has no ancestor  $x'$ such that $L(x)\subseteq L(x')$, 
    then $x$  has an  $R$-successor $y\in V_{i+1}$  with $C\in L(y)$ and $(x,y)\in E$. In this case, if $R\overlay{\sqsubseteq}{\ast} S\in \mathcal{R}^+\tuple{\mathcal{O}}$ then $y$ is also called an $S$-successor of $x$.
    
    \item[\mylabel{el-cano-sqcap}{$\mathsf{T}_\sqcap$}:] If $C\sqcap D\in L(x)$   and $\{C,D\}\nsubseteq  L(x)$, then  $C,D\subseteq L(x)$.
    
 \end{itemize}

 An $R$-path $(x,y)$ over $\tuple{V_0,\cdots,V_n, L}$ is inductively defined as follows: (i) if  $y$ is an $R$-successor of $x$, or $y$ blocks an $R$-successor $x'$ of $x$, then $(x,y)$ is an $R$-path;  (ii) if $(z,y)$ is an $R$-path, $z$ is an $R$-successor of $x$, or $z$ blocks an $R$-successor $x'$ of $x$, and $R\circ R\overlay{\sqsubseteq}{\ast} R \in \mathcal{R}^+\tuple{\mathcal{O}}$, then $(x,y)$ is an $R$-path; (iii)  if  $(x,y)$ is an $R$-path and $R \overlay{\sqsubseteq}{\ast} S \in \mathcal{R}^+\tuple{\mathcal{O}}$, then $(x,y)$ is also an $S$-path. We now define an interpretation $\mathcal{I}_0$ from $\tuple{V_0,\cdots,V_n,L}$ as follows:

\noindent $\Delta^{\mathcal{I}_0}:=\bigcup_{0\leqslant i\leqslant n}V_i\setminus \{\text{blocked nodes}\}$

\noindent $a^{\mathcal{I}_0}=x^a$ for all individual $a$
 
\noindent $A^{\Tb{I}_0} := \{x\in\Delta^{\Tb{I}_0} \mid  A\in L(x)\}$

\noindent $R^{\Tb{I}_0} := \{\tuple{x, y} \mid (x,y) \text{ is an R-path}\}$
  
We say that a tableau $\tuple{V_0,\cdots,V_n,L}$ has a clash if $\bot\in L(v)$ for some $v\in V_i$, and clash-free otherwise.  

\begin{claim}\label{claim:el-tab} 
Assume that $\tuple{V,L}$ is clash-free. It holds that    
$D_1,\cdots, D_k\in L(x)$ iff $x\in  (D_1\sqcap \cdots \sqcap D_k)^{\mathcal{I}_0}$ where $D_i$ is not a conjunction.
\end{claim}
\begin{proof}  
Case $k=1$.
We have $D_i\in L(x)$ implies  $D_i\neq \bot$ for all $x\in V$ since $\tuple{V,L}$ is clash-free.  We proceed by induction on the structure of $D_1$.

\noindent The ``$\Longrightarrow$" direction.

\noindent$\bullet$   Case  $D_1=A$. We have $x\in A^{\mathcal{I}_0}$ by the definition of $A^{\mathcal{I}_0}$.
    
\noindent$\bullet$   Case   $D_1=\exists R.E$.  By \ref{el-cano-exists},  $x$ has an $R$-successor $y$, or there is some $y$ that blocks an $R$-successor $x'$ of $x$ with $E\in L(x')$ and $L(x')\subseteq L(y)$. By induction, $y\in E^{\mathcal{I}_0}$. By the definition of $R^{\mathcal{I}_0}$, $(x, y)\in R^{\mathcal{I}_0}$. Hence, $x\in \exists R.E^{\mathcal{I}_0}$.

\noindent The ``$\Longleftarrow$" direction.

\noindent$\bullet$      Case $D_1=A$. Let $x\in  A^{\mathcal{I}_0}$. We have $A\in L(x)$ by the definition of  $A^{\mathcal{I}_0}$.
    
\noindent$\bullet$      Case $D_1=\exists R.E$. Let $x\in \exists R.E^{\mathcal{I}_0}$. By definition of $\mathcal{I}_0$ there is an $R$-path $(x,y)$ such that $y\in E^\mathcal{I}$. By induction on the structure of $D_1$, $E\in L(y)$. There are the following possibilities: (i) $x$ has an $R$-successor $y$, or there is some $y$ that  blocks an $R$-successor $x'$ of $x$ with $E\in L(x')\subseteq L(y)$. By \ref{el-cano-exists-plus},  $\exists R.E\in L(x)$; (ii)
$(z,y)$ is an $R$-path with $y\in E^{\mathcal{I}_0}$ and $x$  has an $R$-successor $z$, or there is some $z$ that  blocks an $R$-successor $x'$ of $x$ and $R\circ R\overlay{\sqsubseteq}{\ast} R \in \mathcal{R}^+\tuple{\mathcal{O}}$. We have $E\in   L(y)$ by induction on the structure of $D_1$. Moreover, by the induction hypothesis on the length of the path, we have $\exists R.E\in L(z)$ since $(z,y)$ is also an $R$-path. By \ref{el-cano-exists-comp}, we obtain $\exists R.E \in L(x)$. 

Case $k=n$. Assume that $D_1,\cdots, D_n\in L(x)$. By induction on $k$, $x\in D_i^{\mathcal{I}_0}$  for $i\in \{1..n\}$. Hence, $x\in (D_1\sqcap\cdots   \sqcap D_{n})^{\mathcal{I}_0}$. Conversely, assume that    $x\in (D_1\sqcap \cdots \sqcap   D_{n})^{\mathcal{I}_0}$. This implies that  $x\in (D_1\sqcap \cdots \sqcap D_{n-1})  ^{\mathcal{I}_0}$ and  $x\in D_{n}^{\mathcal{I}_0}$. By induction, $D_1\sqcap \cdots \sqcap D_{n-1} \in L(x)$ and $D_n\in L(x)$. By \ref{el-cano-sqcap}, $D_i\in L(x)$ for $i\in \{1..n\}$. 
\end{proof}

\begin{claim}\label{claim:el-mod} If  $\tuple{V_0,\cdots, V_n,L}$ is clash-free, then  $\mathcal{I}_0$ is a model of $\mathcal{O}$. 
\end{claim}

\begin{proof}  For this, we prove$\mathcal{I}_0\models \alpha$   for each $\alpha\in \mathcal{O}$. 
 
\begin{itemize}[leftmargin=0.1cm, itemindent=0.2cm]
    \item $\alpha = \{a\}\sqsubseteq C$. We have $C\neq \bot$ since otherwise,   $  \bot\in  L(x^a)$, which is a contradiction. By   \ref{el-cano-ind}, $C\in L(x^a)$. By Claim~\ref{claim:el-tab}, $x^a\in C^{\mathcal{I}_0}$.
    
    \item $\alpha = \{(a,b)\}\sqsubseteq R$. By  \ref{el-cano-ind}, $x^b$ is a $R$-successor of $x^a$. By the definition of $\mathcal{I}_0$ we have  $(x^a,x^b)\in R^{\mathcal{I}_0}$. 
    
    \item $\alpha = D\sqsubseteq E$.   Let $x\in D^{\Tb{I}_0}$. If $D$ is not a conjunction, then 
      $D\in L(x)$  by Claim~\ref{claim:el-tab}. Due to    \ref{el-cano-axioms}, we have  $E\in L(x)$. By Claim~\ref{claim:el-tab}, $x\in E^{\mathcal{I}_0}$. Assume $D=D_1\sqcap \cdots \sqcap D_k$. Without loss of the generality, we can assume than $D_i$ is not a conjunction since we can decompose $D_i$ into conjuncts is necessary. We have $D_i\in L(x)$  by Claim~\ref{claim:el-tab} for all $i\in \{1..k\}$. By \ref{el-cano-axioms}, $E\in L(x)$.  By Claim~\ref{claim:el-tab} again, $x\in E^{\mathcal{I}_0}$.
 
    \item $\alpha = R\sqsubseteq S$. Let $\langle x, y\rangle
    \in R^{\Tb{I}_0}$. By the definition  of $R^{\mathcal{I}_0}$, we have $\langle x, y\rangle$ is an $R$-path. By definition, $\langle x, y\rangle$ is also an $S$-path.  By the definition of $\mathcal{I}_0$, $\langle x, y\rangle
    \in S^{\Tb{I}_0}$.

    \item $\alpha = S\circ S\sqsubseteq S$. Let $\tuple{x, 
    z}, \tuple{z, 
    y}\in S^{\Tb{I}_0}$.
    According to the definition of $\Tb{I}_0$, $\tuple{x, 
    z}$ and $\tuple{z, 
    y}$  are $S$-paths.  By the definition of paths, $\tuple{x, 
    y}$ is an $S$-path.
\end{itemize}
\end{proof}

\begin{claim}\label{claim:el-tab-cat}  
If a concept $X$ is added to $L(x_C)$ by a tableau rule, namely  \ref{el-cano-ind}, \ref{el-cano-axioms}, \ref{el-cano-exists-plus}, \ref{el-cano-exists-comp}, \ref{el-cano-exists}, then $C\rightarrow X\in \mathsf{Hom}(\mathscr{C}_c\tuple{\mathcal{O}})$ where $x_C=\{a\}$ for some individual  $a$, or  $x_C$ denotes a node  added by applying \ref{el-cano-exists} to some $\exists R.C\in L(y)$ and $y$ is the predecessor of $x_C$.
\end{claim}

\begin{proof} We have  $C\in L(x_C)$. Assume that there is a sequence $S$ of applications of tableau rules to build the tableau $\tuple{V_0,\cdots, V_n, L}$.
We use $x_C$ to denote a node  added by applying \ref{el-cano-exists} to some $\exists R.C\in L(y)$ and $y$ is the predecessor of $x_C$, or  $x_C=\{a\}$ for some individual  $a$. This implies that $C\in L(x_C)$.

\begin{itemize}[leftmargin=0.1cm, itemindent=0.2cm]
\item \ref{el-cano-ind} ensures that $X\in L(x^a)$. We have $\{a\}\rightarrow X\in \mathsf{Hom}(\mathscr{C}_c\tuple{\mathcal{O}})$  due to \ref{org-axioms}.

\item  \ref{el-cano-axioms} ensures that   $X\in L(x_C)$  with $D_i\in L(x_C)$ and $D_1\sqcap \cdots \sqcap D_n\sqsubseteq  X\in \mathcal{O}$. By \ref{org-axioms}, we have $D_1\sqcap \cdots \sqcap D_n\rightarrow  X \in \mathsf{Hom}(\mathscr{C}_c\tuple{\mathcal{O}})$. By induction on the length of  $S$, we have    $C\rightarrow D_i \in \mathsf{Hom}(\mathscr{C}_c\tuple{\mathcal{O}})$.  By \ref{org-conj-c}, we have $C\rightarrow D_1\sqcap \cdots \sqcap D_n\in \mathsf{Hom}(\mathscr{C}_c\tuple{\mathcal{O}})$, and thus by \ref{org-syntax-concept}(transitivity)  $C\rightarrow   X\in \mathsf{Hom}(\mathscr{C}_c\tuple{\mathcal{O}})$.

\item \ref{el-cano-sqcap} ensures $E,F\in L(x_C)$ with  $E\sqcap F\in L(x_C)$.  By induction, we have $C\rightarrow E\sqcap F\in \mathsf{Hom}(\mathscr{C}_c\tuple{\mathcal{O}})$. By \ref{org-conj-d}, $C\rightarrow E, C\rightarrow F\in \mathsf{Hom}(\mathscr{C}_c\tuple{\mathcal{O}})$.

\item \ref{el-cano-exists-plus}  ensures $\exists R.D\in L(x_C)$, i.e $x_C$ has   an $R$-successor $x_W$ with $D\in L(x_W)$ and $\exists R.D\in \mathsf{sub}(\mathcal{O})$. By induction we have  $W\rightarrow D\in \mathsf{Hom}(\mathscr{C}_c(\mathcal{O}))$ and $\exists R'.W\in L(x_C)$, $C\rightarrow \exists R'.W\in \mathsf{Hom}(\mathscr{C}_c(\mathcal{O}))$ with $R'\rightarrow R\in \mathsf{Hom}(\mathscr{C}_r(\mathcal{O}))$.  By \ref{org-exists}, $\exists R'.W \rightarrow \exists R.D\in \mathsf{Hom}(\mathscr{C}_c(\mathcal{O}))$.    By \ref{org-syntax-concept}(transitivity)  $C\rightarrow   \exists R.D\in \mathsf{Hom}(\mathscr{C}_c\tuple{\mathcal{O}})$.

\item \ref{el-cano-exists-comp}  ensures  $\exists S.D\in L(x_C)$, i.e $x_C$  has  an  $S$-successor $x_W$ with $\exists S.D\in L(x_W)$  and $S\circ S\overlay{\sqsubseteq}{\ast} S\in \mathcal{R}^+\tuple{\mathcal{O}}$. By induction we have $W\rightarrow \exists S.D\in \mathsf{Hom}(\mathscr{C}_c(\mathcal{O}))$. By \ref{el-cano-exists-plus}, $\exists S.W\in L(x_C)$, and by induction again we have $C\rightarrow \exists S.W\in \mathsf{Hom}(\mathscr{C}_c(\mathcal{O}))$. By \ref{org-exists-comp}, we have $C\rightarrow \exists S.D\in \mathsf{Hom}(\mathscr{C}_c(\mathcal{O}))$.   
\end{itemize}
\end{proof}

By Claim~\ref{claim:el-mod}, it holds that if $\mathcal{O}\models \{a\}\sqsubseteq \bot$, i.e. $\mathcal{O}$ has no model, then  $\tuple{V_0,\cdots,V_n,L}$ has a clash. By construction, it holds that  either (i)  $\bot\in L(x^a)$ for some individual $a$,  or (ii) $\bot\in L(x_{C_i})$  for some  $x_{C_i}$ that is created by applying \ref{el-cano-exists} to some $\exists R.C_i\in L(x_{C_{i-1}})$ where $x_{C_{i-1}}$ is the predecessor of $x_{C_i}$ and $i$ is the level $x_{C_i}$ in $\tuple{V_0,\cdots,V_n,L}$. By Claim~\ref{claim:el-tab-cat},  $C_i\rightarrow \bot\in  \mathsf{Hom}(\mathscr{C}_c\tuple{\mathcal{O}})$. By the construction of $\tuple{V_0,\cdots,V_n,L}$, $x_{C_i}$ is a descendant of $x^a$ for some individual $a$. By  \ref{org-syntax-functor},   $\exists R.C_{i} \rightarrow \bot\in  \mathsf{Hom}(\mathscr{C}_c\tuple{\mathcal{O}})$. This propagation of $\bot$ reaches $x^a$ thanks to  \ref{org-syntax-functor} and Claim~\ref{claim:el-tab-cat}. We obtain  
$\{a\}\rightarrow \bot\in  \mathsf{Hom}(\mathscr{C}_c\tuple{\mathcal{O}})$. This completes the proof of the theorem.
\end{proof}


%
%

\section{Related Work}\label{sec:rw}

There have been very few works on category theory that have direct connections to DLs. \citeauthor{spivak2012}  used  category theory to define a high-level graphical language comparable with OWL for knowledge representation, rather than a foundational formalism for reasoning.
\citeauthor{CodescuMK17} introduced a categorical framework to formalize a network of aligned ontologies. The formalism on which  the framework is based is independent from the logic used in ontologies. It is shown that all global reasoning over such a network can be reduced to local reasoning for a given DL used in the ontologies involved in the network. As a consequence, the semantics of DLs employed in the ontologies still relies on set theory. 
A recent work \cite{Bednarczyk22} presented   a categorical semantics for DL games which simulate the relationship between two interpretations of a DL concept. This  allows to translate the game comonad for modal logic into  DL games.

We now discuss some results having indirect connections to DLs. \citeauthor{MOSS1999} and    \citeauthor{KUPKE2011}   considered different fragments of modal logic as \emph{coalgebraic logics}. A coalgebraic logic can be obtained from an $\mathsf{F}$-coalgebra for a functor $\mathsf{F}$  on the category of sets. Such a functor $\mathsf{F}$ is used to describe  structures (or \emph{shapes}) of Kripke-models that all formulas expressed in the coalgebraic logic should admit. This approach results in a tool allowing to establish the relationship between  expressiveness  and  model structures of a logic. 
The approach presented in this paper differs from the  coalgebra-based approaches on the following aspect: in our approach, we do use categorical language to directly  ``encode" the usual set semantics of DLs, but not to describe models of DLs. This use of category theory allows us to decompose the semantics of each logical constructor into several pieces as categorical arrows. Some of them represent the semantic interaction of different constructors which is responsible for intractable reasoning. By dropping such arrows from categorical semantics, we can get new DLs in which the computational complexity for reasoning should be lower.  Given a new DL $\mathscr{L}$ identified by our approach, it is unclear how to determine an $\mathsf{F}$-coalgebra yielding a coalgebraic logic which has the same expressiveness and computational complexity than those of $\mathscr{L}$. 

It would be relevant to discuss \emph{algebraic semantics} from which the categorical semantics introduced in this paper is widely inspired \cite[p.135]{gol06},  \cite[p.50]{saunders92}.
To define the semantics of  classical  propositional logic, one uses a \emph{Boolean algebra}  $\mathbf{BA}=\langle \mathcal{P}(D), \subseteq \rangle$ that is a \emph{lattice} where $\mathcal{P}(D)$ denotes the set of all subsets of a non-empty set $D$, $D$ the \emph{maximum} (or  $\mathbf{1}$) and $\varnothing$ the \emph{minimum} (or  $\mathbf{0}$) under $\subseteq$.  A  \emph{valuation} (or truth function) $V:\Phi_0\rightarrow \mathcal{P}(D)$  assigns a truth-value to each atomic propositional sentence $x\in \Phi_0 $ with $V(x)\in  \mathcal{P}(D)$. 
$V$ can be naturally extended to $x\wedge y, x\vee y$, $\neg x$ using set operations  such as intersection $\cap$, union $\cup$ and complement $\sim$ over $\mathcal{P}(D)$. Then, a sentence $x$ is \emph{valid}   iff  $V(x)=\mathbf{1}$. These set operations are heavily based on the  \emph{set membership} operator $\in$. It holds that $\mathbf{BA}$ is \emph{complemented} and \emph{distributive}  thanks to set membership.  A notable extension of Boolean algebra is \emph{Heyting algebra} $\mathbf{HA}=\langle H, \leq \rangle$ that is also a \emph{lattice} with \emph{exponential} $y^x$ for all $x,y\in H$ where $y^x$ is defined as $z\leq y^x $ iff $ z \wedge x \leq y$ for all $z$.  The truth function $V$ for $\wedge$ and $\vee$ is  defined respectively as a \emph{greatest lower bound} and a \emph{least upper bound} of $x,y$ under $\leq$ while  negation   is defined as   \emph{pseudo-complement}, i.e., $V(\neg x)$ is defined as a maximum set in $H$ such that $V(\neg x)$ is disjoint with $V(x)$. As a consequence, $\neg \neg x = x$ does not hold in a Heyting algebra in general. However, every Heyting algebra  is distributive.  Note that each Boolean algebra is a Heyting algebra but the converse does not hold.
\citeauthor{gol06} replaced the poset $\mathcal{P}(D)$ from $\mathbf{BA}$ with a \emph{topos} which is a particular category, and used arrows in the topos to define a truth function \cite[p.136]{gol06}.

\citeauthor{saunders92} used \emph{left and right adjoints} in a category to represent unqualified existential and universal quantifiers  $(\exists x)S(x,y)$ and $(\forall x)S(x,y)$  in FOL where $S$ is a binary predicate $S\subseteq X\times Y$, $(\exists x)S(x,y)$ denotes all $y$ such that there is some $(x',y)\in S$, and $(\forall x)S(x,y)$ denotes all $y$ such that $x'\in X$ implies $(x',y)\in S$.  For the projection $p: X\times Y\rightarrow Y$, there is  a functor $p^*:\mathcal{P}(Y)\rightarrow \mathcal{P}(X\times Y)$. They used $\exists_p$ and $\forall_p$ to denote respectively  left and right adjoints to $p^*$. That means, for all $T\subseteq Y$  we have   $\exists_p S\subseteq T$ iff $S\subseteq p^*T$ $\bigl($i.e. \emph{$\exists_p S$ is the codomain of $S$}$\bigr)$, and $T \subseteq \forall_p S$ iff $p^*T \subseteq S$ $\bigl($i.e. \emph{$p\rvert_X(S)=p\rvert_X(p^*T)=X$  where $p\rvert_X(S')$ denotes the projection of  $S'$ over $X$}$\bigr)$. It was showed that $\exists_p S=(\exists x)S(x,y)$ and $\forall_p S=(\forall x)S(x,y)$   \cite[p.57]{saunders92}.

The categorical semantics given in Section~\ref{sec:cat-semantics} uses  concept and role categories  instead of a lattice where category arrows play the role of set inclusion $\subseteq$. Due to the absence of set membership  from  categories, we adopt \citeauthor{gol06}'s approach that uses arrows from the category instead of set operations to define the truth function of DL complex concepts involving logical constructors.  The truth function $V(C\sqcap D), V(C\sqcup D)$ and $V(\neg C)$ are  defined in the same way than that of Heyting algebraic semantics while $V(\exists R.C), V(\forall R.C)$ are adapted from  the left adjoint $\exists_p R$ and right adjoint $\forall_p R$ to the functor $p^*$ introduced by \citeauthor{saunders92} as described above.  Hence, a concept $C$ in the concept category is \emph{categorically unsatisfiable} iff an arrow $V(C)\rightarrow \bot$ is contained in the category where $\bot$ is the \emph{initial} object.
 
Regarding tractable DLs, there are two DL families that have been widely studied with several notable results. \citeauthor{baader2005} introduced  $\mathcal{EL}^{++}$ that is an extension of  $\mathcal{EL}$  with different constructors such as the domain concrete constructor, composition of roles occurring in role inclusions. The authors proposed an algorithm for checking a concept subsumption $A\sqsubseteq B$ with respect to an ontology $\mathcal{O}$ in  $\mathcal{EL}^{++}$. This algorithm uses a set of \emph{completion rules} to compute $S(X)$ that is a set of concepts containing all concepts more general than a concept $X$  with respect to $\mathcal{O}$. From this, it checks whether $S(A)$ contains $B$ or $\bot$, and decides whether $A\sqsubseteq B$. \citeauthor{Kazakov2014}  introduced $\mathcal{EL}_\bot^+$ including   $\mathcal{EL}$ and role inclusions with composition of roles. The authors presented a reasoning procedure, namely \emph{consequence-based} approach, that relies on rules  directly devised from the semantics of logical constructors allowed in  $\mathcal{EL}_\bot^+$. This approach aims at discovering implicit subsumption from the axioms of an ontology. Thus, it does not need to explicitly create individuals to represent a model. 

\citeauthor{calvanese2007-DLLite} introduced  the DL-Lite family which allows for unqualified existential restrictions, inverse roles,  negation of \emph{basic} concepts and roles, concept and role inclusions. The syntax of this DL family is defined with specific restrictions such that it is sufficiently expressive to be able to formalize knowledge related to conceptual data models while tractablilty of reasoning is preserved. Such a DL allows for a separation between TBox and ABox of an ontology during query evaluation, and thus the process requiring the ABox can be delegated to an SQL Engine highly optimized. The authors of the work proposed a reasoning technique that extends the ABox with the TBox axioms using \emph{chase rules}, and showed that the obtained extended ABox represents a model of the ontology. This reasoning technique aims to extend ABox while the categorical properties we use in this paper leads to adding arrows representing TBox axioms.     

%
%

\section{Conclusion and Future Work}\label{sec:conc}

We have presented the categorical semantics of the DL $\mathcal{SH}$ with a truth function that associates an object of a concept category to each concept occurring in an $\mathcal{SH}$ ontology. Instead of using set membership with set operations to define the extension of  an  interpretation function $\mathcal{I}$ for complex concepts containing DL constructors, the categorical semantics uses categorical properties with \emph{arrows} of the category to define an extension of the truth function  for complex concepts.  An $\mathcal{SH}$ ontology is categorically inconsistent if there is an arrow $\{a\}\rightarrow \bot$ in the minimal concept  category where $\{a\}$ is an individual object and $\bot$ the initial object  of the category. We showed that inconsistency in $\mathcal{SH}$ under the set-theoretical semantics is equivalent to  inconsistency under the categorical  semantics.  

  We have also identified  two  independent categorical properties that represent the interactions between $\exists$ and $\forall$, and between $\sqcap$ and $\sqcup$. By dropping these categorical properties from those for $\mathcal{SH}$, we obtained a new  tractable DL, namely   $\mathcal{EL}^{\rightarrow}$,
  and showed that $\mathcal{EL}^{\rightarrow}$ is more expressive than $\mathcal{EL}_\bot^\circ$. This implies that   $\mathcal{EL}^{\rightarrow}$ can be defined in a decreasing way from $\mathcal{SH}$, and an increasing way from $\mathcal{EL}_\bot^\circ$.
    The main difference between the categorical and  set-theoretical  semantics is that the former makes interactions between constructors explicit while the latter  ``hides" not only  interactions between constructors but also  constructors themselves. For instance, $\mathcal{EL}^{(-)}$ eclipses the constructors $\forall$ and $\sqcup$. This difference results from renouncing set membership in the categorical semantics. Without rigidity of set membership, we need  more semantic constraints in the categorical definitions to compensate this absence. The more semantic constraints are explicitly expressed the more possibilities are available to define new logics with different reasoning complexities.  In terms of reasoning under the categorical semantics, the algorithm devised from the categorical rules   for $\mathcal{SH}$ consists of saturating the concept category in a deterministic way to determine whether some particular arrow is contained in the saturated concept category. Analogously, the tractable algorithm devised from the categorical rules   for $\mathcal{EL}^\rightarrow$  has the same behavior.
    


The reason we use category theory in this work is not because we need a specific result from it.  The added value of category theory lies in its ability to isolate from the DL semantics the properties that are responsible for intractability.

  Regarding informal meaning of the categorical semantics in respect to  the set-theoretical one in the point of view by a knowledge engineer, it is debatable  because, for instance,  ``the conjunction of two sets contains exactly all elements of both sets" is not necessarily more intuitive than   ``the conjunction of two sets is the greatest set that is smaller than each one".  
  Moreover, it is not always possible to interpret a concept as a set of individuals. For instance, if one wishes to represent medical terms such as $\mathsf{Disorder, Syndrome, Disease}$ as concepts in an ontology, it is not clear how to determine their individuals \cite{Nordelfelt95}. In such a situation, an axiom $\mathsf{Disease}\sqsubseteq  \mathsf{Disorder}$ would not  make sense 
  while 
  an arrow $\mathsf{Disease}\rightarrow \mathsf{Disorder}$   under the categorical semantics would make sense because these objects refer to specific health states but  $\mathsf{Disease}$ is more concrete and measurable.
  
We plan to optimize and implement our algorithm using well-known techniques from existing reasoners such as ELK \cite{Kazakov2014}, CEL \cite{BaaLutSun-06}. Especially, all rules implemented in ELK except for those related to role compositions have  counterparts in  categorical properties. 
Moreover, the categorical properties related to $\bar{\forall}$ and $\bar{\sqcup}$ would not seem problematic since they require to check   premises related to a limited number of forms of objects without considering exhaustively all pairs of objects. Once the algorithm for $\mathcal{EL}^\rightarrow$ is implemented in a reasoner, namely $\mathsf{Ela}$, the next step consists in evaluating the performance and the usefulness of $\mathcal{EL}^\rightarrow$. 
For this, we take biomedical ontologies in $\mathcal{EL}^\circ_\bot$  containing  negative knowledge in labels such as \emph{lacks\_part}, \emph{absence\_of}, and translate them into concepts using  $\neg, \forall$  as described in Example~\ref{ex:intro}. The resulting ontologies may be expressed in $\mathcal{SH}$ without $\sqcup$.
We also need a  reasoner based on the set-theoretical semantics  such as $\mathsf{Konclude}$ \cite{Steigmiller14} to  benchmark against $\mathsf{Ela}$. 
There are the following two possibilities (recall that concept set-theoretical  satisfiability in $\mathcal{SH}$ implies concept categorical  satisfiability in  $\mathcal{EL}^\rightarrow$ according to Corollary~\ref{cor:SH-ELA}): (i)  both $\mathsf{Ela}$ and $\mathsf{Konclude}$ return the same answer to the query of whether  a concept $C$ is  satisfiable with respect to an ontology $\mathcal{O}$ as described above. In this case, a better performance of $\mathsf{Ela}$ should be expected since $\neg, \forall$  are interpreted as $\bar{\neg}, \bar{\forall}$  in $\mathcal{EL}^\rightarrow$ that   is tractable; (ii)   $\mathsf{Ela}$ returns YES and $\mathsf{Konclude}$ returns NO to the same query.  This happens when the ontology contains (or entails) $n$-disjointness with $n\geqslant 3$, i.e., $C_1\sqcap \cdots \sqcap C_n\sqsubseteq \bot$.  As mentioned in Section~\ref{sec:intro}, it is quite rare since most biomedical ontologies do not contain $n$-disjointness with $n\geqslant 3$. We also plan to perform DL queries (e.g instance checking) with  $\mathsf{Ela}$ over populated large medical ontologies. It is expected that   $\mathsf{Ela}$ will return more instances than    a set-theoretical semantics-based reasoner for the same query due to  the weakened constructors. An evaluation  by a health domain expert on the results obtained by the two reasoners would  clarify the relevance of the additional instances found by  $\mathsf{Ela}$.


From a theoretical point of view, it is interesting to investigate how to represent other DL constructors in categorical language to be able to design new DLs with different reasoning complexities. For an inverse role $R^-$, we need to add at least categorical  properties such as $p_l(R)\leftrightarrows p_r(R^-), p_l(R^-)\leftrightarrows p_r(R)$, and those that allow to deal with the interaction between  existential restriction with inverse role and universal restriction. For number restrictions such as $\geqslant\!\!n R.C$, $\leqslant\!\!n R.C$, they could be simulated using fresh concept/individual names, existential and universal restrictions and disjunction. 
 Another interesting topic would be to study how to answer conjunctive queries with respect to ontologies expressed in  $\mathcal{EL}^\rightarrow$. Indeed, the minimal  concept category of an ontology represents a set of interesting models of the ontology. This construction  is very close to a \emph{universal model} that is an important  tool to investigate conjunctive query answering.

\bibliographystyle{apalike}
\bibliography{ELA2025}

\bigskip

\appendix{\textbf{Appendix}}

\begin{proof}[Proof of Lemma \ref{lem:label-concept}]
We proceed by induction on the structure of $C$. Let $x\in \mathcal{V}_i$. We show that if $C\in \mathcal{L}(x)$ then
$x\in C^\Tb{I}$.  
      
\noindent\textbf{Case $C = \{a\}$}.  By construction, $\{a\}\in \mathcal{L}(x^a)$ and $a^\mathcal{I}=x^a$, $\{a\}^\mathcal{I}=\{x^a\}$. Hence, $a^\mathcal{I}\in \{a\}^\mathcal{I}$.
    
\noindent\textbf{Case $C = A$}. If $A\in \mathcal{L}(x)$ then $x\in A^\Tb{I}$ by 
    definition of $\Tb{I}$. 
    
\noindent\textbf{Case $C = \neg A$}. Assume that $\neg A\in \mathcal{L}(x)$. 
    By contradiction, assume that $x\in A^\Tb{I}$. By the
    definition of $\Tb{I}$ we have $A\in \mathcal{L}(x)$, which contradicts    clash-freeness of $T$.
    Thus $x\notin A^\Tb{I}$, \emph{i.e.} $x\in (\neg A)^\Tb{I}$.
    
\noindent\textbf{Case $C = D_1\sqcap D_2$}. Assume $D_1\sqcap D_2\in 
    \mathcal{L}(x)$, then the $\mathbf{T_{\sqcap}}$   implies that $D_i\in \mathcal{L}(x)$
    for $i = 1,2$. Therefore by induction, we have $x\in D_i^\Tb{I}$ for 
    $i = 1,2$, hence $x\in D_1^\Tb{I}\cap D_2^\Tb{I}$ and by set semantics
    of $\sqcap$, we have $x\in (D_1\sqcap D_2)^\Tb{I}$.
    
\noindent\textbf{Case $C = D_1\sqcup D_2$}. Assume that $D_1\sqcup D_2 \in \mathcal{L}(x)$. By the construction of $T$, $x\in \mathbb{S}(x')$ for some $x'$. Thus,   $\mathcal{L}(x)\cap\{D_1,
    D_2\}\neq \varnothing$ according to $\mathbf{T}_\sqcup$. That means that either $D_1\in \mathcal{L}(x)$ or $D_2\in \mathcal{L}(x)$. By induction, we have $x\in D_1^\Tb{I}$ or $x\in D_2^\Tb{I}$, and thus $x\in D_1^\Tb{I}\cup D_2^\Tb{I}$.  By the set-theoretical  semantics of $\sqcup$, we have  
    $x\in (D_1\sqcup D_2)^\Tb{I}$.
   
\noindent\textbf{Case $C = \exists R.D$}. Assume $\exists R.D\in \mathcal{L}(x)$.  By the construction of $\mathbf{T}$ from  $T$,   there are 2 possibilities; (i) $x$ has an   $R$-successor $y$ with
   $D\in \mathcal{L}(y)$ in  $\mathbf{T}$ such that  $y$ is non-blocked. By induction hypothesis,  we have $y\in D^\Tb{I}$, and  by the definition of $R^\mathcal{I}$ we have  $\tuple{x,y}\in R^\Tb{I}$. Hence,  $x\in(\exists R.D)^\Tb{I}$;  (ii) $y$ is blocked by an ancestor $z$  such that $\mathcal{L}(y)\subseteq \mathcal{L}(z)$. This implies that $D\in \mathcal{L}(z)$. By induction hypothesis, we have  $z\in D^\Tb{I}$, and by the definition of $R^\mathcal{I}$ we have 
     $\tuple{x,z}\in R^\Tb{I}$. Hence $x\in(\exists R.D)^\Tb{I}$.
  
\noindent   \textbf{Case $C= \forall R.D$}. Assume that $\forall R.D\in \mathcal{L}(x)$ and there is some $y$ such that $(x,y)\in R^\mathcal{I}$. We have to show that  $y\in D^\mathcal{I}$. According to the definition of $R^\mathcal{I}$, $(x,y)$ is a $R$-path.  We proceed by induction on the length 
   of  $(x,y)$, denoted $\ell(x,y)$,  that is inductively defined as follows: (i) if $(x,y)$ is an $R$-edge, $y$ is non-blocked; or
   $(x,y')$ is an $R$-edge, $y$ blocks $y'$, then $\ell(x,y) = 1$; (ii) if $(x,z)$ is an $R$-path  with  $\ell(x,z) = n -1, n>1$, and $(z,y)$ is an $R$-edge such that either
    $y$ is non-blocked, or $(z,y')$ is an $R$-edge, $y$ blocks $y'$, then  $\ell(x,y) = n$.
    From the definition of $\ell(x,y)$ and   $R$-path, it  holds that if $\ell(x,y)>1$  then   $\{S\circ S\overlay{\sqsubseteq}{\ast} S, S\overlay{\sqsubseteq}{\ast} R\}\subseteq \mathcal{R}^+\tuple{\mathcal{O}}$ where it is possible that $S=R$. 
 
   \begin{enumerate}[leftmargin=0.1cm, itemindent=0.5cm] 
   \item $\ell(x,y) = 1$. Assume that $\{S\circ S\overlay{\sqsubseteq}{\ast} S, S\overlay{\sqsubseteq}{\ast} R\}\nsubseteq \mathcal{R}^+\tuple{\mathcal{O}}$. By definition of paths, there are two possibilities : (i)  $(x,y)$ is an $R$-edge and  $y$ is non-blocked.  By $\mathbf{T}_\forall$,   $D\in \mathcal{L}(y)$. By induction $y\in D^\mathcal{I}$.
   (ii)   $(x,y')$ is an $R$-edge and   $y$ blocks $y'$ with $\mathcal{L}(y')\subseteq \mathcal{L}(y)$.   By $\mathbf{T}_\forall$,   $D\in \mathcal{L}(y')$, and thus $D\in \mathcal{L}(y)$. By induction $y\in D^\mathcal{I}$.
   
   Assume that $\{S\circ S\overlay{\sqsubseteq}{\ast} S, S\overlay{\sqsubseteq}{\ast} R\}\subseteq \mathcal{R}^+\tuple{\mathcal{O}}$. By definition of paths, there are two possibilities:  (i)  $(x,y)$ is an $S$-edge and  $y$ is non-blocked. Hence, $(x,y)$ is an $R$-edge. By $\mathbf{T}_\forall$,   $D\in \mathcal{L}(y)$. By induction $y\in D^\mathcal{I}$.  By $\forall^+$-rule, $\forall S.D\in \mathcal{L}(y)$; 
   (ii)   $(x,y')$ is an $S$-edge, and thus an $R$-edge, and   $y$ blocks $y'$ with $\mathcal{L}(y')\subseteq \mathcal{L}(y)$.     By $\mathbf{T}_\forall$,   $D\in \mathcal{L}(y')$, and thus $D\in \mathcal{L}(y)$. By induction $y\in D^\mathcal{I}$. By $\mathbf{T}^+_\forall$, $\forall S.D\in \mathcal{L}(y')$, and thus $\forall S.D\in \mathcal{L}(y)$. 
       
   \item Assume that  $\ell(x,y)=n>1$, i.e there is some $z$ such that $\ell(x,z)=n-1$, $\ell(z,y)=1$.  According to the definition of paths, it holds that $\{S\circ S\overlay{\sqsubseteq}{\ast} S, S\overlay{\sqsubseteq}{\ast} R\}\subseteq \mathcal{R}^+\tuple{\mathcal{O}}$ such that  $(x,z)$ is an  $S$-path, and (i) either   $(z,y)$ is an $S$-edge, $y$ is non-blocked, (ii) or $(z,y')$ is an $S$-edge, $y$ blocks $y'$. By induction on the length of the path, we have $\forall S.D\in \mathcal{L}(z)$. By the $\forall$-rule and $\mathcal{L}(y')\subseteq \mathcal{L}(y)$,   $D\in \mathcal{L}(y)$. By induction $y\in D^\mathcal{I}$.
   By the $\forall^+$-rule and $\mathcal{L}(y')\subseteq \mathcal{L}(y')$, we have  $\forall S.D\in \mathcal{L}(y)$. 
   \end{enumerate}       
\end{proof}

\begin{proof}[Proof of Lemma~\ref{lem:tab-mod}]
 Assume that there is a clash-free tableau $\mathbf{T}=\langle \mathcal{V}_1,\cdots,\mathcal{V}_n, \mathcal{E},   \mathcal{L}\rangle$  for $\mathcal{O}$, and $\mathcal{I}$ a tableau interpretation from $\mathbf{T}$. We show that $\mathcal{I}$ is a model of   $\mathcal{O}$. For this, we have to show that $\mathcal{I}$ satisfies all axioms in $\mathcal{O}$.
    \begin{itemize}[leftmargin=0.1cm, itemindent=0.5cm] 
    
    \item $C^{\mathcal{I}}\subseteq D^\mathcal{I}$ for all $C\sqsubseteq D\in \mathcal{O}$. We have 
    already proven in Lemma~\ref{lem:label-concept} that if $C\in \mathcal{L}(x)$ for $x\in \Delta^{\Tb{I}}$, then
    $x\in C^{\Tb{I}}$. Therefore, we need to prove that if $C\sqsubseteq D\in\tO$ and $x\in C^{\Tb{I}}$
    then $x\in D^{\Tb{I}}$. 
    If $C = \{a\}$ for  an individual $a$, then $\{a\}\sqsubseteq D\in\tO$ implies $\{a\},D\in \mathcal{L}(x^a)$, 
    consequently $a^\Tb{I}\in D^\Tb{I}$ by Lemma~\ref{lem:label-concept}.
    Assume that  $C\sqsubseteq D\in\tO$ and $x\in C^{\Tb{I}}$ and $C\neq\{a\}$ for all  individual $a$, 
    then by  $\mathbf{T}_\sqsubseteq$, we have $\neg C\sqcup D\in \mathcal{L}(x)$, which leads to 
    $\{\neg C, D\}\cap \mathcal{L}(x)\neq\varnothing$ by  $\mathbf{T}_\sqcup$. However, if 
    $\neg C\in \mathcal{L}(x)$ then by Lemma~\ref{lem:label-concept} we must have 
    $x\in(\neg C)^{\Tb{I}}$, which contradicts   $x\in C^{\Tb{I}}$. Therefore, we have $D\in \mathcal{L}(x)$ and then, 
    by Lemma~\ref{lem:label-concept} again, we have $x\in D^{\Tb{I}}$.
    
    \item $R^{\mathcal{I}}\subseteq S^{\mathcal{I}}$ for each $R \sqsubseteq S \in \mathcal{O}$. Let $(x,y)\in R^{\mathcal{I}}$.   According to the  definition of $R^\mathcal{I}$, $(x,y)$ is an $R$-path.    By the definition of $S^\mathcal{I}$, $(x,y)\in S^{\mathcal{I}}$.

    \item $S^{\mathcal{I}}\circ S^{\mathcal{I}}\subseteq S^\mathcal{I}$ for  $S\circ S \sqsubseteq S \in \mathcal{O}$.  Assume that $(x,y)\in S^\mathcal{I}$, $(y,z)\in S^\mathcal{I}$. By the definition of $S^\mathcal{I}$, $(x,y)$ and $(y,z)$ are $S$-paths. By the definition of paths, they are composed of a sequence of $S$-edges. This implies that   $(x,z)$ is an $S$-path, thus $(x,z)\in S^\Tb{I}$. 
    \end{itemize}
 
\end{proof}

\begin{proof}[Proof of Lemma~\ref{lem:mod-tab}]
Assume that $\mathcal{O}$ is consistent. This implies that $\mathcal{O}$ has a model $\mathcal{I}=\langle \Delta^{\mathcal{I}}, \cdot^{\mathcal{I}}\rangle$.  Let $T=\langle V_0,\cdots,V_n, E,L\rangle$ be a universal tableau for  $\mathcal{O}$  with $\mathbb{S}(x)$ and $\mathbb{T}(x)$ for $x\in V_i$ and  $i\in \{0,n\}$ according to Definition~\ref{def:tree}.  We show that there is a clash-free tableau $\mathbf{T}=\langle \mathcal{V}_0,\cdots,\mathcal{V}_n,  \mathcal{E},\mathcal{L} \rangle$    that can be built from  $\mathcal{I}$ and $T$. For this, we use $\mathcal{I}$ and $\mathbb{T}(x)=\tuple{V',E',L'}$ to choose the  right node from $\mathbb{S}(x)$ for $x\in V_i$ with $i\in \{0,n\}$ in order to add it to $\mathcal{V}_i$.

We need to define inductively a sequence of steps that allows to generate a clash-free $\mathbf{T}$  satisfying all tableau properties given in Definition~\ref{def:uni-tableau} and \ref{def:sh-tableau}. For this,  we use a function $\pi$ from $\bigcup_{i\in \{1,n\}}\mathcal{V}_i$ to $\Delta^\mathcal{I}$  such that the following properties are maintained for $x\in \mathcal{V}_i$ and  $i\in \{0,n\}$ when updating $\mathbf{T}$.
\begin{align}
    C\in \mathcal{L}(u) &\text{ implies } \pi(u)\in C^\mathcal{I} \label{tableau-prop1}\\
    \text{If } u' \text{ is an } R\text{-successor of } u &\text{ then } (\pi(u), \pi(u'))\in R^\mathcal{I} \label{tableau-prop2}
   \end{align}
The function $\pi$ will be initialized, extended and updated. 

\begin{itemize}[leftmargin=0.1cm, itemindent=0.5cm] 
    \item If $a$ is an individual, then $a^\mathcal{I}\in 
        \Delta^\mathcal{I}$.  By construction, there is some $x^a\in V_0$ with $\mathbb{S}(x^a)$ and $\mathbb{T}(x^a)=\tuple{V',E',L'}$ rooted at some $u^a$. 
        We add $u^a$ to $\mathcal{V}_0$ and set $\mathcal{L}(u^a)=L'(u^a)$, and define $\pi(u^a)=a^\mathcal{I}$.
        If $\{a\}\sqsubseteq C\in \mathcal{O}$, then $a^\mathcal{I}\in 
        C^\mathcal{I}$. By  Item~\ref{claim:1:0}, Definition~\ref{def:tree},     we have $C\in L'(u^a)$ if   $\{a\}\sqsubseteq C$, and thus  $C\in \mathcal{L}(u^a)$.
        If $\{(a,b)\}\sqsubseteq R\in \mathcal{O}$ then $(a^\mathcal{I}, b^\mathcal{I})\in R^\mathcal{I}$, and thus   $\pi(u^a)=a^\mathcal{I}$ and $\pi(u^b)=b^\mathcal{I}$.  We define $u^b$ to be an $S$-root successor of $u^a$ for $R\overlay{\sqsubseteq}{\ast} S\in \mathcal{R}^+\tuple{\mathcal{O}}$. Therefore, Properties~(\ref{tableau-prop1})  and  (\ref{tableau-prop2}) are preserved. 
        
        \item  Assume   $C\sqsubseteq D\in\tO$. Let   $u\in \mathcal{V}_i$ where $u$ is a node of $\mathbb{T}(x)=\tuple{V',E',L'}$ rooted at $v_0$ with some $x\in V_i$ by the construction of $\mathcal{V}_i$. 
        We have $L'(v_0)\subseteq \mathcal{L}(u)$ by the construction of $\mathbb{T}(x)$. It holds that   $\neg C\sqcup D\in  L'(v_0)$  by Item~\ref{claim:1:0}, Definition~\ref{def:tree}.
        Hence,  $\neg C\sqcup D\in  \mathcal{L}(u)$.
        Since $\Tb{I}$ is a model, 
        we have $\Tb{I}\models C\sqsubseteq D$,  and thus  $\Tb{I}\models \top\sqsubseteq
        \neg C\sqcup D$. Therefore $\pi(u)\in(\neg C\sqcup D)^\Tb{I}$ and 
        Property~(\ref{tableau-prop1}) is preserved. 

        \item Assume that $C_1\sqcap C_2\in \mathcal{L}(u)$ for $u\in \mathcal{V}_i$. By construction,  $u$ is a node  of $\mathbb{T}(x)=\tuple{V',E',L'}$ with $x\in V_i$ and          $L'(u)=\mathcal{L}(u)$. It holds that    $C, D\in  L'(u)$   by Item~\ref{claim:1:1}, Definition~\ref{def:tree}. Due to induction hypothesis of Property~(\ref{tableau-prop1}) we  have  $\pi(u)\in(C_1\sqcap C_2)^\Tb{I}$, and thus
        $\pi(u)\in C_i^\Tb{I}$ for all  $i\in\{1,2\}$. Hence, Property~(\ref{tableau-prop1}) is preserved.

        \item  Assume that   $\exists R.C\in \mathcal{L}(u)$ with $u\in \mathcal{V}_i$ such that $u$ has no $R$-successor $u'\in \mathcal{V}_{i+1}$  with $C\in \mathcal{V}(u')$, and   $u$ is not blocked.  By construction,  $u$ is a node  of $\mathbb{T}(x)=\tuple{V',E',L'}$ with $x\in V_i$ and          $L'(u)=\mathcal{L}(u)$.
        By \ref{def:uni-tableau:exists}, $x$ has an $R$-successor $y\in V_{i+1}$  with   $C \in L(y)$. Due to induction hypothesis of  Property~(\ref{tableau-prop1}), we have  $\pi(u)\in \exists R.C^\mathcal{I}$. Since $\mathcal{I}$ is a model,   there is some $d\in \Delta^\mathcal{I}$ such that $d\in C^\mathcal{I}$ and $\langle \pi(u), d\rangle \in R^\mathcal{I}$. By definition, there is a tree  $\mathbb{T}(y)=\tuple{V'_y,E'_y,L'_y}$ rooted at $u_0$. 
        We add  $u_0$ to $\mathcal{V}_{i+1}$ as an $R$-successor of $u$, and  set $\mathcal{L}(u_0)=L'_y(u_0)$. Since $C\in L'_y(u_0)$ by the construction of $\mathbb{T}(y)$, we have $C\in \mathcal{L}(u_0)$.  
        We extend $\pi$ by defining  $\pi(v_0)=d\in C^\mathcal{I}$.  This implies that Property~(\ref{tableau-prop1}) is preserved. Assume that $R\overlay{\sqsubseteq}{\ast} S\in \mathcal{R}^+\tuple{\mathcal{O}}$.   Since $\mathcal{I}$ is a model, it holds that $(\pi(u),\pi(v_0))\in S^\mathcal{I}$. Hence, Property~(\ref{tableau-prop2}) is also preserved.

        \item Assume $\forall R.D\in \mathcal{L}(u)$ with $u\in \mathcal{V}_i$ and $v\in \mathcal{V}_{i+1}$ is an $R$-successor or root successor of $u$. By construction,  $u$ is a node  of $\mathbb{T}(x)=\tuple{V',E',L'}$ with $x\in V_i$,           $L'(u)=\mathcal{L}(u)$; and $v$ is a node  of $\mathbb{T}(y)=\tuple{V'_y,E'_y,L'_y}$ with            $L'_y(v)=\mathcal{L}(v)$.
        By \ref{def:uni-tableau:forall}, it holds that $D\in L'_y(v)=\mathcal{L}(v)$.  Due to induction hypothesis of   Properties~(\ref{tableau-prop1}) and (\ref{tableau-prop2}), we have  $\pi(u)\in \forall R.D^\mathcal{I}$ and  $\tuple{\pi(u), \pi(v)}\in R^\mathcal{I}$. This implies $\pi(v)\in D^\mathcal{I}$. Hence, Property~(\ref{tableau-prop1}) is preserved.  
         
        \item    Assume $\forall R.D\in \mathcal{L}(u)$ with  $S\circ S\overlay{\sqsubseteq}{\ast} S, S \overlay{\sqsubseteq}{\ast} R \in \mathcal{R}^+\tuple{\mathcal{O}}$, 
        and $v\in \mathcal{V}_{i+1}$ is an $S$-successor of $u$.  By construction,  $u$ is a node  of $\mathbb{T}(x)=\tuple{V',E',L'}$ with $x\in V_i$,           $L'(u)=\mathcal{L}(u)$; and $v$ is a node  of $\mathbb{T}(y)=\tuple{V'_y,E'_y,L'_y}$ with            $L'_y(v)=\mathcal{L}(v)$.
        By \ref{def:uni-tableau:forall-trans}, it holds that $\forall S.D\in L'_y(v)=\mathcal{L}(v)$. 
        Due to induction hypothesis of Property~(\ref{tableau-prop1}), we have $\pi(u)\in (\forall R.D)^\Tb{I}$. We have to  show that  $\pi(v)\in(\forall S.D)^\Tb{I}$.
        Let us assume that there is $z\in\Delta^\Tb{I}$ such that $(\pi(v),z)\in S^\Tb{I}$. Since 
        $v$ is an $S$-successor of $u$, this means that $(\pi(u),\pi(v))\in S^\Tb{I}$ by  Property~(\ref{tableau-prop1}).
        Moreover, $\Tb{I}$
        is model and  $S\circ S\subtransrole S\in\mathcal{R}^+\tuple{\mathcal{O}}$, it holds that  $(\pi(u), z)
        \in S^\Tb{I}$, and  thus   $z\in D^\Tb{I}$. Therefore,  we obtain $\pi(v)\in(\forall S.D)^\Tb{I}$, and thus Property~(\ref{tableau-prop1}) holds.

        \item Assume that $C_1\sqcup C_2\in \mathcal{L}(u)$ with $u\in \mathcal{V}_i$.  By construction,  $u$ is a node  of $\mathbb{T}(x)=\tuple{V',E',L'}$ with $x\in V_i$,           $L'(u)=\mathcal{L}(u)$; 
        Due to induction hypothesis of Property~(\ref{tableau-prop1}), we have $\pi(u)\in(C_1\sqcup C_2)^\Tb{I}$. This implies that $\pi(u)\in C_i^\Tb{I}$ for some $i\in \{1,2\}$.  
        Assume that $u$ has 2 successors $u_1,u_2$ in $\mathbb{T}(x)$ and $C_1\in L'(u_1)$ and $C_2\in L'(u_2)$. In this case, we update $u:=u_i$. Hence, $\pi(u)\in C_i^\mathcal{I}$, and thus     Property~(\ref{tableau-prop1}) is preserved.   
    \end{itemize}
Due to the blocking condition, $n$ is finite.    Moreover, we never remove anything from $\mathbf{T}$ when constructing $\mathbf{T}$. Hence, our construction terminates  such that  Properties~\ref{tableau-prop1} and \ref{tableau-prop2} are preserved. We now show that $\mathbf{T}$ has no clash.     By contradiction, assume that there is a clash in $\mathbf{T}$, this 
    means that there is some $x\in \mathcal{V}_i$ such that $A, \neg A\in \mathcal{L}(x)$.  According to Property~\ref{tableau-prop1}, we have $\pi(x)\in A^\mathcal{I}\cap \neg A^\mathcal{I}$, which is a contradiction since $\mathcal{I}$ is a model.
\end{proof} 
\end{document}